\newcommand{\ShortVersion}[1]{}
\newcommand{\LongVersion}[1]{#1}
\newcommand{\E}[1]{\mathop{\mathbb{E}\left[ #1 \right]}}
\newcommand{\Prob}[1]{\mathop{\mathrm{Pr}\left[ #1 \right]}}
\newcommand{\ceil}[1]{\lceil #1\rceil}
\newcommand{\etal}{\textit{et~al}.\xspace}
\newtheorem{fact}{Fact}
\newcommand{\proc}[1]{\fontfamily{cmr}\textup{\textsc{#1}}\xspace}
\newcommand{\RebalanceZig}{\proc{Rebalance\-Zig}}
\newcommand{\RebalanceZigZag}{\proc{Rebalance\-ZigZag}}
\newcommand{\RebalanceZigZig}{\proc{Rebalance\-ZigZig}}
\newcommand{\RotateUp}[1]{\proc{RotateUp}(#1)}
\newcommand{\nil}{\textup{\textsc{nil}}\xspace}
\let\epsilon\varepsilon
\newcommand{\plot}[2][]{\resizebox{0.49\textwidth}{!}{\includegraphics{#2}}\llap{\raisebox{0.34\textwidth}{\makebox[0.48\textwidth][l]{\hspace{1ex}\scriptsize\textbf{#1}}}}}
\tikzset{
  missing/.style={draw=none},
  sibling distance=6pt, 
  level distance=5ex,
  every tree node/.style={draw, circle, minimum width=1.5em, inner sep=0.1ex},
  edge from parent/.style={draw,edge from parent path={(\tikzparentnode) -- (\tikzchildnode)}},
}
\newcommand{\tree}[1]{\begin{tikzpicture}[baseline=(current bounding box.center)]\Tree #1\end{tikzpicture}}
\newcommand{\ARROW}[2]{$\xrightarrow[\clap{\text{\scriptsize #2}}]{\clap{\text{\scriptsize #1}}}$}
\title{Bottom-up Rebalancing Binary Search Trees by Flipping a Coin}
\author
  {Gerth Stølting Brodal}
  {Department of Computer Science, Aarhus University, Aabogade 34, 8200 Aarhus N, Denmark}
  {gerth@cs.au.dk}
  {0000-0001-9054-915X}
  {Supported by Independent Research Fund Denmark, grant~9131-00113B.}
\authorrunning{G.\,S. Brodal}
\keywords{Binary search tree, insertions, random rebalancing}
\begin{document}

\maketitle


\begin{abstract}
  Rebalancing schemes for dynamic binary search trees are numerous in the literature, where the goal is to maintain trees of low height, either in the worst-case or expected sense. In this paper we study randomized rebalancing schemes for sequences of $n$ insertions into an initially empty binary search tree, under the assumption that a tree only stores the elements and the tree structure without any additional balance information. Seidel~(2009) presented a top-down randomized insertion algorithm, where insertions take expected $O\big(\lg^2 n\big)$ time, and the resulting trees have the same distribution as inserting a uniform random permutation into a binary search tree without rebalancing. Seidel states as an open problem if a similar result can be achieved with bottom-up insertions. In this paper we fail to answer this question.
  
  We consider two simple canonical randomized bottom-up insertion algorithms on binary search trees, assuming that an insertion is given the position where to insert the next element. The subsequent rebalancing is performed bottom-up in expected $O(1)$ time, uses expected $O(1)$ random bits, performs at most two rotations, and the rotations appear with geometrically decreasing probability in the distance from the leaf. For some insertion sequences the expected depth of each node is proved to be $O(\lg n)$. On the negative side, we prove for both algorithms that there exist simple insertion sequences where the expected depth is $\Omega(n)$, i.e., the studied rebalancing schemes are \emph{not} competitive with (most) other rebalancing schemes in the literature.
\end{abstract}

\section{Introduction}
\label{sec:introduction}

Binary search trees is one of the most fundamental data structures in computer science, dating back to the early 1960s, see, e.g., Windley (1960)~\cite{Windley60} for an early description of binary search trees and Hibbard (1962)~\cite{Hibbard62} for an analysis of random insertions and deletions. Knuth~\cite[page 453]{Knuth98} gives a detailed  overview of the early history of binary search trees, and Andersson \etal~\cite{AnderssonFagerbergLarsen04} an overview of later developments on balanced binary search trees.

When inserting new elements into the leaves of an unbalanced binary search tree the height of the tree might deteriorate, in the sense that it becomes super-logarithmic in the number of elements stored (see Figure~\ref{fig:unbalanced-insertions}). In the literature numerous rebalancing schemes have been presented guaranteeing logarithmic height: Some are deterministic with worst-case update bounds, like AVL-trees~\cite{AVL62}, red-black trees~\cite{GuibasSedgewick78}; some deterministic with amortized bounds, like splay-trees~\cite{SleatorTarjan85} and scapegoat trees~\cite{Andersson89,GalperinRivest93}; and others are randomized, like treaps~\cite{SeidelAragon96} and randomized binary search trees~\cite{MartinezRoura98}, just to mention a few.

In this paper we study simple randomized rebalancing schemes for sequences of insertions into an initially empty binary search tree. The goal of this paper is to study randomized rebalancing schemes under a set of constraints, and to study how good rebalancing schemes can be achieved within these constraints. In general the proposed rebalancing schemes in Section~\ref{sec:zig} and Section~\ref{sec:zigzag} are \emph{not} competitive with existing rebalancing schemes in the literature. The constraints we consider are the following:

\begin{enumerate}
  \item\label{req:no-balance} The search tree should not store any balancing information, only the tree and the elements should be stored.
  \item\label{req:rotations} Insertions should perform limited restructuring, say, worst-case $O(1)$ rotations.
  \item\label{req:local} Most rotations should happen near the inserted elements.
  \item\label{req:no-information} Rebalancing should  be based on local information (tree structure) at the insertion point only (e.g., without knowledge of $n$ nor the current height of the tree).
  \item\label{req:time} Rebalancing should be performed in expected $O(1)$ time.
  \item\label{req:random-bits} Rebalancing should use expected few random bits per insertion, say, expected $O(1)$ bits.
  \item\label{req:depth} Each node should have low expected depth, ideally $O(\lg n)$. 
\end{enumerate}

The constraints are motivated by the properties of the randomized treaps~\cite{SeidelAragon96} (but treaps need to store random priorities as balance information); that the random distribution of tree structures achieved by treaps can be achieved without storing balancing information~\cite{Seidel09} (but with slower insertions); and treaps can be the basis for efficient concurrent search trees~\cite{AlapatiSuranamMutyam17}.

\begin{figure}
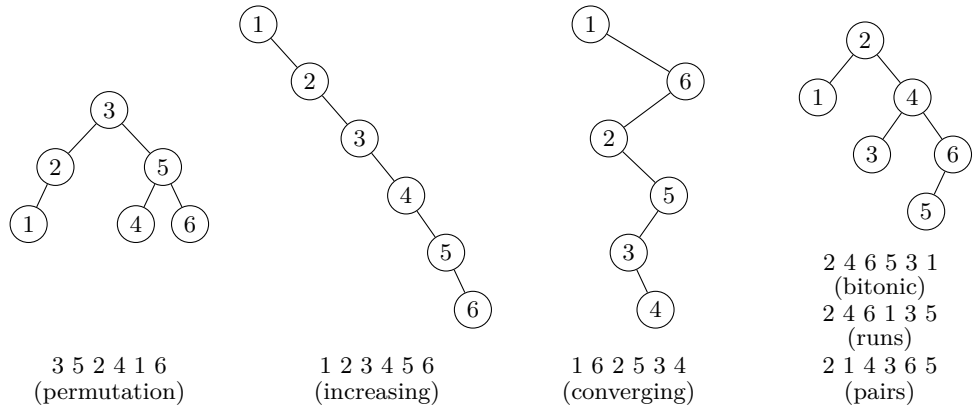

  \centering
  \tabcolsep10pt
  \begin{tabular}{cccc}
    \tree{[.3 [.2 [.1 ] \edge[missing]; \node[missing] {}; ] [.5 [.4 ] [.6 ] ] ]} &
    \hspace{-3em}\tree{[.1 \edge[missing]; \node[missing] {}; [.2 \edge[missing]; \node[missing] {}; [.3 \edge[missing]; \node[missing] {}; [.4 \edge[missing]; \node[missing] {}; [.5 \edge[missing]; \node[missing] {}; [.6 ] ] ] ] ] ]} &
    \makebox[2.2cm]{\tree{[.1 \edge[missing]; \node[missing] {}; [.6 [.2 \edge[missing]; \node[missing] {}; [.5 [.3 \edge[missing]; \node[missing] {}; 4 ] \edge[missing]; \node[missing] {}; ] ] \edge[missing]; \node[missing] {}; ] ]}} &
    \makebox[3cm]{\hspace{1.5em}\raisebox{4ex}{\tree{[.2 1 [.4 3 [.6 5 \edge[missing]; \node[missing] {}; ] ] ]}}} \\[15ex]
    3 5 2 4 1 6 & 1 2 3 4 5 6 & 1 6 2 5 3 4 & 
    \smash{\begin{minipage}[b]{6em}
    \centering
    2 4 6 5 3 1 \\[-0.7ex]
    (bitonic) \\[-0.25ex]
    2 4 6 1 3 5 \\[-0.7ex]
    (runs) \\[-0.25ex]
    2 1 4 3 6 5
    \end{minipage}}
    \\[-0.7ex]
    (permutation) & (increasing) & (converging) & (pairs)
  \end{tabular}
  \caption{Unbalanced binary search trees resulting from inserting permutations of $\{1,\ldots,6\}$. The insertion order is shown below the trees. The types of permutations are defined in Table~\ref{tab:sequences}.}
  \label{fig:unbalanced-insertions}
\end{figure}

\subsection{Deterministic Previous Work}

Red-black trees~\cite{GuibasSedgewick78} are deterministic dynamic balanced binary search, with good amortized performance. They violate constraint~(\ref{req:no-balance}), since each node is required to store a single bit of balance information, indicating if the node is red-black. But otherwise, red-black trees are guaranteed to have height $O(\lg n)$, insertions at a leaf can be performed in amortized~$O(1)$ time and perform at most two rotations, i.e., red-black trees essentially satisfy constraints~(\ref{req:rotations})--(\ref{req:depth}), if expected bounds are substituted by amortized bounds.

Brown~\cite{Brown78,Brown79} showed how to encode a single bit of information in the internal nodes of a binary tree by considering ``supernodes'' consisting of pairs of consecutive elements arranged as parent-child pairs together with a pointer to an empty leaf between the two elements. Depending of the relative placement of the two elements the encoded bit can be decoded from the placement of the pointer to the empty leaf. Brown showed how to encode 2-3 trees~\cite[Chapter~4]{AhoHopcroftUllman74} using this technique, achieving balanced binary search trees storing no balance information and supporting insertions in worst-case $O(\lg n)$ time.

Splay trees~\cite{SleatorTarjan85} are the canonical deterministic amortized efficient dynamic binary search trees satisfying constraint~(\ref{req:no-balance}), i.e., they do not store any additional information than the binary tree and the elements. Splay trees support insertions in amortized~$O(\lg n)$ time, i.e., insertions are amortized efficient. The drawback of splay trees is that they do a significant amount of restructuring (memory updates) per insertion, since they rotate a constant fraction of the nodes on the path from the inserted element to the root. The number of rotations depends on what variant of splaying is applied, see~\cite{BrinkmannDegraerLoof09,SleatorTarjan85}. Albers and Karpinski~\cite{AlbersKarpinski02} and F\"urer~\cite{Furer99} considered randomized variants of splaying to reduce the restructuring cost. 

Scapegoat trees are another deterministic dynamic binary search tree with good amortized performance, independently discovered by Anderson~\cite{Andersson89,Andersson99} and Galperin and Rivest~\cite{GalperinRivest93}. Scapegoat trees achieve amortized $O(\lg n)$ insertions by maintaining the invariant that the height of a tree containing $n$ elements is $O(\lg n)$. If an insertion causes the invariant to be violated, a local subtree is rebuild into a perfectly balanced binary tree (in the worst-case this is the root and the full tree is rebuild). A scapegoat tree only needs to store the number of elements~$n$ as a global integer value in addition to the binary tree and its elements. 

\subsection{Randomized Previous Work}

Randomization and binary search trees can be addressed in two directions in the context of insertions: Either insertions are random (e.g., the insertion sequence is a random permutation) and we analyze the expected performance for a binary search tree with respect to the insertion distribution; or insertions can be arbitrary but the rebalancing of the binary search tree exploits random bits and we analyze the performance with respect to the random bits.

We call a search tree containing $n$ elements inserted in random order without rebalancing a \emph{random binary search tree}. A classic result on random binary search trees is that each element in the resulting tree has expected depth at most $2\ln n + O(1)$~\cite{BoothCollin60,Hibbard62,Windley60}. The important property is that the root equals each of the $n$ elements with probability exactly~$1/n$, and this property again holds recursively for the left and right subtrees. A consequence is that all valid search trees with $n \geq 3$ elements do not have the same  probability. 
\LongVersion{See Figure~\ref{fig:unbalanced-distribution-3-nodes} for the minimal case~$n=3$. }%
See Panny~\cite{Panny10} for a history on deletions in random binary search trees.

\LongVersion{%
\begin{figure}
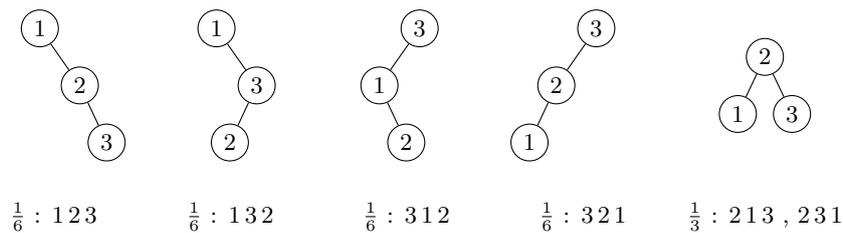

  \centering
  \begin{tabular}{cccccc}
    \tree{[.1 \edge[missing]; \node[missing] {}; [.2 \edge[missing]; \node[missing] {}; 3 ] ]} &
    \tree{[.1 \edge[missing]; \node[missing] {}; [.3 2 \edge[missing]; \node[missing] {}; ] ]} &
    \tree{[.3 [.1 \edge[missing]; \node[missing] {}; 2 ] \edge[missing]; \node[missing] {}; ]} &
    \tree{[.3 [.2 1 \edge[missing]; \node[missing] {}; ] \edge[missing]; \node[missing] {}; ]} &
    \tree{[.2 1 3 ]} \\[9ex]
    $\frac{1}{6}$ : 1\,2\,3 & 
    $\frac{1}{6}$ : 1\,3\,2 & 
    $\frac{1}{6}$ : 3\,1\,2 & 
    $\frac{1}{6}$ : 3\,2\,1 & 
    $\frac{1}{3}$ : 2\,1\,3 , 2\,3\,1
  \end{tabular}  
  \caption{The unbalanced binary search trees resulting from inserting (without rebalancing) random permutations of the sequence 1, 2, 3. Below each tree is the probability of the tree and the permutations resulting in the tree.}
  \label{fig:unbalanced-distribution-3-nodes}
\end{figure}
}

The structure of random binary search trees has been used as guideline to construct different dynamic binary search trees where at each point of time the probability of a given tree equals that of a random binary search tree~\cite{MartinezRoura98,Seidel09,SeidelAragon96}. 

Aragon and Seidel~\cite{SeidelAragon96} introduced the \emph{treap}, that with each element stores an independently uniformly assigned random priority in the range $[0,1]$, and organizes the search tree such that priorities satisfy heap order~\cite{Williams64}, i.e., the root stores the element with minimum priority. Since each element has probability $1/n$ to have the smallest priority, all elements have probability~$1/n$ to be at the root, the property required to be random search trees. Insertions into treaps can be done by bottom-up rotations in expected $O(1)$ time and $O(1)$ rotations using $O(1)$ random bits. After $n$ insertions into a treap the expected shape of the treap equals a random binary search tree. Blelloch and Reid-Miller~\cite{BlellochR98} considered parallel algorithms for the set operations union, intersection and difference on treaps. Alapati \etal~\cite{AlapatiSuranamMutyam17} considered concurrent insetions and deletions into treaps.

Mart{\'{\i}}nez and Roura~\cite{MartinezRoura98} presented a different approach denoted \emph{randomized binary search trees} to achieve the structure of a random binary search tree after $n$ insertions. Their approach stores at each node the size of the subtree rooted at the node, and insertions are performed top-down in expected $O(\lg n)$ time, where the inserted element is inserted in a node with probability $\frac{1}{k+1}$, where $k$ is the size of the current subtree rooted at the node (see~\cite{MartinezRoura98} for details). Each insertion requires expected $O(\lg n)$ random integers in the range~$1,\ldots,n+1$.

Seidel~\cite{Seidel09} gave a unified presentation of \cite{SeidelAragon96} and \cite{MartinezRoura98}, emphasizing the similarity of the two approaches, and describes a variation of \cite{MartinezRoura98} that avoids storing subtree sizes, but the insertion time increases to expected $O\big(\lg^2 n\big)$ and uses $O\big(\lg^3 n\big)$ random bits. Seidel states it as an open problem if there exists a bottom-up rebalancing algorithm that without storing any balancing information can obtain the structure of random binary search trees.

\begin{table}
  \caption{Rebalancing cost of selected binary search trees. Space refers to the space required for additional balance information. $O_A$, $O_E$ and $O$ denote amortized, expected and worst-case bounds, respectively. *Our results do not guarantee (expected) logarithmic depth.}
  \label{tab:results}
  \centering
  \begin{tabular}{lcccc}
    & Time & Rotations & Random bits & Space (bits) \\
    \hline
    Red-black tree~\cite{GuibasSedgewick78} & $O_A(1)$ & $O(1)$ & 0 & 1 \\
    Encoded 2-3 trees~\cite{Brown78,Brown79} & $O(\lg n)$ & $O(\lg n)$ & 0 & 0 \\
    Splay trees & $O_A(\lg n)$ & $O_A(\lg n)$ & 0 & 0 \\
    Treaps~\cite{AragonSeidel89} & $O_E(1)$ & $O_E(1)$ & $O_E(1)$ & $O_E(1)$ \\
    Randomized BST~\cite{MartinezRoura98} & $O_E(\lg n)$ & $O_E(1)$ & $O_E(\lg^2 n)$ & $O(\lg n)$ \\
    Seidel~\cite{Seidel09} & $O_E(\lg^2 n)$ & $O_E(1)$ & $O_E(\lg^3 n)$ & 0 \\
    \emph{Algorithms in this paper*} & $O_E(1)$ & $O(1)$ & $O_E(1)$ & 0 \\
    \hline    
  \end{tabular}
\end{table}

\subsection{Results}

\begin{table}
  \caption{Different sequences of length $n$ (assuming $n$ is even) considered in this paper.}
  \label{tab:sequences}
  \centering
  \begin{tabular}{ll}
    \hline
    \textbf{permutation} & random permutation of $1,\ldots,n$ \\
    \textbf{increasing} & $1,2,3,\ldots,n$ \\
    \textbf{decreasing} & $n,n-1,n-2,\ldots,1$ \\
    \textbf{converging} & $1,n,2,n-1,3,n-2,\ldots,\frac{n}{2},\frac{n}{2}+1$ \\
    \textbf{pairs} & $2,1,4,3,6,5,\ldots,n,n-1$ \\
    \textbf{bitonic} & $2,4,6,\ldots,n-2,n,n-1,n-3,\ldots,5,3,1$ \\
    \textbf{runs} & $2,4,6,\ldots,n-2,n,1,3,5,\ldots,n-3,n-1$ \\
    \hline
  \end{tabular}
\end{table}

We consider two very simple algorithms to rebalance a binary search tree after a new element has been inserted at a leaf. Our aim is to try to meet the requirements (\ref{req:no-balance})--(\ref{req:depth}), and in particular not the ambitious goal of having the same distribution as random binary search trees. Both our algorithms repeatedly flip a coin until it comes out head. Whenever the coin shows tail (with probability $p$) we move to the parent of the current node (starting at the new leaf, and if we reach the root, the rebalancing terminates without modifying the tree). When the coin shows head, the first algorithm (\RebalanceZig in Algorithm~\ref{alg:RebalanceZig}) rotates the current node up, and the rebalancing terminates. The second algorithm (\RebalanceZigZag in Algorithm~\ref{alg:RebalanceZigZag}) does one or two rotations, depending on if it is a zig-zag or zig-zig case (inspired by the rebalancing rules of splay trees).

Ignoring the depths of the nodes of the resulting trees, we immediately have the following fact, since the coin tosses are independent Bernoulli trials, with an expected $O(1)$ coin tosses necessary (assuming a coin with constant non-zero probability for head). It follows that both algorithms satisfy our constraints (\ref{req:no-balance})--(\ref{req:random-bits}).

\begin{fact}
  The rebalancing done by \RebalanceZig with $0\leq p <1$ takes expected $O(1)$ time, uses expected $O(1)$ random bits, and performs at most one rotation. \RebalanceZigZag performs at most two rotations, but otherwise with identical performance.
\end{fact}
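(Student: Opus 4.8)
The plan is to reduce everything to elementary facts about the geometric distribution, since the control flow of both procedures is driven entirely by a sequence of independent coin flips. First I would pin down the coin: assume a primitive that returns \textup{head} with a fixed probability $1-p>0$ and consumes $O(1)$ random bits per call. For $p=\tfrac12$, or any dyadic rational $p$, this is a single fair bit, resp.\ a constant number of fair bits; for a general constant $p\in[0,1)$ one can realize the biased coin by the standard bit-by-bit rejection procedure, which still uses $O(1)$ expected fair bits per call. Each iteration of the main loop of \RebalanceZig or \RebalanceZigZag performs exactly one such coin flip and then either halts (on \textup{head}, or because the upward walk, which starts at the freshly inserted leaf, has already reached the root) or advances the current pointer one edge towards the root; one iteration costs $O(1)$ time.

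Next I would bound the iteration count. Let $X$ denote the number of coin flips until the first \textup{head} in an infinite stream of these flips; then $\Prob{X=k}=p^{k-1}(1-p)$ and $\E{X}=1/(1-p)=O(1)$ because $p<1$ is a constant. In any actual execution the loop stops no later than at the first \textup{head}, since reaching the root is an extra terminating condition, so the number of iterations is at most $X$ and hence has expectation at most $1/(1-p)$. By linearity of expectation the expected running time is $O(1)$, and, summing the $O(1)$ expected bits spent in each of the expected $O(1)$ calls to the coin primitive (the per-call costs being i.i.d.\ and the number of calls a stopping time, so Wald applies), the expected number of random bits is also $O(1)$. The same argument applies verbatim to \RebalanceZigZag, giving the "identical performance" clause.

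Finally, the rotation count is a worst-case statement, not an expected one: rotations occur only in the single terminating step taken on \textup{head}, which invokes \proc{RotateUp} once in \RebalanceZig and, in \RebalanceZigZag, performs one rotation in the zig-zag case and two in the zig-zig case before returning; the non-terminating iterations perform no rotations. Hence every execution of \RebalanceZig performs at most one rotation and every execution of \RebalanceZigZag performs at most two. I do not expect any genuine obstacle here; the only points worth spelling out are the domination argument handling the root as an absorbing boundary, and the $O(1)$-bits-per-flip accounting when $p$ is not a dyadic rational.
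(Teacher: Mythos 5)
Your proposal is correct and matches the paper's (essentially one-line) justification: the loop is driven by independent Bernoulli trials, so the number of iterations is dominated by a geometric variable with mean $1/(1-p)=O(1)$, and the rotation bound is a deterministic property of the single terminating step; your extra care about simulating a biased coin with $O(1)$ expected fair bits and the Wald-type accounting is more detail than the paper supplies but is harmless. One small slip: per Algorithm~\ref{alg:RebalanceZigZag} it is the zig-zig/zag-zag case that performs one rotation (of $v.p$) and the zig-zag/zag-zig case that performs two (of $v$), the reverse of what you wrote, though this does not affect the ``at most two'' bound.
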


To study to what extend the proposed algorithms achieve logarithmic depth of the nodes, constraint~(\ref{req:depth}), we study the behavior of the algorithms on the insertion sequences listed in Table~\ref{tab:sequences}.

In Section~\ref{sec:zig} we study \RebalanceZig. Our first result is that \RebalanceZig is sufficient to achieve a balanced binary search tree when inserting elements in increasing order (and symmetrically decreasing order). The below theorem follows from Lemma~\ref{lem:zig-increasing}.

\begin{theorem}
  \label{thm:zig-increasing}
  Executing \RebalanceZig with $0<p<1$ on an increasing and decreasing sequence of $n$ insertions results in a binary search tree, where each node has expected depth~$O(\lg n)$.
\end{theorem}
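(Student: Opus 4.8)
The plan is to follow, for each element $i$, how $\operatorname{depth}(i)$ evolves over the $n$ insertions, splitting the life of $i$ into two phases: while $i$ lies on the right spine of the current tree, and after some rotation has pulled $i$ off the spine. (The decreasing sequence $n,n-1,\dots,1$ is the mirror image, and since \RebalanceZig treats a node the same whether it is a left or a right child, that case follows by symmetry.) I would first analyze the right spine itself: under increasing insertions the new element is always appended as the right child of the current rightmost node, temporarily lengthening the spine by one, and a case analysis of \RebalanceZig shows that the rotation it performs (if any) leaves the spine length unchanged, whereas if every coin toss is a tail --- so the root is reached and no rotation is done --- the spine grows by one. Thus, writing $L_k$ for the spine length after $k$ insertions, $L_{k+1}=L_k+B_k$ with $B_k\sim\mathrm{Bernoulli}(p^{L_k})$ and $L_k$ nondecreasing; the sojourn time at spine length $\ell$ is geometric with mean $p^{-\ell}$, so $\Prob{L_n\ge\ell}\le n\,p^{\ell-1}$, and summing the tail gives $\E{L_n}=O(\log_{1/p}n)$.

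Next I would fix $i$. While $i$ sits at spine position $m$ its depth is $m-1$, and a check of the rotation cases shows this never increases while $i$ stays on the spine; it increases by exactly one at the single instant $i$ is pulled off (it becomes the left child of the node then directly below it), so at that moment $\operatorname{depth}(i)\le L_k=O(\log n)$. From then on $i$ permanently lives in the left subtree of some spine node, its \emph{supervisor}, and the structural fact I would prove --- by going through how each rotation acts on the subtree containing $i$ --- is that $\operatorname{depth}(i)$ thereafter drops by one whenever a rotation occurs strictly above the supervisor (the affected subtree shifts up), is unchanged whenever the supervisor itself is lifted by a rotation, and rises by one only when the supervisor is pulled off (a \emph{handoff}, after which a new supervisor one spine level lower takes over). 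Consequently $\operatorname{depth}(i)$ never exceeds its value at pull-off plus the number of handoffs, so it remains to bound the expected number of handoffs.

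For that I would note that a supervisor sitting at spine position $m$ is, per subsequent insertion, pulled off with probability $p^{L-m}(1-p)$ and has its spine position decrease with probability $p^{L-m}(p-p^{m})$ --- a ratio at most $1/p$ when $m\ge 2$ --- and that a supervisor's spine position is nonincreasing in time, so each position $m\ge 2$ is visited at most once and, by this comparison, contributes at most $1/p$ handoffs in expectation; summed over positions this is $O(\E{L_n})=O(\log n)$. The root position $m=1$, where the position can no longer decrease, is handled separately: the expected number of handoffs there is at most $(1-p)\sum_{k}p^{L_k-1}$, which again is $O(\E{L_n})=O(\log n)$ via the geometric sojourn times. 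Adding the contributions of the two phases gives $\E{\operatorname{depth}(i)}=O(\log n)$ for every $i$, which gives the theorem.

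The step I expect to be the main obstacle is the post-pull-off bookkeeping: making the case analysis of how rotations move an off-spine node fully rigorous --- the off-spine subtrees, unlike treap subtrees, may grow large, and the supervisor must be tracked uniformly through lifts and handoffs --- and then turning the comparison between handoffs and position-decreases into a clean bound despite the dependence among the growing spine length $L_k$, the supervisor's position, and the coin tosses, especially near the root where that dichotomy degenerates.
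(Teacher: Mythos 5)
Your proposal is correct and follows essentially the same route as the paper's proof (Lemmas~\ref{lem:zig-right-height-increase}--\ref{lem:zig-increasing}): a high-probability $O\big(\log_{1/p} n\big)$ bound on the spine length via the $p^{L}$ growth probability, a phase decomposition of a node's life, and a bound on the expected number of times the branching node (your ``supervisor'') is pulled off the spine by comparing each such event to the events that decrease the supervisor's nonincreasing spine position. The only cosmetic difference is that the paper organizes the final count as tails versus heads of the coin flipped at the supervisor's right child (folding the reach-the-root case into the spine-growth count), whereas you use a per-insertion probability ratio with a separate treatment of the root position.
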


We then show that there are very simple insertion sequences where \RebalanceZig fails to achieve a balanced tree. We denote the sequence $1, n, 2, n-1,\ldots, n/2, n/2+1$ the \emph{converging} sequence.  The below theorem follows from Lemma~\ref{lem:zig-converging}.

\begin{theorem}
  \label{thm:zig-converging}
  Executing \RebalanceZig with $0\leq p \leq 1$ on a converging sequence of $n$ insertions results in a binary search tree with expected average node depth~$\Theta(n)$.
\end{theorem}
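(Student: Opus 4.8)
The matching upper bound is trivial, since every $n$-node tree has average node depth less than $n$; the content is the lower bound $\Omega(n)$. I would first reduce it to a statement about a single long path. If the final tree contains a root-to-leaf path of $\ell$ nodes, those nodes occupy depths $0,1,\dots,\ell-1$, so the total node depth $\Phi=\sum_v\mathrm{depth}(v)$ is at least $\binom{\ell}{2}$; taking $\ell$ to be the length of the \emph{longest} root-to-leaf path and using convexity of $x\mapsto\binom{x}{2}$, we get $\E{\Phi}\ge\binom{\E{\ell}}{2}$, so it suffices to prove $\E{\ell}=\Omega(n)$ (the expected average depth is $\E{\Phi}/n$). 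The two extreme values of $p$ are clean base cases worth disposing of first: for $p=1$ no rotation is ever performed, so the tree is exactly the ``caterpillar'' zig-zag path on all $n$ elements and $\ell=n$; for $p=0$ a short induction (each insertion is immediately followed by rotating the new leaf up one level, which splices it onto the current path) shows the tree is a path of length $k$ after $k$ insertions. So assume $0<p<1$.

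The engine of the argument is a structural invariant, proved by induction on the number of insertions, asserting that the current tree is a long \emph{spine} (a root-to-leaf path) together with only $O(1)$ further nodes, all of them within $O(1)$ levels of the lower end of the spine, plus a small amount of parity bookkeeping about that lower portion. The invariant must be strong enough to control two things across one insert-and-rebalance step. First, \emph{where the next converging element lands}: the element inserted at step $k$ has, among the already-present elements, its inorder predecessor and successor equal to the elements inserted at steps $k-1$ and $k-2$, and the invariant keeps those two near the bottom of the spine, so the new leaf attaches near the bottom. Second, \emph{what the rebalancing does}: \RebalanceZig walks up a $\mathrm{Geom}(p)$-distributed number of levels from the new leaf (so it stays within the bottom $O(1)$ levels in expectation, and within $O(\log n)$ levels throughout with high probability) and performs a single rotation, which the invariant therefore localizes to that bottom region.

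Given the invariant, the accounting is essentially a Bernoulli-trial computation. Whenever \RebalanceZig rotates the freshly inserted leaf itself --- which happens with probability exactly $1-p$ --- the invariant guarantees that the leaf's parent already lies on the spine, so the rotation splices the new leaf one position above it and the spine grows by exactly $1$. In every other case the spine changes length by $O(1)$, and the invariant forces these changes to be non-negative in expectation: a rotation that walks further up before firing can shorten the spine only by colliding with a ``defect'' left behind near the bottom by an earlier splice, and the invariant bounds both the probability and the size of such collisions. Summing over the $n$ steps gives $\E{\ell_k}-\E{\ell_{k-1}}\ge c(p)>0$ for a constant $c(p)$, hence $\E{\ell}=\Omega(n)$, and the reduction of the first paragraph completes the proof.

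The only real work is the structural induction of the second paragraph: formulating an invariant that is simultaneously self-propagating and actually true throughout the full case analysis of one insertion followed by one rotation. The delicate points are the $O(1)$ ``defect'' nodes left near the bottom by the unfavourable rebalancing steps (those that walk up before rotating, which break the neat alternation of directions along the spine) --- one must show that over the $n$ steps these neither accumulate nor conspire with a later rotation to erode the spine --- together with bounding how far a new element's search path can descend past its intended parent. With the tree shape pinned down this tightly, the probabilistic bookkeeping above is routine, and the matching $O(n)$ upper bound is immediate.
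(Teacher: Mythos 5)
Your reduction to a long root-to-leaf path via $\binom{\ell}{2}$ and convexity, and your treatment of the base cases $p\in\{0,1\}$, match the paper; the problem is that the engine you propose for $0<p<1$ rests on a structural invariant that is false, and you have deferred exactly that invariant as ``the only real work.'' The tree produced by \RebalanceZig on the converging sequence is \emph{not} a spine plus $O(1)$ extra nodes concentrated near its lower end. Whenever the coin walk stops at a node $v$ at which the path to the insertion point goes \emph{straight} (zig-zig or zag-zag), the rotation $\RotateUp{v}$ shortens that path by one and permanently evicts $v$'s old parent, together with that parent's off-path subtree, from the spine; evicted nodes never rejoin it. Such straight nodes are created by the rotations themselves (a rotation at a branching node reverses the direction of one edge of the path and turns an alternating stretch into a zag-zag), so for constant $0<p<1$ a constant fraction of insertions fire at straight nodes and the number of off-spine nodes grows linearly --- indeed, ``$O(1)$ off-spine nodes'' would assert that the tree is a path up to $O(1)$ nodes, a far stronger claim than the theorem. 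Nor do the evicted nodes stay within $O(1)$ levels of the bottom: they are frozen near the then-current bottom while the spine keeps growing below them. Your accounting also misattributes spine shrinkage to ``collisions with defects''; the actual mechanism is simply a rotation firing at a straight node, which has nothing to do with leftover debris.

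The paper's proof shows that no global control of the tree shape is needed. Track only the depth of the insertion point (the external leaf where the next element will land). Because the converging sequence is a finger sequence, each insertion raises this depth by exactly one before rebalancing, and \RebalanceZig performs at most one rotation, which lowers it by at most one; hence the net change per insertion is always nonnegative, with no invariant required. The positive drift comes from one local event: for the pair of insertions $i$ and $n+1-i$, creating nodes $u$ and then $v$ with $v$ a child of $u$ in the branching (zag-zig) configuration, with probability at least $p(1-p)$ the coin shows tail at $u$ in the first rebalancing and head at $v$ in the second; rotating up $v$ in the zag-zig configuration leaves the insertion point at the same depth, so the pair contributes a net $+1$. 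Summing over the $n/2$ pairs gives expected insertion-point depth at least $\frac{1}{2}np(1-p)$, and the $\binom{d}{2}$ reduction you already have finishes the argument. Replace your structural induction with this insertion-point bookkeeping and the rest of your outline goes through.
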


Another sequence where \RebalanceZig fails to achieve logarithmic depth is on the \emph{pairs} sequence $2,1,4,3,6,5,\ldots,n,n-1$, provided $p\neq\frac{1}{2}$. The following theorem restates Lemma~\ref{lem:zig-pairs}.

\begin{theorem}
  \label{thm:zig-pairs}
  Executing \RebalanceZig with $0\leq p < \frac{1}{2}$ or $\frac{1}{2}<p\leq 1$ on a pairs sequence of $n$ insertions results in a binary search tree with expected average node depth~$\Theta(n)$.
\end{theorem}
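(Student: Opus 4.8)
\medskip
\noindent\textbf{Proof plan.}
The upper bound is immediate, since an $n$-node binary tree has average depth at most $n-1$. For the lower bound it suffices to show $\E{I}=\Omega(n^2)$, where $I$ is the sum of the depths of all $n$ nodes (so the average depth is $I/n$). I plan to treat $0\le p<\frac12$ and $\frac12<p\le 1$ by mirror-image arguments: the bias of the coin makes the tree steadily accumulate depth on one side, and the value $p=\frac12$ is excluded exactly because there the two sides grow at the same rate and the pairs sequence in fact produces a balanced tree.

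The first step is to describe the deterministic skeleton of how \RebalanceZig handles the pairs sequence. When $2k$ is inserted it is the current maximum, so it becomes the right child of the old maximum $2k-2$ at the bottom of the right spine; a zig rotation of $2k$ makes $2k-2$ its left child with empty right subtree, after which $2k-1$ is inserted as the right child of $2k-2$ (or, if $2k$ was not rotated, as the left child of $2k$), and an immediate zig rotation of $2k-1$ makes it the left child of $2k$ sitting above $2k-2$. Thus in the ``typical'' execution --- both of the at most two rotations caused by the $k$-th pair striking within $O(1)$ levels of the freshly inserted leaf, an event of probability $\Omega\big((1-p)^{O(1)}\big)$ --- a bounded amount of new structure is glued on near the maximum and the old structure is pushed one level down, while the exceptional executions are those whose coin shows a run of tails long enough to rotate a node far from the inserted leaf, or to reach the root without rotating at all. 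I would fix the tree shape after $2k$ insertions as a function of the coin outcomes and isolate these exceptional events.

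For $0\le p<\frac12$ I would track a potential $\Phi_k$ capturing a long ``left-leaning'' path --- concretely, the length after the $k$-th pair of the left-child chain hanging below the current maximum, carried along with a structural invariant (that this chain consists, up to $O(1)$ corrections, of recently inserted consecutive integers and is essentially the longest left-child chain in the tree). Using the skeleton, the $k$-th pair extends this chain, raising $\Phi$ by $1$, whenever $2k$ (and, the pair before, $2k-2$) were rotated up over their predecessors, an event of probability $\approx(1-p)^2$, whereas the exceptional executions can chop a prefix off the chain; since each rotation moves $\Phi$ by $O(1)$ and the bad rotations occur only with probability $O(p)$ (with a geometric tail in how far up they strike), a short run-length computation shows the expected change of $\Phi$ per pair equals $\Theta(1-2p)$, which is positive precisely because $p<\frac12$. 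A drift/linearity-of-expectation argument then gives $\E{\Phi_k}=\Theta\big((1-2p)\,k\big)$, so $\E{\Phi_{n/2}}=\Omega(n)$. Finally, a left-child chain of length $m$ puts its $i$-th node at depth $\ge i$, so $I\ge\binom{\Phi_{n/2}}{2}$, and $\E{I}\ge\big(\E{\Phi_{n/2}}\big)^2/2=\Omega(n^2)$ by Jensen's inequality. For $\frac12<p\le 1$ the mirror argument tracks the right-child chain instead (for $p=1$ simply the right spine $2,4,\dots,n$, since no rotations occur), with $1-2p$ replaced by $2p-1>0$.

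The step I expect to be the main obstacle is the design of $\Phi$ and the verification that its per-pair drift is $\Theta(|1-2p|)$, hence strictly positive for every $p\ne\frac12$. A naive potential, such as ``left-child chain below the maximum'' with no invariant, can be shortened whenever an unlucky tail-run rotates a spine node whose left subtree is non-trivial, and one must show that such disturbances --- together with all rotations far from the inserted leaf --- are rare enough, by the geometric tail of the coin, not to overcome the drift; one must also confirm that at $p=\frac12$ this drift genuinely vanishes, so the argument correctly fails there. All constants degrade as $p\to\frac12$, consistent with the theorem, but for any fixed $p\ne\frac12$ the average depth is $\Theta(n)$.
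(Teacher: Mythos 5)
Your overall skeleton (exhibit one long root-to-node path of expected length $\Omega(n)$ and convert it to total depth $\Omega(n^2)$ via convexity) matches the paper's, but both of your candidate potentials point at the wrong structure, and the resulting gaps are fatal. For $0<p<\frac12$ the paper tracks the \emph{right height} (depth of the rightmost node): a ten-case analysis of how a pair $2i,2i-1$ changes it yields an expected per-pair drift of $2p^3-5p^2+2p=p(p-2)(2p-1)>0$, so the rightmost path itself has expected length $\Theta(n)$. Your potential $\Phi$ (the left-child chain hanging below the current maximum) does not satisfy ``each rotation moves $\Phi$ by $O(1)$'': whenever the coin shows tail at the freshly inserted even element (probability $p$, i.e.\ a constant per pair), that element stays the right child of the old maximum and becomes the new rightmost node \emph{without inheriting} the old maximum's left chain; the entire chain is orphaned and $\Phi$ drops to $O(1)$ in one step. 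Since a chain hanging below a node $m$ essentially cannot be re-extended once all future insertions exceed $m$, $\Phi$ resets every $O(1/p)$ pairs in expectation, so its drift is not $\Theta(1-2p)$ and its value stays far from $\Omega(n)$. Only at $p=0$ does your picture hold, which is exactly why the paper treats $p=0$ as a degenerate special case.

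The second, larger gap is the claimed mirror symmetry for $\frac12<p\le1$. The pairs sequence is essentially increasing, so there is no left--right symmetry to exploit, and the right spine has \emph{negative} drift $p(p-2)(2p-1)<0$ in this range: the paper proves, by a supermartingale-type induction giving geometric tails, that the expected right height is $O(1)$ for every insertion when $\frac12<p<1$. The linear structure there is instead the \emph{leftmost} path from the root, which grows because rotations now tend to strike near the top of the tree and each such left rotation pushes the old spine down-left; establishing this occupies most of the paper's proof (it bounds the ``right depth'', ``left depth'' and ``branching depth'' of a constant fraction of nodes by $O(1)$ and then counts how many such nodes each leftmost-path node can absorb). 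Your plan would try to prove positive drift for a quantity whose drift is provably negative, so this half cannot be repaired by better bookkeeping; it needs a genuinely different argument.
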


For $p=\frac{1}{2}$ algorithm \RebalanceZig behaves significantly better on pairs sequences%
\LongVersion{, see the experimental evaluation in Figure~\ref{fig:zig-pairs}}%
. We conjecture the expected average node depth to be $O\big(\sqrt{n}\big)$%
\LongVersion{, motivated by Theorem~\ref{thm:process} in the appendix, studying a random process}%
.

In Section~\ref{sec:zigzag} we study the second algorithm \RebalanceZigZag. For increasing (decreasing) sequences, where the new leaf is always the rightmost (leftmost) node in the tree, \RebalanceZigZag is essentially identical to \RebalanceZig, i.e., our result for increasing and decreasing sequences for \RebalanceZig immediately carries over to \RebalanceZigZag. The following theorem restates Corollary~\ref{cor:zigzag-increasing}.
 
\begin{theorem}
  \label{thm:zigzag-increasing}
  Executing \RebalanceZigZag with $0<p<1$ on an increasing or decreasing sequence of $n$ insertions results in a binary search tree where each nodes has expected depth~$O(\lg n)$.
\end{theorem}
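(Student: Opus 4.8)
The plan is to reduce the claim to the corresponding statement for \RebalanceZig, namely Lemma~\ref{lem:zig-increasing} (equivalently Theorem~\ref{thm:zig-increasing}), by arguing that on an increasing sequence the two rebalancing procedures are indistinguishable. The decreasing case then follows from the left--right mirror symmetry of both algorithms, so I will only treat increasing insertions.

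First I would establish the structural invariant that drives everything: after inserting $1,2,\ldots,m$ for every $m\le n$, the element $m$ occupies the rightmost leaf, and the path from the root to that leaf is exactly the right spine $v_0,v_1,\ldots,v_k$, where $v_0$ is the root and $v_j$ is the right child of $v_{j-1}$ for $1\le j\le k$. This follows from the fact that the rightmost leaf of any binary search tree is reached by repeatedly descending to the right child, together with the observations that the new maximum is attached as the right child of the previous rightmost leaf and that the rotations performed during rebalancing preserve the in-order of the elements (so $m$ stays rightmost). In particular, every non-root node on the path walked up by the rebalancing procedure is the right child of its parent.

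The key consequence is that whenever \RebalanceZigZag flips head at a node $x$ on this path whose parent $p$ is not the root, the triple $(\text{grandparent},p,x)$ is a ``straight'' configuration---$x$ is the right child of $p$ and $p$ is the right child of its parent---i.e.\ a zig-zig case; the zig-zag case never occurs on an increasing sequence. Since \RebalanceZigZag differs from \RebalanceZig only in the zig-zag case---in the zig case and in the zig-zig case both simply rotate the current node up once---an easy induction on the number of insertions shows that, run with the same coin tosses, the two algorithms produce the same tree on every increasing insertion sequence. Hence the distribution of the resulting tree, and in particular the expected depth of each node, coincides for the two algorithms, and the $O(\lg n)$ bound follows immediately from Lemma~\ref{lem:zig-increasing}.

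The only delicate point is the bookkeeping in the previous paragraph: one has to check from the definition of \RebalanceZigZag that in the zig case (parent is the root) and in the zig-zig case it performs exactly the single ``rotate current node up'' step used by \RebalanceZig, so that the induction closes. If \RebalanceZigZag instead performed a genuine two-rotation lift in the zig-zig case, the statement would still hold, but one would then re-examine the recurrence behind Lemma~\ref{lem:zig-increasing}: such a lift only removes more nodes from the right spine per rebalancing step, so the spine can only be shorter, and each subtree pushed to the left of the spine is itself produced by an increasing sequence of insertions, so the same inductive depth estimate applies.
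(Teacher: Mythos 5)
Your overall strategy---reduce to Lemma~\ref{lem:zig-increasing} via the right-spine invariant---is sound, but the ``delicate point'' you flag at the end is exactly where the argument breaks, and the check you defer does not go through. In the zag-zag case \RebalanceZigZag does \emph{not} ``rotate the current node up once'': it executes $\RotateUp{v.p}$, i.e., it rotates up the \emph{parent} of the node $v$ at which the coin came up head. Moreover, when the head occurs at a child of the root (the case $v.p.p=\nil$) it performs \emph{no} rotation at all, whereas \RebalanceZig would rotate $v$ up to the root. Consequently the two algorithms do not produce the same tree under the same coin tosses; already for $n=2$, \RebalanceZigZag deterministically leaves $1$ as the root with right child $2$, while \RebalanceZig makes $2$ the root with probability $1-p$. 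So the induction you describe does not close.

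The correct relationship---and the one the paper uses---is a shift by one level, equivalently a deferral by one insertion: on an increasing sequence, running \RebalanceZigZag from the new leaf $v$ performs exactly the rotation that \RebalanceZig would perform when started from $v.p$, the previously rightmost node (a head at distance $j$ from $v$ triggers a rotation of the node at distance $j+1$ from $v$ in both couplings, and the no-rotation probabilities agree). Hence the tree after $m$ insertions with \RebalanceZigZag equals the tree after $m-1$ insertions with \RebalanceZig with element $m$ appended as the new rightmost leaf: every node except the most recently inserted one has the same depth as under \RebalanceZig, and the last node's depth is one plus the right height, which is $O\big(\log_{1/p} n\big)$ with high probability by Lemma~\ref{lem:zig-right-height}. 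With that corrected coupling your reduction to Lemma~\ref{lem:zig-increasing} goes through. Finally, be wary of your fallback remark that a genuine two-rotation zig-zig lift ``would only make the spine shorter, so the statement would still hold'': that variant is precisely \RebalanceZigZig from the paper's appendix, and the paper's experiments and intuitive discussion indicate it performs badly on increasing sequences, because the double rotation keeps the insertion point near the root and the resulting left rotations there repeatedly push large left offsprings one level down.
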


In Section~\ref{sec:zigzag-finger} we generalize the proof to also hold for the convergent sequence for \RebalanceZigZag (where \RebalanceZig failed to achieve logarithmic depth), and more generally finger sequences, where the next insertion always becomes the successor or predecessor of the last insertion. The following theorem restates Lemma~\ref{lem:zigzag-finger}.

\begin{theorem}
  \label{thm:zigzag-finger}
  Executing \RebalanceZigZag with $\frac{1}{2} \left(\sqrt{5} - 1\right)<p<1$ on a convergent or finger sequence of $n$ insertions results in a binary search tree where each nodes has expected depth~$O(\lg n)$.
\end{theorem}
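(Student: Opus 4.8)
The plan is to generalise the spine analysis behind Theorem~\ref{thm:zigzag-increasing} from the right spine to the \emph{finger path}: the root-to-leaf path ending at the most recently inserted element. For a convergent or finger sequence every new element is attached as a child of the current finger, so the new leaf lands exactly one level below the old finger, and with probability $(1-p)p^{k}$ the coin walk stops $k$ levels above the new leaf, where \RebalanceZigZag performs a zig (if that node is a child of the root), a zig-zig (where the sequence is locally monotone, exactly as in the increasing case of Section~\ref{sec:zigzag}), or a zig-zag. I would first write down the purely structural effect of each of these three operations on a finger path: which of the three participating nodes and which of the $O(1)$ subtrees hanging off them move up, stay, or move down one level, and --- crucially --- verify that in a zig-zag step the only subtree that detaches from the finger path was already off the path, so that no unbalanced subtree is ever newly created, and that (apart from the walk reaching the root) the new finger never ends up deeper than the old one. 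This is where the double rotation is essential: a single zig would merely rotate a ``turn'' of the finger path around without removing it, which is what makes \RebalanceZig fail on the convergent sequence (Theorem~\ref{thm:zig-converging}), whereas the zig-zag collapses a turn.

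The heart of the proof is a potential-function bound on the depth $d_t$ of the finger after $t$ insertions. I would use a golden-ratio-weighted potential over the finger path, assigning the node at distance $j$ from the root the weight $\phi^{-j}$ for $\phi=\tfrac12(1+\sqrt5)$ (so $\Phi_t\asymp\phi^{-d_t}$, anchored at the finger), and estimate $\E{\Phi_{t+1}-\Phi_t}$ conditioned on the current path. Inserting the new leaf increases the potential by $\Theta(\phi^{-d_t})$; a zig-zag or zig-zig double rotation stopping $k$ levels up both permutes the top three path nodes \emph{and} lifts everything below the stopping point by one level, and the defining identity $\phi^{-1}+\phi^{-2}=1$ is precisely what makes these two contributions telescope into a single negative term of order $(1-p)\,\phi^{-d_t}$, while the exceptional ``reach the root'' (and adjacent zig) event contributes a positive term of order $p^{\,d_t}\!\cdot\phi^{-d_t}$ up to a constant. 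Comparing the coefficient $1-p$ of the gain with the $p^{\,d_t}$-weighted loss, the two balance exactly at $p=\phi^{-1}=\tfrac12(\sqrt5-1)$: for $p>\phi^{-1}$ one gets $\E{\Phi_{t+1}\mid\Phi_t}\le\Phi_t-\Omega\!\big((1-p)\Phi_t\big)+O(1)$ once $d_t$ exceeds a $p$-dependent constant, and a standard supermartingale / Chernoff-type argument then gives $\E{d_t}=O(\lg t)$ together with geometrically decaying tails, so that with high probability the finger path stays of length $O(\lg t)$ throughout all $n$ insertions.

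It then remains to bound the depth of a node $v$ after it has stopped being the finger. A later insertion disturbs $v$ only when its coin walk stops within $O(1)$ levels of $v$, which happens with probability at most $p^{\,\delta-O(1)}$ where $\delta$ is the current distance of $v$ from the finger, and each such disturbance moves $v$ by at most $O(1)$ levels; since $v$ was at depth $O(\lg t_v)$ when inserted and its distance to the finger drifts (in expectation) at the rate at which the finger path grows, the resulting sum is a convergent geometric series, giving $\E{d_n(v)}=O(\lg n)$ for every $v$. I expect the middle paragraph to be the genuine obstacle: one must choose the weights and the anchoring of the potential so that the permutation of the top three path nodes and the uniform one-level lift of the subtree below the stopping point fuse into one clean telescoping inequality, then check that the coefficient of the ``reach-the-root'' term is dominated exactly when $p>\phi^{-1}$; the small-path boundary cases (paths of length $1,2,3$) and the zig-vs-zig-zig bookkeeping are then routine but still need to be carried out.
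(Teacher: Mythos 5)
There is a genuine gap: you have attached the hypothesis $p>\frac{1}{2}(\sqrt{5}-1)$ to the wrong part of the argument. Your first paragraph correctly isolates the key structural fact (both cases of Figure~\ref{fig:zigzag} lift the entire subtree below the stopping point, hence the insertion point, by exactly one level), but this fact makes your golden-ratio potential in the second paragraph unnecessary: since a finger sequence always inserts at a child of the previous node, the insertion depth (depth of the parent of the insertion point) increases by one \emph{only} when the coin walk terminates without performing a rotation, which happens with probability $p^{d}$, and otherwise is restored to its old value. The finger-path depth is therefore monotone non-decreasing and the high-probability $O(\log_{1/p}n)$ bound of Lemma~\ref{lem:zig-right-height} carries over verbatim for \emph{every} $0<p<1$; no threshold on $p$ can emerge here, and indeed your claim that the coefficient $1-p$ and the $p^{\,d_t}$-weighted loss ``balance exactly at $p=\phi^{-1}$'' cannot be right, since one quantity depends on $d_t$ and the other does not. (Also note that Algorithm~\ref{alg:RebalanceZigZag} performs no rotation at all when the walk stops at the root or a child of the root --- there is no single-rotation ``zig'' case.)

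The constraint $p>\frac{1}{2}(\sqrt{5}-1)$ is needed precisely in the phase you dismiss as a convergent geometric series. Once $v$ sits in a subtree hanging off the insertion path at a branching node $w$, consider the insertions whose coin walk reaches the grandchild of $w$ on the insertion path: with probability $1-p$ the coin shows head there, the double rotation of Figure~\ref{fig:zigzag} is applied with $v$'s subtree in the role of $d$, and $v$ is pushed \emph{down} one level; with probability $p^2$ the coin shows tail there and at the child of $w$, the rebalancing happens at $w$ or above, and $v$ moves \emph{up} one level (with probability $p(1-p)$ nothing changes). This conditional walk has downward drift only when $1-p<p^2$, i.e.\ $p>\frac{1}{2}(\sqrt{5}-1)$, which is exactly where the paper invokes its random-process bound (Lemma~\ref{lem:exp-decrease}). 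Your accounting --- that a disturbance requires the walk to stop within $O(1)$ of $v$ and hence has probability $p^{\delta-O(1)}$ in the distance $\delta$ to the finger --- does not work, because the branching node $w$ can sit arbitrarily close to the insertion point, and the quantity that matters is the sign of the drift of the up/down walk, not the total number of disturbances. As written, your third paragraph would prove the theorem for all $0<p<1$, which the paper does not claim and which its experiments contradict.
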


On the negative side, we prove that \RebalanceZigZag fails to achieve balanced trees for pairs sequences, for all $0\leq p \leq 1$. The following theorem restates Lemma~\ref{lem:zigzag-pairs}.

\begin{theorem}
  \label{thm:zigzag-pairs}
  Executing \RebalanceZigZag with $0 \leq p \leq 1$ on a pairs sequence of $n$ insertions results in a binary search tree with expected average node depth~$\Theta(n)$.
\end{theorem}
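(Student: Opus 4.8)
The plan: the upper bound is trivial (any $n$-node binary search tree has average node depth below $n$), so the content is the lower bound $\E{\sum_v \mathrm{depth}(v)}=\Omega(n^2)$. First dispose of the boundary values of $p$. For $p=1$ the coin never shows head, no rotation is ever performed, and the tree is the fully right-leaning pairs tree with right spine $2,4,\dots,n$, of total depth $\Theta(n^2)$. For $p=0$ the first toss after every insertion is head, so each round performs the zig-zag at the fresh leaf $2k-1$; an easy induction shows the tree becomes the right-leaning comb with right spine $1,3,5,\dots$ and the even elements as left children, again of total depth $\Theta(n^2)$. So assume $0<p<1$.

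The structural fact to build on is that round $k$ inserts $2k$ as a new rightmost leaf (the right child of the old maximum $2k-2$) and then $2k-1$ as the left child of $2k$; hence every ancestor of the new leaf $2k-1$ except itself lies on the right spine, and \RebalanceZigZag only rearranges a neighbourhood of the bottom of the right spine together with the ``debris'' subtrees hanging to its left, choosing the rotation point at distance $j$ from the leaf with probability $\propto p^j$. Writing $\ell_k$ for the right-spine length entering round $k$ and going through the zig/zig-zag/zig-zig cases: inserting $2k$ adds $1$ to $\ell_k$, the zig-zag at the leaf (probability $1-p$) leaves it unchanged, and a zig-zig at distance $j\ge 1$ (probability $\propto p^j$) decreases it by $2$ (by $1$ in the zig case at the root), so $\E{\ell_{k+1}-\ell_k \mid \mathrm{history}}\ge 1-2p$, with increments bounded by $2$. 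For $0<p<\frac{1}{2}$ this drift is a positive constant, so an Azuma--Hoeffding argument gives $\ell_{n/2}=\Omega((1-2p)n)$ with high probability, and a right spine of length $\ell$ alone contributes $\binom{\ell}{2}$ to the total depth; this settles the range $p<\frac{1}{2}$.

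For $\frac{1}{2}\le p<1$ the right spine stays sublinear (at $p=\frac{1}{2}$ it is essentially an unbiased walk), so instead I would follow the ``debris''. In each round at least one node leaves the right spine and becomes debris (the node $2k-2$ always does, more in a zig-zig), and it is then pushed one or two levels deeper --- together with everything already below it --- by each later round whose rotation point is at distance $\le 1$ from the new leaf, since both the zig-zag (probability $1-p$) and the nearest zig-zig (probability $p(1-p)$) translate the bottom-right subtree downward; only the geometrically rarer zig-zigs at distance $\ge 2$ (total probability $p^2$) lift it, and by a bounded amount. Thus the depth of a debris node created in round $k$ behaves, up to $O(1)$ per step, like a bounded-increment process with positive drift $c(p)>0$, so with high probability it reaches depth $\Omega(c(p)(n/2-k))$; since every round contributes at least one new debris node, a constant fraction of the $n$ nodes end at depth $\Omega(n)$ and $\E{\sum_v \mathrm{depth}(v)}=\Omega(n^2)$.

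I expect the last step to be the main obstacle: one must show that in expectation the descent of the bottom-right debris outpaces its occasional ascent. The only negative contributions are the distance-$\ge 2$ zig-zigs, which raise the whole bottom-right subtree $D_j$ (rooted at the $(j-1)$st spine node) by two levels; these have probability only $p^j(1-p)$, but $|D_j|$ can be large, so one needs $\E{|D_j| \mid j}$ under control. I would obtain this from the same geometric principle --- debris lying $\delta$ levels below the current spine bottom was placed there by earlier rounds whose rotation points were themselves that deep, an event of probability geometrically small in $\delta$ --- which makes the expected upward correction per round a convergent quantity and closes the argument; carrying out this bookkeeping, and stitching the two ranges of $p$ together with the boundary cases, is the real work.
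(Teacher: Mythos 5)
Your case split at $p=\tfrac12$ rests on a drift calculation that is wrong, and it sends the proof in the wrong direction for $0<p<\tfrac12$. You estimate the probability of a spine-shortening rotation in a round as at most $\sum_{j\ge1}p^j(1-p)\le p$, treating the head-at-the-leaf outcome as a depth-neutral zig-zag. But after inserting the even element $2k$ (a right child of a right child), \emph{every} stopping point including the leaf itself is a zag-zag case, so that rebalancing pass removes a node from the right spine with probability $1-p^{\Theta(d)}$, i.e.\ with probability close to $1$ once the spine has constant length; the pass after inserting $2k-1$ shortens it again with probability about $p(1-p^{d-1})$. (Also, each zag-zag shortens the spine by $1$, not $2$.) The correct per-pair drift is roughly $p^{d}(1-p)-p(1-p^{d-1})$, which is negative for all fixed $0<p<1$ as soon as $d$ exceeds a constant depending on $p$. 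So the right spine stays $O(1)$ in expectation for the entire range $0<p<1$ — there is no regime of linear spine growth — and your lower bound for $p<\tfrac12$ collapses. (Minor: at $p=0$ the tree is the leftmost path $n-1,\dots,1$ with $n$ as the root's right child, not a right comb; the $\Theta(n^2)$ conclusion survives.)

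Your ``debris'' argument for $p\ge\tfrac12$ is in fact the right mechanism for \emph{all} $0<p<1$, but the step you flag as the main obstacle — showing that the descent of evicted nodes beats the lift from high zag-zags — is genuinely the crux, and your sketch of controlling $\E{|D_j|}$ does not close it. The paper avoids this head-on drift comparison: having established the $O(1)$ expected right height (via the bounded-right-height random walk, the analogue of a $p^->p^+$ process), it tracks for each node its \emph{right depth}, its \emph{left depth} (left branches off the leftmost path), and its \emph{branching level}, shows the right depth never increases, the branching level only decreases, and the left depth increases only an expected $O(1)$ number of times before the branching level drops; a constant fraction of nodes therefore have $O(1)$ left and right depth, which forces the leftmost path to contain $\Omega(n)$ nodes and yields average depth $\Theta(n)$. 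You would need either that bookkeeping or a completed version of your debris-drift analysis, applied uniformly over $0<p<1$ rather than only above $\tfrac12$.
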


We complement our theoretical findings by an experimental evaluation of \RebalanceZig and \RebalanceZigZag
\LongVersion{(and a third unpromising algorithm \RebalanceZigZig) }%
in Section~\ref{sec:experiments}, supporting our theoretical findings. We briefly discuss random permutations in Section~\ref{sec:random-permutation}, but otherwise only have an experimental evaluation of the rebalancing algorithms on inserting random permutations.

If the insights from our results can lead to an improved bottom-up randomized rebalancing scheme for binary search trees remains open.

\subsection{Notation and Terminology}

Throughout this paper $n$ denotes the number nodes in a binary search tree, i.e., the number of insertions performed. The \emph{depth} of a node is the number of edges from the node to the root, i.e., the root has depth zero. The height of a tree is the maximum depth of a node. Rebalancing will be done by the standard primitives of left and right \emph{rotations}, see Figure~\ref{fig:rotation}. Both rotate up a node one level in the tree. Since our updates are performed bottom-up, we assume that each node~$v$ stores an element and pointers to its left child~$v.l$, right child~$v.r$, and parent~$v.p$ (possibly equal to \nil if no such node exists). We let $\lg n$ and $\ln n$ denote the binary and natural logarithm of $n$, respectively.

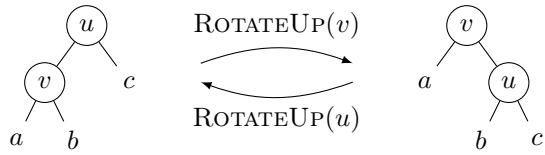
\begin{figure}[t]
  \centering
  \begin{tikzpicture}
    \begin{scope}[xshift=-2.5cm]
      \Tree [.{$u$} [.{$v$} \node[draw=none] {$a$}; \node[draw=none] {$b$}; ] \node[draw=none] {$c$}; ]
    \end{scope}
    \draw[-latex] (-1, -0.5) to[bend left=20] node[above, midway] {$\RotateUp{v}$} (1, -0.5);
    \draw[-latex] (1, -0.75) to[bend left=20] node[below, midway] {$\RotateUp{u}$} (-1, -0.75);
    \begin{scope}[xshift=2.5cm]
      \Tree [.{$v$} \node[draw=none] {$a$}; [.{$u$} \node[draw=none] {$b$}; \node[draw=none] {$c$}; ] ]
    \end{scope}
  \end{tikzpicture}
  \caption{(Left-to-right) The right rotation of $u$ rotates $v$ up; note that $a$ is moved one level up in the tree, $b$ remains at the same level, and $c$ is moved down one level. (Right-to-left) the left rotation of $v$ rotates $u$ up.}
  \label{fig:rotation}
\end{figure}

\section{Algorithm \RebalanceZig}
\label{sec:zig}

In this section we show that on increasing and decreasing sequences applying algorithm \RebalanceZig results in binary search trees where each node has expected depth $O(\lg n)$. We also show that on the converging and pairs sequences $\big(p\neq \frac{1}{2}\big)$ the expected average node depth is linear.

Assume a new element has been inserted into a binary search tree as a new leaf $v$ (before rebalancing the tree). Algorithm \RebalanceZig rebalances the tree as follows: After inserting the new node~$v$, we flip a coin, that with probability $p$ is tail and $1-p$ is head, for a constant $0 \leq p \leq 1$. If the coin is head, we rotate~$v$ up, and the insertion terminates. Otherwise, we recursively move to the parent, i.e., set $v \gets v.p$, flip a coin, and rotate the parent up if the coin is head, or continue recursively at the grandparent if the coin is tail. The rebalancing terminates when the first rotation has been performed or when we reach the root. See the pseudo-code in Algorithm~\ref{alg:RebalanceZig}. 

Note that $p=1$ is the special case where we always move up and never rotate, i.e., identical to insertions without rebalancing. When $p=0$ the new node is always the node rotated up. In this case the tree is a single path containing all $n$ nodes, since inserting a node~$v$ as a child of $u$ on the path, rotating up $v$ causes $v$ to be inserted into the path as the parent $u$. See Figure~\ref{fig:probability-zero}. In the following we assume $0<p<1$. That \RebalanceZig can not achieve the same tree distribution as random binary search trees (like treaps and randomize binary search trees do) follows by the example in Figure~\ref{fig:rotated-insertions}%
\LongVersion{ (compare with the result in Figure~\ref{fig:unbalanced-distribution-3-nodes})}%
.

\begin{figure}[t]
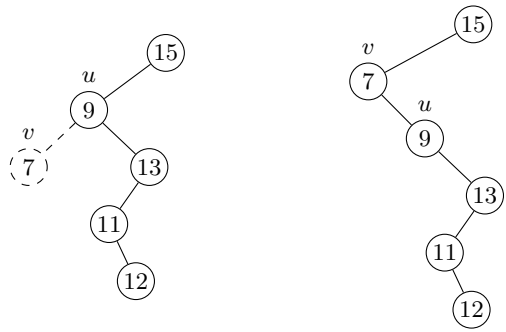

  \centering
  \begin{tabular}{cc}
    \tree{[.15 [.\node[label={$u$}]{9}; \edge[dashed]; \node[dashed, label={$v$}]{7}; [.13 [.11 \edge[missing]; \node[missing] {}; 12 ] \edge[missing]; \node[missing] {}; ] ] \edge[missing]; \node[missing] {}; ]} &
    \tree{[.15 [.\node[label={$v$}]{7}; \edge[missing]; \node[missing] {}; [.\node[label={$u$}]{9}; \edge[missing]; \node[missing] {}; [.13 [.11 \edge[missing]; \node[missing] {}; 12 ] \edge[missing]; \node[missing] {}; ] ] ] \edge[missing]; \node[missing] {}; ]}
  \end{tabular}
  \caption{\RebalanceZig with $p=0$ always rotates up the inserted node, and maintains the invariant that the tree is a single path. (left) insertion point of 7; (right) 7 is rotated up onto the path.}
  \label{fig:probability-zero}
\end{figure}

\begin{figure}[t]
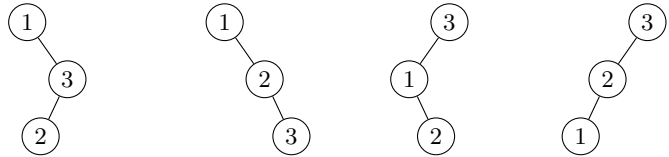

  \centering
  \tabcolsep10pt
  \begin{tabular}{ccccc}
    \tree{[.1 \edge[missing]; \node[missing] {}; [.3 [.2 ] \edge[missing]; \node[missing] {}; ] ]} &
    \tree{[.1 \edge[missing]; \node[missing] {}; [.2 \edge[missing]; \node[missing] {}; [.3 ] ] ]} &
    \tree{[.3 [.1 \edge[missing]; \node[missing] {}; [.2 ] ] \edge[missing]; \node[missing] {}; ]} &
    \tree{[.3 [.2 [.1 ] \edge[missing]; \node[missing] {}; ] \edge[missing]; \node[missing] {}; ]}
  \end{tabular}
  \caption{Binary search trees resulting from inserting the sequence 1, 3, 2 using \RebalanceZig. Each of the four search trees has probability~1/4 when $p=1/2$%
  \LongVersion{ (see Figure~\ref{fig:insertion-cases} for detailed cases)}%
  . Note that the perfect balanced binary search tree on three nodes cannot by achieved \RebalanceZig on this insertion sequence.}
  \label{fig:rotated-insertions}
\end{figure}

\begin{algorithm}[ht]
  \caption{\textsc{RebalanceZig}$(v)$}
  \label{alg:RebalanceZig}
  \begin{algorithmic}
    \WHILE {$v.p \neq \nil$ \AND coin flip is tail}
      \STATE $v \gets v.p$
    \ENDWHILE
    \IF {$v.p \neq \nil$}
      \STATE{$\RotateUp{v}$}
    \ENDIF
  \end{algorithmic}
\end{algorithm}

\subsection{Increasing Sequences}
\label{sec:zig-increasing}

We let the \emph{right height} of a tree denote the depth of the rightmost node in the tree. When inserting elements in increasing order the new element will always be inserted as a rightmost node in the tree, i.e., the right height increases by one before any rebalancing is performed. If \RebalanceZig performs a (left) rotation on the rightmost path, the right height is reduced by one again, and the right height does not change by the insertion.

\begin{lemma}
\label{lem:zig-right-height-increase}
  If the right height is $d$ before inserting a new rightmost node and applying \RebalanceZig, then afterwards the right height is $d$ or $d+1$ with probability $1-p^{d+1}$ or~$p^{d+1}$, respectively.
\end{lemma}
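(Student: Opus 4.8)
The plan is to analyze the rebalancing walk starting at the new rightmost leaf and track exactly when (if ever) a rotation occurs on the rightmost path. Before rebalancing, the new node $v$ is at depth $d+1$ (since the previous right height was $d$), and the rightmost path from the root to $v$ consists of the $d+2$ nodes at depths $0,1,\dots,d+1$. Algorithm \RebalanceZig flips coins starting at $v$: it moves from a node to its parent each time the coin is tail, and rotates the current node up the first time the coin is head. Crucially, the only way the right height can fail to be reduced back to $d$ is if no rotation is ever performed, i.e., the walk reaches the root by getting tails on every single flip. So the key observation is that the right height stays $d+1$ if and only if the first $d+1$ coin flips (one flip each at the nodes of depth $d+1, d, \dots, 1$) all come up tail, which happens with probability exactly $p^{d+1}$; otherwise some flip is head, a (left) rotation is performed on the rightmost path, which moves one node up from depth $d+1$ and leaves the right height at $d$, with probability $1-p^{d+1}$.

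First I would make precise that a left rotation performed at any node $w$ on the rightmost path (rotating up $w$'s right child, which is the node one level below on the rightmost path) decreases the depth of the bottom node $v$ by exactly one and does not change the depth of any node that lies below the rotation point on the rightmost path relative to... more carefully: by the structure shown in Figure~\ref{fig:rotation}, rotating up the rightmost child brings it up one level, the node rotated down moves down one level, and everything at or below on the rightmost path that is in the ``$a$'' subtree side is unaffected in a way that matters. The cleanest route is to note that on the rightmost path, all nodes have empty right subtrees except where the path continues, so a left rotation at depth $i$ simply shortens the rightmost path by one: the node at depth $i+1$ becomes the node at depth $i$, and the old depth-$i$ node becomes its left child at depth $i+1$, with the rest of the path below hanging unchanged. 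Hence any single rotation on the rightmost path reduces the right height from $d+1$ back to $d$.

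Then I would do the straightforward probability computation: the rebalancing loop performs a rotation (on the rightmost path) unless it reaches the root first. Reaching the root requires tails at $v$ (depth $d+1$), at $v.p$ (depth $d$), \dots, up to the child of the root (depth $1$) — that is $d+1$ consecutive tails — and then the loop exits with $v$ equal to the root, so no rotation happens. This event has probability $p^{d+1}$, and in this case the right height remains $d+1$. In the complementary event, of probability $1-p^{d+1}$, the first head occurs at some node on the rightmost path strictly below the root, a left rotation is performed there, and the right height returns to $d$. This gives exactly the claimed statement.

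I expect the main obstacle to be the bookkeeping in the second paragraph — verifying carefully that a rotation anywhere on the rightmost path (not just at $v$'s parent) restores the right height to exactly $d$, and that it cannot accidentally increase it or decrease it further. This requires using the fact, specific to increasing insertions, that the rightmost path is the only place changes happen and that nodes on it have the simple ``spine'' structure described above; once that structural fact is nailed down, the probability statement is immediate from the independence of the coin flips.
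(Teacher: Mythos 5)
Your proposal is correct and follows the same reasoning as the paper's (much terser) proof: the right height increases exactly when no rotation occurs, which requires $d+1$ consecutive tails to reach the root, giving probability $p^{d+1}$; otherwise a head triggers a left rotation on the rightmost path that moves the rightmost node (and the whole right subtree of the rotated node) up one level, restoring right height $d$. Your extra bookkeeping about the spine structure is a correct elaboration of what the paper leaves implicit.
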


\begin{proof}
  The right height increases if and only if no rotation is performed, i.e., we reach the root because the coin~$d+1$ times in a row shows tail, which happens with probability~$p^{d+1}$.
\end{proof}

\begin{lemma}
\label{lem:zig-right-height}
  After inserting $n$ elements in increasing order using \RebalanceZig, the right height is at most $\ceil{(c+1)\cdot\log_{1/p} n}$ with probability $1-1/n^c$, for any constant $c > 0$.
\end{lemma}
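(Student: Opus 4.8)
The plan is to track the right height as a Markov chain driven by the independent insertions and show that it is very unlikely to ever be large. Let $D_i$ denote the right height after the $i$-th insertion, with $D_0 = 0$ (or $D_1 = 0$ after the first insertion). By Lemma~\ref{lem:zig-right-height-increase}, conditioned on $D_{i-1} = d$, the new value $D_i$ is $d+1$ with probability $p^{d+1}$ and $d$ with probability $1 - p^{d+1}$. The key observation is that once $d$ is moderately large, the probability $p^{d+1}$ of a further increase is tiny, so the chain is strongly pushed back; in fact it essentially never exceeds the threshold $h := \ceil{(c+1)\log_{1/p} n}$.

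First I would bound the probability that the right height ever reaches $h+1$ during the $n$ insertions. For the right height to go from $h$ to $h+1$ at step $i$, we need $D_{i-1} = h$ and then $d+1 = h+1$ coin tails in a row, an event of probability $p^{h+1} \le p^{(c+1)\log_{1/p} n} = n^{-(c+1)}$, since $p^{\log_{1/p} n} = 1/n$. Actually it suffices to bound the probability that at \emph{some} insertion the coin shows tail at least $h+1$ times, regardless of the current right height: at insertion $i$, the number of consecutive tails is geometric, so the probability of seeing $\ge h+1$ tails in that insertion is at most $p^{h+1} \le n^{-(c+1)}$. The right height can only increase when no rotation is performed, i.e., when the walk reaches the root, which requires the coin to show tail a number of times equal to the current right height plus one; so the right height exceeds $h$ only if in some one of the $n$ insertions the coin showed tail at least $h+1$ times. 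By a union bound over the $n$ insertions, this happens with probability at most $n \cdot n^{-(c+1)} = 1/n^c$. Hence with probability $1 - 1/n^c$ the right height stays at most $h = \ceil{(c+1)\log_{1/p} n}$ throughout, and in particular at the end.

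The main obstacle — really just a point requiring care rather than a deep difficulty — is making precise that the right height cannot "jump" past $h$: it increases only in unit steps (Lemma~\ref{lem:zig-right-height-increase}), and each such increase from a value $d$ requires $d+1$ tails. One must argue that if the right height were to exceed $h$ at some point, then there is a \emph{first} insertion at which it becomes $h+1$, and at that insertion the current right height was exactly $h$, so that insertion used $\ge h+1$ consecutive tails; bounding the probability of $\ge h+1$ consecutive tails at a fixed insertion by $p^{h+1}$ and union-bounding over the $n$ insertions then closes the argument. A cleaner phrasing avoids conditioning on the chain altogether: define the event $E_i$ that the coin at insertion $i$ shows tail at least $h+1$ times in a row; then $\Prob{E_i} = p^{h+1} \le 1/n^{c+1}$, the events need no independence for a union bound, and on $\bigcap_i \overline{E_i}$ the right height never exceeds $h$. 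This gives the claimed bound $1 - 1/n^c$.
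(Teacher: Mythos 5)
Your proof is correct and takes essentially the same route as the paper: both rely on Lemma~\ref{lem:zig-right-height-increase} and a union bound over the $n$ insertions of the probability $p^{h+1}\leq n^{-(c+1)}$ that an insertion pushes the right height past the threshold $h=\ceil{(c+1)\log_{1/p}n}$. Your phrasing via the events $E_i$ (at least $h+1$ consecutive tails at insertion $i$) is a slightly cleaner way to sidestep conditioning on the chain's trajectory, but the underlying argument is identical.
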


\begin{proof}
  Assume that the right height at some point of time during the insertions is $d=c'\cdot\lg n$. By Lemma~\ref{lem:zig-right-height-increase}, the probability that the next insertion increases the right height is $p^{d+1}$. The probability of any of the at most $n$ remaining insertions increases the right height is at most $np^{d+1} = n p ^{1 + c' \lg n} \leq n p ^{c' \lg n} = n^{1+c'\lg p} \leq n^{-c}$ for $c' \geq -(c+1)/\lg p$. It follows that the right height after $n$ insertions is at most $\ceil{-(c+1)/\lg p \cdot \lg n}=\ceil{(c+1)\log_{1/p} n}$ with probability $1-1/n^c$.
\end{proof}

Lemma~\ref{lem:zig-right-height} gives a high probability guarantee on the expected depth of the nodes on the rightmost path. We now prove an expected depth for all nodes in the tree.

\begin{lemma}
  \label{lem:zig-increasing}
  After inserting $n$ elements in increasing order using \RebalanceZig, with $0<p<1$, each node has expected depth $O\big(1/p\cdot\log_{1/p} n\big)$.
\end{lemma}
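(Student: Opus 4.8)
The plan is to fix an arbitrary node $x$ inserted at some step, say the $i$-th insertion with $i \le n$, and bound its expected depth in the final tree. Since we are inserting in increasing order, when $x$ is inserted it is the rightmost node, so its depth at that moment is exactly the right height at that time, which by Lemma~\ref{lem:zig-right-height} is $O(\log_{1/p} n)$ with probability $1 - 1/n^c$ (and trivially at most $n$ always), so the contribution of the insertion-time depth to $\E{\operatorname{depth}(x)}$ is $O(\log_{1/p} n)$. The real work is to control how much $x$'s depth can \emph{grow} due to the rebalancing triggered by the later insertions of the elements $x+1, x+2, \ldots$ (all of which are inserted to the right of $x$).

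Next I would analyze a single later insertion's effect on a fixed node $x$. When a new rightmost leaf is inserted and \RebalanceZig runs, the coin-walk starts at that leaf and moves up the rightmost path; a rotation is performed at the node where the first head appears. A left rotation on the rightmost path pushes exactly one node down one level — and that node is on the rightmost path. So $x$'s depth increases by at most $1$ at this insertion, and it increases by exactly $1$ only if (a) $x$ currently lies on the rightmost path, and (b) the coin-walk passes the node just below $x$ on that path and stops at a head at that node or above, rotating up a proper ancestor-or-equal position so that $x$ is the node sent down. If $x$ is at depth $d_x$ on a rightmost path of current right height $h$, the walk reaches the node below $x$ only if the first $h - d_x$ coin flips (counting from the leaf) are all tails, which has probability $p^{\,h - d_x} \le p^{\,h - d_x}$; conditioned on that, some further head causes the push-down. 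Crucially, once $x$ is pushed off the rightmost path by such a rotation, it can never return to the rightmost path (later insertions and rotations happen strictly to the right and above along that path and only move rightmost-path nodes), so $x$'s depth can increase \emph{at most once more} after it first leaves the rightmost path — in fact I claim $x$'s depth increases by at most one in total after its insertion, since the first down-move removes it from the rightmost path permanently. Let me instead argue the cleaner bound: across all later insertions, $x$ experiences at most one depth-increasing rotation, because each such rotation requires $x \in$ rightmost path beforehand and destroys that membership.

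With that structural fact, $\E{\operatorname{depth}(x)} \le \E{\text{depth at insertion}} + 1 = O(\log_{1/p} n)$, and to get the stated $O\big((1/p)\log_{1/p} n\big)$ I would be a bit more careful: the depth-at-insertion bound from Lemma~\ref{lem:zig-right-height} holds with high probability, and on the $1/n^c$ failure event we pay at most $n$; choosing $c$ large makes this a lower-order term, while the constant hidden by the $\lceil(c+1)\log_{1/p} n\rceil$ expression, together with a possible extra factor from summing the geometric tail $\sum_k p^{k}$ over the "how far up the walk goes" random variable (which contributes the $1/p$-type factor $\sum_k k p^k = O(1/(1-p)^2)$ or an $O(1/p)$ factor depending on bookkeeping), yields the claimed $O\big((1/p)\log_{1/p} n\big)$. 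By symmetry the decreasing case is identical with left/right swapped, which establishes Theorem~\ref{thm:zig-increasing}.

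The main obstacle I anticipate is making the "$x$ leaves the rightmost path permanently, hence its depth increases at most once" claim fully rigorous — one must check carefully that (i) a left rotation on the rightmost path indeed sends a rightmost-path node into the left subtree of its new parent and never moves a non-rightmost-path node, and (ii) no subsequent insertion/rotation can re-raise such a node onto the rightmost path. A secondary subtlety is combining the high-probability bound of Lemma~\ref{lem:zig-right-height} with the rare bad event cleanly inside an expectation, which is routine once $c$ is taken large enough.
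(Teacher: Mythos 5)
The central structural claim in your argument is false, and it is exactly where the real work of the proof lies. You assert that once $x$ is pushed off the rightmost path by a rotation, ``its depth can increase at most once more --- in fact $x$'s depth increases by at most one in total.'' Look at the left rotation in Figure~\ref{fig:rotation} (read right-to-left): when the right child $u$ of a rightmost-path node $v$ is rotated up, $v$ moves down one level \emph{and so does every node in $v$'s left subtree} (the subtree $a$). So after $x$ first leaves the rightmost path it sits in the left subtree of some branching node $u$ on that path, and every subsequent insertion whose rotation lands at the right child of the current branching node pushes $x$ down another level, with a new node taking over the role of branching node. This can happen many times over the course of the remaining insertions; if your claim were true the bound would be $O(\log_{1/p} n)$ with no $1/p$ factor, and the factor $1/p$ in the statement is precisely the price of these repeated push-downs.

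The paper's proof handles this by a potential-style accounting: each time a coin is flipped at the right child $w$ of the branching node $u$, with probability $1-p$ the bad case occurs ($w$ is rotated up, $x$ goes down one), and with probability $p$ one of the benign cases occurs --- and each benign case either increases the right height or decreases the depth of the branching node, so benign cases occur only $O(\log_{1/p} n)$ times with high probability. Hence the expected number of coin flips at $w$, and therefore the expected number of depth increases of $x$, is $O\big(1/p\cdot\log_{1/p} n\big)$. Your handling of the insertion-time depth via Lemma~\ref{lem:zig-right-height} and the treatment of the $1/n^c$ failure event are fine and match the paper, but without replacing the false ``at most one increase'' claim by an argument of this kind the proof does not go through.
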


\begin{proof}
  Consider an element inserted in a node~$v$ during the sequence of insertions. The element goes through the following five phases:
  \begin{enumerate}
    \item\label{case:v-not-created} The element is not yet inserted. The less than $n$ elements inserted before $v$ create a tree with right height $O\big(\log_{1/p} n\big)$ with high probability (Lemma~\ref{lem:zig-right-height}).
    \item\label{case:v-created} $v$ is created as the rightmost node with depth $O\big(\log_{1/p} n\big)$ with high probability.
    \item\label{case:v-rightmost-path} $v$ remains on the rightmost path for the subsequent insertions until $v$'s right child $u$ becomes the target for being rotated up. An insertion that rotates up $v$ or an ancestor of~$v$ will decrease the depth of $v$. Rotations below $u$ do not change the depth of $v$.
    \item\label{case:v-rotated-out} $v$ is moved out of the rightmost path by rotating up the right child $u$ of $v$, making $v$ the left child of $u$. This increases  the depth of $v$ by one.
    \item\label{case:v-left-subtree} $v$ is in the left subtree of a node $u$ on the rightmost path ($u$ can change over the subsequent insertions, but the depth of the branching node $u$ can never increase).  Each insertion can affect the position of $v$ by the rotation performed: 
    \begin{enumerate}
      \item\label{case:v-path-no-rotation} No rotation is performed and the path from the root through $u$ to $v$ is unchanged. The right height increases by one.
      \item\label{case:v-path-ancestor-rotated-up} An ancestor of $u$ is rotated up, where $u$ remains the branching node to $v$, the depths of both $u$ and $v$ decrease by one.
      \item\label{case:v-path-branch-rotated-up} $u$ is rotated up, where $u$ remains the branching node to $v$, the depth of $u$ decreases by one, and the depth of $v$ stays unchanged.
      \item\label{case:v-increase-depth} Rotating up the right child~$w$ of $u$ increases the depth of $v$ by one and $w$ replaces $u$ as the branching node to $v$ on the rightmost path (with the same depth as $u$ had before the rotation).
      \item\label{case:v-path-unchanged} Rotations below the right child of $u$ do not change the path from the root through $u$ and $v$. 
    \end{enumerate}
  \end{enumerate}
  From Lemma~\ref{lem:zig-right-height} it follows that the depth of $v$ after phases~1--4 is $O\big(\log_{1/p} n\big)$, with high probability.
  Cases \ref{case:v-path-no-rotation}, \ref{case:v-path-ancestor-rotated-up}, \ref{case:v-path-branch-rotated-up} and \ref{case:v-path-unchanged} do not increase the depth of $v$. 
  What remains is to bound the expected number of times case~\ref{case:v-increase-depth} occurs and increases the depth of $v$ by one.
  For case~\ref{case:v-increase-depth} to happen, a coin must have been flipped at $w$ showing head. 
  Over all insertions in phase~\ref{case:v-left-subtree}, a subsequence of the insertions flips a coin at the child $w$ of the current branching node~$u$. 
  If an insertion flips a coin at $w$, there are two cases: 
  The coin shows head with probability $1-p$ and case \ref{case:v-increase-depth} happens; 
  or the coin shows tail with probability $p$, and case~\ref{case:v-path-no-rotation},  \ref{case:v-path-ancestor-rotated-up} or \ref{case:v-path-branch-rotated-up} happens.
  Since cases~\ref{case:v-path-no-rotation}, \ref{case:v-path-ancestor-rotated-up} and \ref{case:v-path-branch-rotated-up} at most happens $O\big(\log_{1/p} n\big)$ times with high probability
  (case~\ref{case:v-path-no-rotation} increases the right height; cases~\ref{case:v-path-ancestor-rotated-up} and~\ref{case:v-path-branch-rotated-up} decrease the depth of the branching node $u$ to $v$), 
  i.e., the coin shows tail at $w$ at most $O\big(\log_{1/p} n\big)$ times with high probability.
  Since the expected number of times we need to flip a coin to get a tail is $1/p$, the expected number of times we flip a coin at the right child $w$ of the branching node $u$ to $v$ is $O\big(1/p\cdot\log_{1/p} n\big)$, with high probability.
  This is then also an upper bound on the expected number of times the depth of $v$ can increase by case \ref{case:v-increase-depth}.
  It follows that with high probability, the expected depth of $v$ is $O\big(\log_{1/p} n + 1/p\cdot\log_{1/p} n\big)=O\big(1/p\cdot\log_{1/p} n\big)$.
  Since the depth of~$v$ is at most~$n-1$, the expected depth of $v$ is $O\big(1/p\cdot\log_{1/p} n\big)$ after all insertions is (without the high probability assumption).
\end{proof}

The following lemma states that the node rotated up is expected to be close to the inserted leaf, and states the number of coin flips as a function of the tail probability~$p$.

\begin{lemma}
\label{lem:zig-distance-rotation}
  The distance from the inserted node to the node rotated up by \RebalanceZig is expected at most $\frac{p}{1-p}$. The number of coin flips is at most $\frac{1}{1-p}$.
\end{lemma}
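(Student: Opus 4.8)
The plan is to express both quantities in terms of a single geometric random variable and then read off the bounds. Fix the inserted leaf $v$ and let $m$ be its depth. The execution of \RebalanceZig consumes an i.i.d.\ stream of coin flips, each a tail with probability $p$, ascending one level on each tail and halting on the first head (performing a rotation unless it has meanwhile reached the root). Let $T$ denote the number of tails before the first head in this stream; then $T$ is geometrically distributed, $\Prob{T=k}=p^k(1-p)$ for $k\geq 0$, with $\E{T}=p/(1-p)$.

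Next I would translate the algorithm's behaviour into $T$ and $m$. The algorithm ascends $\min(T,m)$ levels, and it flips the coin exactly once at each visited non-root node, so it makes $\min(T+1,m)$ coin flips in total. A rotation is performed exactly when $T<m$, and in that case the node rotated up is the ancestor of $v$ at distance exactly $T$. The two claimed bounds are then immediate by domination: the number of coin flips is at most $T+1$, hence has expectation at most $\E{T}+1=1/(1-p)$; and the number of levels ascended --- which equals the distance to the rotated node whenever a rotation occurs --- is at most $T$, hence has expectation at most $p/(1-p)$.

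The only point requiring a word of justification is if one insists on the expected distance conditioned on a rotation actually taking place, i.e.\ on $\{T<m\}$: here I would invoke the elementary fact that truncating a distribution from above cannot increase its mean, so $\E{T\mid T<m}\leq\E{T}=p/(1-p)$ (one sees $\E{T\mid T\leq j}$ is non-decreasing in $j$ by writing it as a convex combination of $\E{T\mid T\leq j-1}$ and $j$, and it converges to $\E{T}$). There is no real obstacle here; the only care needed is the off-by-one in the coin-flip count --- the coin is not flipped once the walk reaches the root --- and the observation that the finite depth $m$ only makes the true quantities smaller than their idealised geometric counterparts, so conditioning on ``a rotation occurred'' can only help.
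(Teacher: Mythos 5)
Your proof is correct and follows essentially the same route as the paper's: both identify the distance with the number of tails before the first head of a geometric variable with mean $p/(1-p)$, note the coin-flip count is one more than that, and observe that truncation at the root (finite depth) can only decrease the expectations. Your explicit handling of the conditioning on a rotation occurring is a slightly more careful rendering of the paper's parenthetical ``provided a $d$'th ancestor exists,'' but it is the same argument.
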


\begin{proof}
  The expected distance to the node rotated up is at most
  \[
    \sum_{d=0}^{\infty} d (1-p)p^{d}=(1-p)\sum_{d=0}^{\infty} dp^{d}=(1-p)\frac{p}{(1-p)^2}=\frac{p}{1-p}\;,
  \]
  since the new node inserted (at distance 0) is rotated up with probability $1-p$, its parent with probability $p(1-p)$, etc. The $d$'th ancestor is rotated up with probability $(1-p)p^d$, provided a $d$'th ancestor exists. The number of coin flips is one plus the distance, i.e., at most $1+\frac{p}{1-p}=\frac{1}{1-p}$.
\end{proof}

Note that inserting $n$ elements in decreasing order is the symmetric case to the increasing order where the new node is inserted as the leftmost node, and at most one right rotation is performed on the leftmost path. If follows that Lemmas~\ref{lem:zig-right-height} and \ref{lem:zig-distance-rotation} also apply to decreasing sequences, by replacing the rightmost path by the leftmost path in the arguments,

\subsection{Converging Sequences}
\label{sec:zig-converging}

Assume we have a \emph{finger} into the sorted inserted sequence of elements pointing to the most recently inserted element, and whenever a new element is inserted it must be the new predecessor or successor of the element at the finger, i.e., we can only insert elements that are in the interval defined by the current predecessor and successor of the element at the finger. We call sequences satisfying this property for \emph{finger insertions}. Increasing and decreasing sequences are examples of finger insertions.  

The following sequence of insertions also consists of finger insertions (for simplicity, we assume $n$ is even). We denote this insertion sequence the \emph{converging sequence}. See Figure~\ref{fig:unbalanced-insertions} for an illustration of $n=6$.
\[ 
  1, n, 2, n-1, 3, n-2, 4, n-3, \ldots, n/2, n/2+1 
\]
As can be seen in the experimental evaluation in Figure~\ref{fig:experiments-comparison}(a,b), the average depth appears to be linear for the nodes in a binary search tree resulting from applying \RebalanceZig to the converging sequence. The below lemma confirms this.

\begin{lemma}
  \label{lem:zig-converging}
  Executing \RebalanceZig with insertions $1, n, 2, n-1,\ldots, n/2, n/2+1$, assuming $n$ even, results in a binary search tree with an (external) leaf with expected depth at least~$\frac{p(1-p)}{2} n$, for $0<p<1$. For $p=0$ and $p=1$ the resulting tree is a single path.
\end{lemma}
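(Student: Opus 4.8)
The plan is to track the two long paths that emerge during the converging sequence. After inserting $1, n, 2, n-1, \ldots$, before any rebalancing the search tree has a very rigid skeleton: the small elements $1, 2, 3, \ldots$ form a path going right-then-left-then-right (actually, element $i$ is inserted as the right child of $i-1$ region), and the large elements $n, n-1, \ldots$ form a symmetric path on the other side. Concretely, at the moment just before inserting the $k$-th small element (for small $k$), the tree root is $1$, its right subtree is rooted at $n$, whose left subtree is rooted at $2$, whose right subtree is rooted at $n-1$, and so on — a "zig-zag" spine alternating between the small-prefix and the large-suffix. Each newly inserted element is a leaf at the bottom of this alternating spine, and \RebalanceZig can rotate it up only a geometrically-bounded number of levels, so it essentially stays near the bottom. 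I would first prove by induction the exact structural invariant describing this alternating spine and argue that after $n$ insertions some leaf at the end of the spine has depth $\Omega(n)$ in expectation.

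The cleaner route is to lower-bound the contribution of the non-rotations. First I would set up the invariant: for $0<p<1$, consider the leaf created by the last insertion, and ask how deep the spine is when that leaf is born. Each of the $n$ insertions happens at the bottom of the current alternating spine. By Lemma~\ref{lem:zig-right-height-increase}-style reasoning (adapted: here the relevant quantity is the depth of the insertion point, which grows like the full spine length rather than a one-sided height), when we insert at depth $d$ the rebalancing performs a rotation iff the coin shows head within $d+1$ flips; that rotation lifts the new leaf by at most its distance to the rotated node, which is geometric with mean $\frac{p}{1-p}$ by Lemma~\ref{lem:zig-distance-rotation}. Moreover — and this is the key point — a rotation near the bottom of the spine shortens the spine by at most one level, while a no-rotation lengthens it by one. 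So if we let $D_k$ be the depth of the insertion point at step $k$, then $D_{k+1} \ge D_k + 1 - R_k$ where $R_k \in \{0,1\}$ indicates whether step $k$'s rotation was applied at the very bottom (and more generally $D_{k+1} = D_k + 1$ whenever no rotation happens, probability $p^{D_k+1}$, which is tiny once $D_k$ is large but we only need a constant fraction of early steps). I would make this precise: among the first $\Theta(n)$ steps, a constant fraction see no rotation (when the spine is still short, $p^{D_k+1}$ is bounded below by a positive constant), each such step permanently adds a level that can never be destroyed except by a rotation at that exact level later, and rotations are geometrically localized to the bottom. Counting: the expected number of no-rotation steps while the spine has length $\le \ell$ is at least (number of such steps) times $p^{\ell+1}$, and balancing $\ell = \Theta(n)$ against the $\Theta(n)$ steps available yields an expected final spine depth of $\frac{p(1-p)}{2}n$ — the factor $p$ from the per-step no-rotation probability when the spine is short, the factor $(1-p)$ (equivalently $\frac1{1+p/(1-p)}$ collapsing) from the net growth rate after subtracting the expected upward pull $\frac{p}{1-p}$ of rotations. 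I would then conclude that the external leaf hanging off the very end of the spine has expected depth at least $\frac{p(1-p)}{2}n$.

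The main obstacle I anticipate is making the "a constant fraction of steps grow the spine and rotations only pull back by $O(1)$ each" argument rigorous enough to get the clean constant $\frac{p(1-p)}{2}$, rather than just $\Omega(n)$. The subtlety is that $p^{D_k+1}$ decays as the spine grows, so the no-rotation events are not i.i.d.\ and the spine growth is self-limiting; I would handle this by defining a potential (e.g.\ the depth of the insertion point, or the depth of a fixed marked leaf) and showing its expected increment per step is at least a positive constant as long as the potential is below $\frac{p(1-p)}{2}n$, which forces the potential to reach that value within $n$ steps in expectation. A martingale/stopping-time argument (optional stopping on the first time the depth exceeds the target, together with the bounded-increment property $|D_{k+1}-D_k|$ is geometrically tailed) should deliver the bound; the geometric tail from Lemma~\ref{lem:zig-distance-rotation} is exactly what controls the downward jumps. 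The boundary cases $p=0$ and $p=1$ are immediate: $p=1$ never rotates, giving the plain unbalanced converging tree which for this sequence is a path; $p=0$ always rotates the new leaf to the top of the spine, which as noted in Figure~\ref{fig:probability-zero} keeps the tree a single path.
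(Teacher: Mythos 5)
Your proposal has a genuine gap: the growth mechanism you rely on cannot produce a linear lower bound. You let the insertion-point depth $D_k$ grow only through \emph{no-rotation} events, which occur with probability $p^{D_k+1}$; this is exactly the self-limiting process of Lemma~\ref{lem:zig-right-height}, and it caps the depth at $O(\log_{1/p} n)$ with high probability rather than driving it to $\Theta(n)$. You notice the problem yourself (``the spine growth is self-limiting''), but your proposed repair --- a potential with ``expected increment per step at least a positive constant as long as the potential is below $\frac{p(1-p)}{2}n$'' --- cannot be established from that mechanism, since the expected increment it supplies is $p^{D_k+1}$, exponentially small in $D_k$. A secondary confusion: a single rotation changes the depth of any fixed external leaf by at most one (only the subtree $c$ in Figure~\ref{fig:rotation} moves down, and only by one level), so there is no ``geometric upward pull'' of mean $\frac{p}{1-p}$ on the insertion point; Lemma~\ref{lem:zig-distance-rotation} bounds where the rotation happens, not how far it moves anything.

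The idea you are missing is that for the converging sequence a rotation can be \emph{performed and still fail to raise the insertion point}, and this happens with probability bounded away from zero independently of the current depth. Because the sequence alternates sides, the node $v$ created by inserting $n+1-i$ hangs off the node $u$ created by inserting $i$ in a zag-zig configuration, and the next insertion point is the inner child of $v$. If the rebalancing after inserting $i$ did not rotate $u$ up (first coin tail, probability $p$) and the rebalancing after inserting $n+1-i$ rotates $v$ up (first coin head, probability $1-p$), then after the rotation the insertion point is a child of $u$ at the \emph{same} depth it had as a child of $v$ (Figure~\ref{fig:zag-zig-RebalanceZig}). Since each insertion adds one to the insertion-point depth and each rebalancing subtracts at most one, every pair contributes a net change of at least $0$, and at least $+1$ with probability $\geq p(1-p)$; summing over the $n/2$ pairs gives the stated $\frac{p(1-p)}{2}n$ by linearity of expectation, with no martingale or stopping-time machinery required. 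Your treatment of $p=1$ is fine; for $p=0$ note the new leaf is rotated up only one level, which splices it into the existing path (it is not moved to the top), but the single-path conclusion still holds.
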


\begin{proof}
  For $p=1$ we do no rebalancing, and the converging sequence results in a single path (see Figure~\ref{fig:unbalanced-insertions}). For $p=0$, the new node is always rotated up onto an existing single path. We let the \emph{insertion point} denote the (external) leaf, where the next insertion is going to create a node $v$. Since the converging sequence is a sequence of finger insertions, the next insertion point is always a child of the created node~$v$ (before rebalancing), i.e., each insertion increases the depth of insertion point by one (before rebalancing). Unfortunately, the rebalancing done by \RebalanceZig does not always decrease the depth one. Consider inserting $i$, where $1\leq i \leq n/2$, that creates a node $u$ followed by inserting $n+1-i$ that creates a node $v$. Assume that the rebalancing after inserting $i$ does not rotate up $u$ (but possibly an ancestor of $u$ has been rotated up, and possibly decreasing the depth of the insertion point by one again), and the insertion of $n+1-i$ causes $v$ to be rotated up. This case is shown in Figure~\ref{fig:zag-zig-RebalanceZig}. We borrow the terminology from splay trees that if a path branches left, we say it is a zig, and if it branches right it is a zag. If a left branch is followed by a right branch it is a zig-zag. We denote the case in Figure~\ref{fig:zag-zig-RebalanceZig} the zag-zig case. In this case the insertion point moves from being a child of $v$ to being a child of $u$, but retains the same depth, i.e., the insertions and \RebalanceZig increased the depth of the insertion point by one. The probability that $u$ was not rotated up is $p$ and the probability that $v$ is rotated up is $1-p$, i.e., the insertion of $i$ and $n+1-i$ causes the depth of the insertion point to increase by one with probability at least $p(1-p)$. It follows that after the insertion of all $n$ elements, the expected depth of the insertion point is at least $\frac{1}{2}np(1-p)$. 
\end{proof}

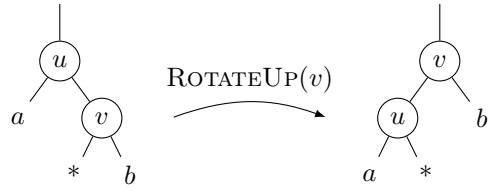
\begin{figure}
  \centering
  \begin{tikzpicture}
    \begin{scope}[xshift=-2.5cm]
      \Tree [ [.{$u$} \node[draw=none] {$a$}; [.{$v$} \node[draw=none] {*}; \node[draw=none] {$b$}; ] ] ]
    \end{scope}
    \draw[-latex] (-1, -1.5) to[bend left=20] node[above, midway] {$\RotateUp{v}$} ++(2, 0);
    \begin{scope}[xshift=2.5cm]
      \Tree [ [.{$v$} [.{$u$} \node[draw=none] {$a$}; \node[missing] {*}; ] \node[draw=none] {$b$}; ] ]
    \end{scope}
  \end{tikzpicture}
  \caption{Bad zag-zig case for {\RebalanceZig}$(v)$, where rotating up $v$ does not decrease the depth of the insertion point *.}
  \label{fig:zag-zig-RebalanceZig}
\end{figure}

If an external leaf has depth~$d$, then the sum of the depths of the $d$ internal nodes on the path to the external leaf is $\sum_{i=0}^{d-1} i=\frac{1}{2}d(d-1)$. The average depth of all nodes in the tree is then at least $\frac{1}{2}d(d-1)/n$, and Theorem~\ref{thm:zig-converging} follows from Lemma~\ref{lem:zig-converging}.

It should be noted that the rotation after an insertion can happen higher in the tree, where there is also a zag-zig case or symmetric zig-zag case, where \RebalanceZig also fails to decrease the depth of the insertion point after the insertion. This explains the gap between the experimental constant observed in Figure~\ref{fig:experiments-comparison} and the theoretical analysis.

\subsection{Pairs Sequences}
\label{sec:zig-pairs}

The pairs sequence consists of $2,1,4,3,6,5,\ldots,n,n-1$. It is essentially an increasing sequence, with pairs $2i-1$ and $2i$ swapped. Pair sequences are not finger sequences. Interestingly, experiments show that \RebalanceZig is challenged by this sequence. In our experimental evaluation, Figure~\ref{fig:experiments-comparison}(a), it appears that $p=\frac{1}{2}$ is a local minima for the average node depth when rebalancing pairs sequences using \RebalanceZig, with increased average node depth for both $p$ smaller than and larger than $\frac{1}{2}$. 
\ShortVersion{In~\cite{fun24arxiv} we prove the following lemma.}%
\LongVersion{In Figure~\ref{fig:zig-pairs} this is quantified as linear for two values of $p\neq \frac{1}{2}$ ($p=\frac{1}{4}$ and $p=\frac{3}{4}$) and about $\sqrt{n}$ for $p=\frac{1}{2}$. Note the very distinct behavior of $p=\frac{1}{4}$ and $p=\frac{3}{4}$: For $p=\frac{1}{4}$ the right path is linear and the left path is constant, whereas for $p=\frac{3}{4}$ it is the opposite.}

\begin{lemma}
  \label{lem:zig-pairs}
  Applying \RebalanceZig to the pairs sequence with $n$ elements, for $n$ even and constant $p\neq \frac{1}{2}$, $0\leq p \leq 1$, the resulting tree has expected average node depth~$\Theta(n)$.
\end{lemma}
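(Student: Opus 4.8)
\emph{Proof proposal.}
The plan is to reduce everything to exhibiting a single external leaf whose \emph{expected} depth is $\Omega(n)$. If an external leaf ends at depth $d$, the $d$ internal nodes on the root-to-leaf path have depths $0,1,\dots,d-1$, so the total of all node depths is at least $\tfrac12 d(d-1)$; since $x\mapsto\tfrac12 x(x-1)$ is convex, $\E{\tfrac12 d(d-1)}\ge\tfrac12\bigl((\E d)^2-\E d\bigr)$, so $\E d=\Omega(n)$ forces expected average node depth $\Omega(n)$, while the matching $O(n)$ bound is trivial as every depth is at most $n-1$. The extreme cases $p\in\{0,1\}$ are handled directly: for $p=1$ no rebalancing happens and the pairs sequence builds the tree whose rightmost path is $2,4,\dots,n$ with $2i-1$ hanging off $2i$ as a left child (average depth $\Theta(n)$); for $p=0$ every inserted node is immediately rotated up and an easy induction, as in Figure~\ref{fig:probability-zero}, shows the result is a single path on $n$ nodes. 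So assume $0<p<1$ with $p\neq\tfrac12$.

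For $0<p<1$ I would maintain, after each pair $(2i,2i-1)$ has been inserted and rebalanced, two structural invariants: (i) the rightmost node of the tree stores $2i$ and lies at the bottom of the right path, and (ii) the node storing $2$ lies at the bottom of a ``left chain'' that hangs off some node of the right path (for $p=0$ this chain is literally the path $1,n,n-1,\dots,2$). Both follow by induction from the BST insertion rule --- the even element always becomes the new rightmost leaf, and the odd element is inserted as the in-order predecessor of the preceding even element, i.e.\ as the rightmost leaf of that even element's left subtree --- together with how \RebalanceZig may restructure things. Having fixed the invariants, track two quantities: the right height $R$ (depth of the rightmost node) and the depth $d_2$ of the node storing $2$.

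The core of the proof is to compute the expected change of the tracked quantity contributed by each of the two insertions of a pair, by conditioning on the distance $\delta$ at which \RebalanceZig performs its rotation (or on the event that it reaches the root and performs none); by Lemma~\ref{lem:zig-distance-rotation} this distance is geometric, $\Prob{\delta=k}=(1-p)p^k$, truncated by reaching the root. A short case check --- which subtree of the rotated node gets relinked to its old parent --- shows that a rotation at distance $0$ raises the tracked depth by $1$, a rotation at distance $1$ leaves it fixed, and a rotation at distance $\ge 2$ lowers it by $1$ (with a little extra care about whether the rotated node lies above the anchor of the left chain). Summing the geometric series, the expected per-insertion change of $R$ comes out with leading behaviour $p(1-2p)$ on the odd element, plus a non-negative $p^{R+\Theta(1)}$ term on the even element. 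Hence for $0\le p<\tfrac12$ each pair raises $\E R$ by at least a constant $c(p)>0$, so $\E R=\Omega(n)$ after $n/2$ pairs and the rightmost external leaf has expected depth $\Omega(n)$. For $\tfrac12<p<1$, $R$ no longer drifts upward; one then runs the mirror argument on $d_2$, i.e.\ on the left chain anchored on the right path, where the small-$R$ regime makes the analogous $p^{R+\Theta(1)}$ correction terms large enough that each pair again raises $\E{d_2}$ by a positive constant, producing an external leaf of expected depth $\Omega(n)$ below the node storing $2$.

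The step I expect to be the main obstacle is precisely this case analysis: where \RebalanceZig stops has several sub-cases, and the landing spot of the odd element of a pair depends on the outcome of the even insertion of the same pair, so in each sub-case one must follow not just the tracked depth but enough of the reshaped tree to re-establish invariants (i)--(ii) for the next pair. The delicate point is controlling the ``bad'' sub-cases where the tracked depth decreases; the argument goes through exactly because for $p\neq\tfrac12$ one of $R$, $d_2$ has a strictly positive drift, whereas at $p=\tfrac12$ the positive and negative contributions cancel to leading order, consistent with the conjectured $O(\sqrt n)$ behaviour there.
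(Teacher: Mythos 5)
Your overall skeleton matches the paper's: handle $p\in\{0,1\}$ directly, reduce the average-depth bound to exhibiting one deep external leaf, and for $0<p<\frac12$ show the right height has strictly positive per-pair drift (the paper's exact drift is $p(2-p)(1-2p)$ plus a non-negative $\Theta(p^{d+1})$ correction, consistent with your $p(1-2p)$ leading term, so that part is sound even though your ``distance $0$ / distance $1$ / distance $\ge2$'' trichotomy is not literally what the case analysis gives --- e.g.\ a rotation at $2i$, at distance $1$ from the inserted node $2i-1$, \emph{does} lower the right height). For $\frac12<p<1$ your endgame differs from the paper's: you track the depth $d_2$ of the single node storing $2$ and claim it drifts up by a constant per pair, whereas the paper shows that a constant fraction of nodes retain $O(1)$ left depth and right depth, forcing the leftmost path to contain $\Omega(n)$ nodes. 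Your route is potentially cleaner (one node plus Jensen, rather than counting nodes on the leftmost path), and the mechanism is real: every rotation at the root's right child pushes the old root, and hence node $2$, down one level, and for large $i$ node $2$ is no longer an ancestor of the insertion point so $d_2$ is essentially non-decreasing.

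The genuine gap is in the phrase ``$R$ no longer drifts upward; \ldots the small-$R$ regime makes the correction terms large enough.'' Non-positive drift is not enough. The probability that a given insertion rotates the root's right child is of order $(1-p)p^{R}$, so to get a constant expected increment of $d_2$ per pair you need $R=O(1)$ with constant probability (equivalently $\E{R}=O(1)$, after which Jensen applied to the convex map $x\mapsto p^x$ gives $\E{p^{R}}\ge p^{\E{R}}=\Omega(1)$). If the drift were merely zero, the reflected walk $R$ would typically be $\Theta\big(\sqrt{n}\big)$ and $p^{R}$ would be exponentially small, and your per-pair gain on $d_2$ would vanish --- this is exactly the boundary behaviour at $p=\frac12$. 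So you must prove that strictly negative drift (for $d\ge d_0$, since the $p^{d+1}$ correction is positive) implies a stationary $O(1)$ bound on the right height; the paper does this with an explicit induction showing $\Prob{R\ge i}\le c^{\,i-d_0}$ for some $c<1$ (its appendix states the general fact as a lemma on biased reflected random walks). Without that concentration step, the claim ``each pair raises $\E{d_2}$ by a positive constant'' is unsupported, and it is the one ingredient your proposal neither supplies nor flags as needing proof. A secondary, fixable issue: your invariant~(ii) and the claim that $d_2$ only changes via root rotations both fail for the first few pairs (node $2$ can itself be rotated up while it is still on or adjacent to the rightmost path), so you need to start the drift accounting only after node $2$ has permanently left the vicinity of the insertion point.
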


\LongVersion{%
\begin{proof}
  For $p=1$, no rebalancing is performed, and the tree is a right path of all even numbers in increasing order, with the odd numbers as offsprings  (see Figure~\ref{fig:unbalanced-insertions}). The average depth is $\Big(\sum_{i=1}^{n/2-1} 2i+n/2\Big)/n=n/4$. For $p=0$, the rebalancing is always performed at the inserted node. It turns that the resulting tree has root 1, with no left child, and $n$ as it right child, that again has no right child, but has the remaining numbers in decreasing order on one leftwards path (see Figure~\ref{fig:pairs-greedy-rotations}). Since this is a single path, the average depth is $(n-1)/2$.

  For the pairs sequence we insert pairs $2i$ and $2i-1$, for $i=1,\ldots,n/2$. We first analyze how the insertion of a pair $2i$ and $2i-1$ influences the right height. Assume the right height before the two insertions is $d$. The change in the right height due to the two insertions is between $-1$ and $+2$, as follows from the following 10 cases. Table~\ref{tab:pairs-probabilities} summarizes the 10 cases. We first insert $2i$, as the new maximum in a new rightmost node. This increasing the right height by one, before applying $\RebalanceZig$, that causes the following cases:
  \begin{itemize}
    \item Inserting $2i$ does not cause any rotations (i.e., the coin flips tail at node $2i$ and existing nodes on the rightmost path, except for the root). This happens with probability $p^{d+1}$. The right height increases by one. Node $2i-1$ is inserted as a left child of $2i$ before applying $\RebalanceZig$. When applying $\RebalanceZig$ after inserting $2i-1$ we have the following subcases:
    \begin{enumerate}
      \item No rotation is performed, because the coin flips tail at $2i-1$, $2i$, and the $d$ previous non-root nodes on the rightmost path. This happens with probability $p^{d+2}$. The rightmost path is not affected.
      \item $2i-1$ is rotated up (above $2i$ on the rightmost path), with probability $1-p$. The right height increases by one.
      \item $2i$ or an ancestor of $2i$ is rotated up, reducing the right height by one (again, after inserting $2i$ increased the height). The happens when the coin flips tail at $2i-1$ and we do not reach the root by $d+1$ additional coin flips showing tail. This happens with probability $p(1-p^{d+1})$.
    \end{enumerate}
    \item Inserting $2i$ rotates $2i$ up with probability $1-p$. The right height remains the same. Node $2i-2$ becomes the left child of $2i$, and $2i-1$ is inserted as a right child of $2i-2$ before applying $\RebalanceZig$. Applying $\RebalanceZig$ causes the subcases:
    \begin{enumerate}
      \setcounter{enumi}{3}
      \item No rotation is performed, because of $d+2$ tail outcomes. This happens with probability $p^{d+2}$, and the right height is unchanged. 
      \item $2i-1$ is rotated up with probability $1-p$, and $2i-1$ becomes a left child of $2i$, and $2i-2$ a left child of $2i-1$. The right height is unchanged.
      \item $2i-2$ is rotated up with probability $p(1-p)$, since we get a tail at $2i-1$ and head at $2i-2$. This causes $2i-2$ to be rotated back on the rightmost path above $2i$, and $2i-1$ becomes the left child of $2i$. The right height increases by one, leaving the tree in the same state as case 1) where no rotations were performed.
      \item A node on the rightmost path is rotated up, and decreases the right height by one. This happens when we have tail at $2i-1$ and $2i-2$, and not tail at the $d$ non-root nodes on the rightmost path. This happens with probability $p^2(1-p^d)$.
    \end{enumerate}
    \item Inserting $2i$ causes a rotation at an ancestor of $2i$. This happens when the coin flips tail at $2i$, and we do not flip additional $d$ tails in a row (otherwise we reach the root that cannot be rotated up). This happens with probability $p(1-p^d)$. The right height is unchanged. Node $2i-1$ becomes a left child of $2i$ before applying $\RebalanceZig$.  Applying $\RebalanceZig$ causes the subcases:
    \begin{enumerate}
      \setcounter{enumi}{7}
      \item No rotation is performed and the right height is unchanged, because of $d+1$ tail outcomes. This happens with probability~$p^{d+1}$.
      \item $2i-1$ rotated up on the path as the parent of $2i$. This happens with probability~$1-p$, and the right height increases by one.
      \item A node on the rightmost path is rotated up, and decreases the right height by one. This happens when we have tail at $2i-1$ and not $d$ additional tails in a row, which happens with probability $p(1-p^d)$.
    \end{enumerate}
  \end{itemize}

  \begin{table}[th]
    \caption{The change in the right height by inserting a pair $2i$ and $2i-1$ using \RebalanceZig, where the right height before the insertions is~$d$. For the change in right weight we state the change in right by insertion $2i$ plus the change by inserting $2i-1$.}
    \label{tab:pairs-probabilities}
    \centering
    \begin{tabular}{clc}
      Case & Probability & Right height \\
      \hline
        1 & $p^{d+1}  \cdot p^{d+2}$      & $+1+0=+1$ \\
        2 & $p^{d+1}  \cdot (1-p)$        & $+1+1=+2$ \\
        3 & $p^{d+1}  \cdot p(1-p^{d+1})$ & $+1-1=+0$ \\
        4 & $(1-p)    \cdot p^{d+2}$      & $+0+0=+0$ \\
        5 & $(1-p)    \cdot (1-p)$        & $+0+0=+0$ \\
        6 & $(1-p)    \cdot p(1-p)$       & $+0+1=+1$ \\
        7 & $(1-p)    \cdot p^2(1-p^d)$   & $+0-1=-1$ \\
        8 & $p(1-p^d) \cdot p^{d+1}$      & $+0+0=+0$ \\
        9 & $p(1-p^d) \cdot (1-p)$        & $+0+1=+1$ \\
       10 & $p(1-p^d) \cdot p(1-p^d)$     & $+0-1=-1$ \\
       \hline
     \end{tabular}
  \end{table}

  Summarizing the expected change in the right height by inserting the pair $2i$ and $2i-1$ in a tree with right height~$d$ using $\RebalanceZig$ is:
  \begin{equation}
    \label{eq:zig-pairs-raw-height}
    \begin{array}l
      -1 \cdot \left((1-p) \cdot p^2\cdot(1-p^d) + p\cdot(1-p^d) \cdot p\cdot(1-p^d) \right) \\
      +0 \cdot \left(p^{d+1} \cdot p\cdot(1-p^{d+1}) + (1-p) \cdot p^{d+2} + (1-p) \cdot (1-p) + p\cdot (1-p^d) \cdot p^{d+1}\right) \\
      +1 \cdot \left(p^{d+1}\cdot p^{d+2} + (1-p)\cdot p\cdot(1-p) + p\cdot(1-p^d)\cdot(1-p)\right) \\
      +2 \cdot \left(p^{d+1}\cdot(1-p)\right) \;.
    \end{array}
  \end{equation}
  This can be reduced to
  \begin{equation}
    \label{eq:zig-pairs-right-height}
    2p^3 - 5p^2 + 2p + p^{d+1}\cdot(1 + 2p - p^2 + (p - 1) p^{d + 1})\;,
  \end{equation}
  where $1\leq 1 + 2p - p^2 + (p - 1) p^{1 + d}\leq 2$ for all $0\leq p\leq 1$ and $d\geq 0$.

  For the case $0<p<\frac{1}{2}$ we prove that the rightmost path has expected length $\Omega(n)$, i.e., the expected average node depth is $\Omega(n)$. By (\ref{eq:zig-pairs-right-height}) the expected increase in the right height by inserting the pair $2i$ and $2i-1$ is at least $2p^3-5p^2+2p=p(p-2)(2p-1)>0$ for $0<p<\frac{1}{2}$. Since we insert $n/2$ pairs, at the end the expected right height is at least $n(p^3-\frac{5}{2}p^2+p)=\Theta(n)$, for constant~$0<p<\frac{1}{2}$.

  For the case $\frac{1}{2}<p<1$, we argue that the leftmost path has expected length $\Omega(n)$, by first arguing that the right height is expected $O(1)$ for each insertion. For $\frac{1}{2}<p<1$, we have $2p^3 - 5p^2 + 2p < 0$ in (\ref{eq:zig-pairs-right-height}). By setting $d_0$ sufficiently large ($d_0 \geq (\lg (-2p^3+5p^2-2p) - 2)/ \lg p - 1$), we have  $2p^{d_0+1}\leq -\frac{1}{2}(2p^3-5p^2+2p)$ and (\ref{eq:zig-pairs-right-height}) is negative for all $d\geq d_0$.

  Letting $p^-$, $p_0$, $p^+$ and $p^{+\!\!+}$ be the probabilities for $-1$, 0, $+1$ and $+2$ in (\ref{eq:zig-pairs-raw-height}) for $d=d_0$ (the probabilities to perform a rotation on the rightmost path increases with the right height), respectively, we have $-p^- + p^+ + 2p^{+\!\!+}<0$. We let $\pi_n^i$ denote the probability that right height is at least $i$ after $n$ pairs have been inserted, that changed the length of the rightmost path. We assume without loss of generality that $p_0=0$ and scale $p^-$, $p^+$ and $p^{+\!\!+}$ by a factor $\frac{1}{1-p_0}$.   Let $\epsilon=p^- - p^+ - 2p^{+\!\!+}>0$. By induction in $n$ we prove that $\pi_n^i \leq c^{i-d_0}$ for $c=1-\epsilon/2$. Note $0< c < 1$. For $i \leq d_0$ this statement is vacuously true, since $c^{i-d_0}\geq 1$, and for $n=0$ we have done no insertions, so $\pi_n^i=0$ for~$i > d_0$. 
    
  For $i>d_0$ and  $n>0$ we have 
  \[
    \pi_n^i = p^{+\!\!+} \cdot \pi_{n-1}^{i-2} + p^+ \cdot \pi_{n-1}^{i-1} + p^- \cdot \pi_{n-1}^{i+1} \; .
  \]
  To show $\pi_n^ i \leq c^{i-d_0}$, it by the induction hypothesis is sufficient to show
  \[
    p^{+\!\!+} \cdot c^{i-2-d_0} + p^+ \cdot c^{i-1-d_0} + p^- \cdot c^{i+1-d_0} \leq c^{i-d_0}\;,
  \]
  or equivalently
  \[
    p^{+\!\!+} + p^+ \cdot c + p^- \cdot c^{3} \leq c^{2}\;.
  \]
  From  $p^-+p^++p^{+\!\!+}=1$ and $p^+ + 2p^{+\!\!+} = p^- - \epsilon$, we have $p^{+\!\!+} = 2p^- - 1 - \epsilon$ and $p^+ = 2 + \epsilon -3p^-$. Inserting this into above, we have
  \[
     (2p^- - 1 - \epsilon) + (2 + \epsilon -3p^-) \cdot c + p^- \cdot c^{3} \leq c^{2}\;,
  \]
  or equivalently
  \[
     p^-\cdot(2-3c+c^3) + (-1 - \epsilon) + (2 + \epsilon) \cdot c \leq c^{2}\;.
  \]
  Since $2-3c+c^3\geq 0$ for $0\leq c\leq 1$, we have that the left side of the above inequality is maximized when $p^-$ is maximized, which is for $p^-=\frac{2+\epsilon}{3}$ (follows from $p^-+p^++p^{+\!\!+}=1$ and $p^+ + 2p^{+\!\!+} = p^- - \epsilon$ by setting $p^+=0$). The equation above holds for all $0\leq \epsilon\leq 1$, $p^- \leq \frac{2+\epsilon}{3}$, and $c=1-\epsilon/2$. It follows that $\pi_n^i \leq (1-\epsilon/2)^{i-d_0}$ and the expected right height is $O(d_0+1)=O(1)$ for all insertions.

  If the expected right height is $O(1)$ for all insertions, then with constant probability a constant fraction $I$ of the insertions has right height $O(1)$ when inserted. We will argue that a constant fraction of $I$ is expected to be on the leftmost path, i.e., the leftmost path has expected length $\Omega(n)$. For a node in $I$ we will consider its lifetime in the tree.

  For a node~$v$ we define its \emph{right depth} to be the number of times the path from the root to $v$ branches right. We define the \emph{left depth} of a node slightly different, it is number of times the path from the root to $v$ branches left, excluding branches on the leftmost path from the root. We define the \emph{branching depth} of a node $v$ to be the depth of the deepest node on the rightmost path to $v$ (possibly $v$, if $v$ is on the rightmost path). When a node from $I$ is inserted it has depth $O(1)$, and its right depth, left depth and branching depth is also $O(1)$. Inserting a pair $2i$ and $2i-1$ performs at most two rotations. Most rotations are left rotations. The only right rotations are in cases 2) and 9), where $2i-1$ is a left child of $2i$ and is rotated onto the rightmost path, and case 6) where the effect of doing the two rotations is the same as doing no rotations at all. It follows that no right rotation can increase the right depth of any node residing in the tree before inserting the pair. Left rotations never increases the right depth. Note $\RotateUp{u}$ in Figure~\ref{fig:rotation} the right depth of $u$ and nodes in its right subtree~$c$ have their right depth decreased by one, whereas the right depth of the nodes $v$ and in subtress $a$ and $b$ remain unchanged.  Node~$v$ and nodes in subtree~$a$ have their left depth increased by one. If $u$ is a node on the rightmost path, the branching level of all nodes in the subtree of $u$ is decreased by one, including $u$ itself. It follows that the right depth stays $O(1)$ after insertion, and the branching level of a node can decrease by a left rotation on the rightmost path, but never increase again. 
  
  We next argue that the left depth of a node in $I$ is expected $O(1)$, i.e., with constant probability a constant fraction $J\subseteq I$ also have left depth $O(1)$. If the bounds on the left and right depths are $\ell$ and $r$, respectively, it follows that each node on the leftmost path together with its right subtree can have at most $2^{\ell+r}$ nodes from $J$. It follows that the leftmost path must have length at least $|J|/2^{\ell+r}=\Omega(n)$ nodes, and overall that the average node depth is~$\Omega(n)$ (the hidden constant depends on the probability parameter $p$; we are oblivious to this here).

  The left depth increases because of left rotations on the rightmost path, or because $2i-1$ is the right child of $2i-2$ and $2i-1$ is rotated up, case 5). Assume there is a rotation on the rightmost path, that causes a node $u$ on the rightmost path to be rotated up, causing the left depth to increase by one for its parent~$v$ (before the rotation) and the nodes in the left subtree of~$v$. This is because a coin turned out head at $u$ with probability $1-p$. In the cases where a coin is flipped at the right child of $v$, with probability $p(1-p)$ we instead would have rotated $v$ up (provided $v$ is not the root), i.e., we will  flip a coin  expected $\frac{1}{p(1-p)}$ times at the right child of $v$ before $v$ is rotated up, and its branching level decreases by one. Since the branching level is bounded by the initial depth, that is $O(1)$ for nodes in $I$, we get that the expected number of times the left depth increases for a node in $I$ is $O\Big(\frac{1}{p(1-p)}\Big)=O(1)$. Similarly in case~5), where we flip a coin $2p-1$ and rotate it up with probability~$1-p$, there is probability~$pp(1-p)$ that we instead would have rotated $2i$ up, i.e., a subcase of case 7), where the branching level is decreased. It follows that 5) can only increase the left depth of a node expected $O(1)$ times before the branching level decreases by one. This completes the case~$\frac{1}{2}<p<1$.
\end{proof}
}

The last inserted element has expected depth $\Theta(n)$ for $0<p<\frac{1}{2}$ and $O(1)$ for $\frac{1}{2}<p<1$, so Lemma~\ref{lem:zig-pairs} does not give any bounds on the expected depth of specific elements. Lemma~\ref{lem:zig-pairs} addresses pairs sequences for $p\neq \frac{1}{2}$, where the expected average node depth is linear. For $p=\frac{1}{2}$ we give the following conjecture, stating that the complexity is significantly different.
\ShortVersion{See~\cite{fun24arxiv} for an experimental and theoretical motivation of the conjecture.}

\begin{conjecture}
  \label{conj:zig-pairs}
  Applying \RebalanceZig to the pairs sequence with $n$ elements, for $n$ even and $p=\frac{1}{2}$, the resulting tree has expected average node depth~$O\left(\sqrt{n}\right)$.
\end{conjecture}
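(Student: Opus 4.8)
Toward a proof, the plan is to bound the expected \emph{sum} of node depths by $O(n^{3/2})$, which after dividing by $n$ yields the stated average. Using the terminology from the proof of Lemma~\ref{lem:zig-pairs}, decompose $\mathrm{depth}(v)=r(v)+\ell(v)+\lambda(v)$, where $r(v)$ is the right depth of $v$, $\ell(v)$ its left depth, and $\lambda(v)$ the number of left branches on the root-to-$v$ path that lie on the leftmost path (so $\lambda(v)$ is at most the length of the leftmost path, i.e.\ the depth of node~$1$, which on the pairs sequence is always a leaf on that path). All the structural facts exploited for $\frac{1}{2}<p<1$ in that proof remain valid at $p=\frac{1}{2}$: the two nodes created by a pair are placed within an additive constant of the current right height; $r(v)$ never increases once $v$ exists; and the branching level of $v$ never increases. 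The goal is to bound $\E{\sum_v r(v)}$, $\E{\sum_v \ell(v)}$ and $\E{\sum_v \lambda(v)}$ separately by $O(n^{3/2})$.

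The key object is the right height. At $p=\frac{1}{2}$ its expected per-pair change given current value $d$ equals, by (\ref{eq:zig-pairs-right-height}), $2p^3-5p^2+2p+p^{d+1}\big(1+2p-p^2+(p-1)p^{d+1}\big)$, whose leading term $p(p-2)(2p-1)$ vanishes, leaving only a nonnegative remainder of order $2^{-\Theta(d)}$. Hence the right height performs a walk on $\{0,1,2,\ldots\}$ with bounded steps, zero drift (up to an exponentially-small-in-$d$ correction), and a reflecting boundary at $0$. Coupling this with an honest zero-drift, bounded-step, reflected walk and applying standard random-walk estimates, the right height after $t$ pairs has expectation $O(\sqrt t)$ (and is $O(\sqrt{t\lg n})$ with high probability), so its maximum over the $n/2$ pairs is $O(\sqrt n)$. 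Summing the expected creation depths over all pairs gives $\sum_{t\le n/2}O(\sqrt t)=O(n^{3/2})$, and since $r(v)$ only decreases after creation this bounds $\E{\sum_v r(v)}$. This step is essentially the random-process statement of Theorem~\ref{thm:process}.

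For $\E{\sum_v \ell(v)}$ I would reuse the charging argument from the $\frac{1}{2}<p<1$ case of Lemma~\ref{lem:zig-pairs}: $\ell(v)$ increases only through a left rotation on the rightmost path above $v$ or through a ``case~5'' right rotation, and each such increase is charged to a later decrease of $v$'s branching level; as the branching level never increases, starts at most at the creation depth of $v$ (which is $O(\sqrt t)$ in expectation for a node created at pair $t$), and each charge is discharged after an expected $O\big(1/(p(1-p))\big)=O(1)$ coin flips, the expected final value of $\ell(v)$ is $O(\sqrt t)$, so the sum is $O(n^{3/2})$. For $\E{\sum_v\lambda(v)}\le n\cdot\E{\text{length of the leftmost path}}$ one would run the mirror-image argument on the leftmost path, showing its expected length is likewise $O(\sqrt n)$.

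The main obstacle lies beyond the scalar right-height walk. First, because the transition probabilities depend on the current height $d$, comparing the walk with a symmetric one needs a genuine coupling/monotonicity argument together with careful handling of the reflecting boundary. More seriously, the left-branch and leftmost-path accounting is entangled with the fine-grained rotation dynamics: runs of ``case~5'' build leftward chains that persist under later insertions, so one must verify they contribute only $O(\sqrt t)$ per node in expectation rather than accumulating across pairs. A rigorous proof therefore seems to require understanding the joint distribution of the whole caterpillar-like tree shape, not merely the right-height statistic, which is why we state only a conjecture.
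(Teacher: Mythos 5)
The statement you are addressing is stated in the paper only as a conjecture: the paper offers no proof, just experimental evidence (the zig-pairs plot) and an informal remark that at $p=\frac{1}{2}$ the right-height evolution ``resembles'' the zero-drift reflected walk of Theorem~\ref{thm:process}, whose expectation is $\Theta\big(\sqrt{n}\big)$. Your proposal is therefore not competing with a proof in the paper; it is a more structured elaboration of the paper's own heuristic, and your self-assessment at the end is accurate: what you have is a plan with genuine gaps, not a proof, which is consistent with the result remaining a conjecture.

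Two of the gaps deserve emphasis. First, the right-height walk is not an instance of Theorem~\ref{thm:process}: it has a $+2$ step (the $p^{+\!\!+}$ case, which the paper explicitly notes the theorem does not cover), transition probabilities that depend on the current height $d$, and a strictly positive drift of order $p^{d+1}$ by (\ref{eq:zig-pairs-right-height}); near the boundary this drift is constant-order, so one needs a genuine stochastic-domination or supermartingale argument (e.g.\ via a potential function absorbing the $\sum_d p^{d+1}$ contribution) rather than a direct comparison with a symmetric walk. Second, and more seriously, an $O(\sqrt{n})$ bound on the right height does not by itself bound the average node depth: the charging argument in Lemma~\ref{lem:zig-pairs} that controls left depth crucially uses that the branching level of a node is $O(1)$ at creation, so each node absorbs only $O(1)$ expected left-depth increases. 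At $p=\frac{1}{2}$ the branching level at creation is $\Theta(\sqrt{t})$, and your claim that the resulting $O(\sqrt{t})$ charges do not compound across pairs (e.g.\ through persistent leftward chains built by repeated case-5 rotations, or through nodes repeatedly re-entering the vicinity of the rightmost path) is exactly the point that is asserted rather than argued. Until the joint dynamics of the tree shape, not just the scalar right-height statistic, are controlled, the $O(n^{3/2})$ bound on the depth sum remains unestablished.
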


\LongVersion{%
The first basis for this conjecture are our experiments. In our simulation, Figure~\ref{fig:zig-pairs}, the average node depth for $p=\frac{1}{2}$ tends to be about $\sqrt{n}$. Secondly, intuitively the change to the length of the rightmost path resembles the simple process studied in Theorem~\ref{thm:process} (but that does not include a $p^{+\!\!+}$ case). Here the expected value is $\Theta\left(\sqrt{n}\right)$, under the assumption~$p^-=p^+>0$. For $p=\frac{1}{2}$, the expected increase in right height by inserting a pair is given by (\ref{eq:zig-pairs-right-height}), that has value between 0 and $3p^{d+1}$. For $d\geq \log_{1/p} n$, the value $3p^{d+1}\leq 3/n$, so the additional increase to the expected right height is only $O(1)$ due to this term if the right height is~$\Omega(\log_{1/p} n)$. How does the right height relate to the average node of all nodes? For $\frac{1}{2}<p<1$, we saw that the tree tended to be a leftmost path because of $\Theta(n)$ rotations at the root. But for $p=\frac{1}{2}$ we do not expect to reach the root that often (if Theorem~\ref{thm:process} applied, it would be $\Theta(\sqrt{n})$). Essentially, we would expect that the right height when a node is created is a good estimate of the nodes final depth, since rotations on the rightmost path have about equal probability to reduce and increase the depth of a node.
}

\section{Algorithm {\RebalanceZigZag}}
\label{sec:zigzag}

To address the shortcomings of algorithm \RebalanceZig in the case where $v$ is in a zig-zag or zag-zig state, we borrow terminology from splay trees~\cite{SleatorTarjan85}, and apply the zig-zag transformation to the tree (see Figure~\ref{fig:zigzag}(right) and~\cite[Figure~3]{SleatorTarjan85}) by rotating up $v$ twice. In the zig-zig case, instead of rotating up $v$, we rotate up the parent of $v$ (see Figure~\ref{fig:zigzag}(left)). These two transformations ensure that everything in the subtree of $v$ is moved one level up in the tree when applying the transformantion at node~$v$. The pseudo-code for algorithm \RebalanceZigZag is shown in Algorithm~\ref{alg:RebalanceZigZag}.

\begin{algorithm}
  \caption{\textsc{RebalanceZigZag}$(v)$}
  \label{alg:RebalanceZigZag}
  \begin{algorithmic}
    \WHILE {$v.p \neq \nil$ \AND coin flip is tail}
      \STATE $v \gets v.p$
    \ENDWHILE
    \IF {$v.p \neq \nil$ \AND $v.p.p \!\neq\! \nil$}
      \IF {$(v \!=\! v.p.l$ \AND $v.p \!=\! v.p.p.l)$ \OR $(v \!=\! v.p.r$ \AND $v.p \!=\! v.p.p.r)$}    
        \STATE{$\RotateUp{v.p}$}  \hfill \textit{$\triangleright$ zig-zig or zag-zag case}      
      \ELSE
        \STATE{$\RotateUp{v}$}  \hfill \textit{$\triangleright$ zig-zag or zag-zig case} \\
        \STATE{$\RotateUp{v}$}
      \ENDIF
    \ENDIF
  \end{algorithmic}
\end{algorithm}

\subsection{Increasing Sequences}
\label{sec:zigzag-increasing}

\RebalanceZigZag handles increasing and decreasing sequences identical to \RebalanceZig, except that rotations happen one level higher, i.e., the trees are identical if ignoring the rightmost inserted node. Equivalently, this corresponds to inserting the next element without rebalancing, and first performing the rebalancing just before inserting the next element. From Theorem~\ref{thm:zig-increasing} we have the following corollary. 

\begin{corollary}
  \label{cor:zigzag-increasing}
   For $0<p<1$, after inserting $n$ elements in increasing or decreasing order using \RebalanceZigZag, each node has expected depth $O\big(1/p\cdot\log_{1/p} n\big)$.
\end{corollary}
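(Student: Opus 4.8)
The plan is to make rigorous the assertion preceding the statement: on an increasing sequence, \RebalanceZigZag behaves exactly like \RebalanceZig with the rebalancing of each element delayed until just before the next insertion. Write $x_1<x_2<\cdots<x_n$ for the insertion order, let $Z_i$ denote the (random) tree produced by \RebalanceZig after inserting $x_1,\dots,x_i$, and let $W_i$ denote the tree produced by \RebalanceZigZag after inserting $x_1,\dots,x_i$. I would prove, by induction on $i$ and under an explicit coupling of the coin flips, that $W_i$ equals $Z_{i-1}$ with $x_i$ appended at the (forced) rightmost leaf position, i.e.\ $W_i = Z_{i-1}+x_i$. The base case $i=1$ is immediate since neither algorithm rotates the root, and once the claim is established the corollary drops out.

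The structural observation driving the induction is that for an increasing sequence the newly inserted leaf is always the rightmost node, and every node on the path from it up to but not including the root is a right child; hence every iteration of the while-loop of \RebalanceZigZag is a ``zag-zag'' step. Therefore, if the coin first shows head after $j$ tails, \RebalanceZigZag performs a single left rotation of the \emph{parent} of the node at distance $j$ from the new leaf $x_i$, i.e.\ of the node at distance $j$ from $x_{i-1}$, unless that node is the root or a child of the root, in which case it does nothing. Assuming $W_{i-1}=Z_{i-2}+x_{i-1}$, inserting $x_i$ places it as the right child of $x_{i-1}$ at the bottom of the rightmost path; the rotation just described then acts on exactly the same subtree, and is exactly the rotation that \RebalanceZig performs one insertion earlier when \emph{its} first head also occurs after $j$ tails — the only difference being that $x_i$ dangles one level below and is carried along unchanged. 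So I would couple the $k$-th coin flip of \RebalanceZigZag's $i$-th insertion with the $k$-th flip of \RebalanceZig's $(i-1)$-th insertion; the one wrinkle is that, when the walk reaches a child of the root, \RebalanceZigZag flips one extra coin with no counterpart, but this flip can never trigger a rotation, so it may be taken fresh and independent and leaves the induction intact — in particular the ``no rotation'' events of the two algorithms coincide and carry equal probability. The degenerate case where $Z_{i-2}$ is empty (so $x_{i-1}$ is the root and \RebalanceZigZag on $x_i$ can never rotate) is checked directly.

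Granting $W_n = Z_{n-1}+x_n$ in distribution, the bound follows at once. Every node other than $x_n$ has the same depth in $W_n$ as in $Z_{n-1}$, since attaching a leaf changes no other node's depth, and by Lemma~\ref{lem:zig-increasing} this expected depth is $O\big(1/p\cdot\log_{1/p} n\big)$. The node $x_n$ sits at depth one greater than the right height of $Z_{n-1}$, which by Lemma~\ref{lem:zig-right-height} — combined with the trivial bound that the right height never exceeds $n-2$ — has expectation $O\big(\log_{1/p} n\big)$. Hence every node of $W_n$ has expected depth $O\big(1/p\cdot\log_{1/p} n\big)$, and the decreasing case is obtained by reflecting left and right throughout. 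The part I expect to be fiddliest is not conceptual but the bookkeeping in the induction: tracking the off-by-one in the coin flips and checking that the several ways of performing no rotation — \RebalanceZig reaching the root versus \RebalanceZigZag reaching a child of the root, plus the empty-subtree edge case — line up so that the hypothesis $W_i = Z_{i-1}+x_i$ is reproduced with exactly the right probabilities.
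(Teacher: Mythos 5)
Your proposal is correct and is essentially the paper's own argument made rigorous: the paper asserts in one sentence that on increasing sequences every step is a zag-zag case, so \RebalanceZigZag produces the \RebalanceZig tree with the last leaf appended (rebalancing deferred by one insertion), and then invokes Lemma~\ref{lem:zig-increasing}. Your coupling of the coin flips, the handling of the extra flip at a child of the root, and the final appeal to Lemmas~\ref{lem:zig-right-height} and~\ref{lem:zig-increasing} simply supply the bookkeeping the paper leaves implicit.
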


\subsection{Finger Sequences}
\label{sec:zigzag-finger}

We will prove that using \RebalanceZigZag to rebalance finger sequences (like increasing, decreasing and converging sequences), ensures that the resulting tree is expected to be balanced, for $p$ sufficiently large. Recall that a finger sequence is defined such that the next element is always the immediate predecessor or successor of the most recently inserted element among all elements inserted so far. In an unbalanced search tree, this means that the next node will be a (left or right) child of the most recently inserted node, i.e., the resulting tree is always a path. We denote the external leaf where to create the next node the \emph{insertion point}. The crucial property of the restructuring done by \RebalanceZigZag in Figure~\ref{fig:zigzag} is that if the insertion point is at an external leaf in the subtree rooted at $v$ before the rotation, then the depth of this external leaf is reduced by exactly one in both cases. 

\begin{lemma}
  \label{lem:zigzag-finger}
  After inserting a finger sequence with $n$ elements using \RebalanceZigZag, each node has expected depth $O(\lg n)$, for $\frac{1}{2} \left(\sqrt{5} - 1\right)<p<1$.
\end{lemma}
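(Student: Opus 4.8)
The plan is to follow a node~$u$ through its life: first bound by $O(\lg n)$, with high probability, the depth at which $u$ is created, and then bound the expected number of later insertions that push~$u$ one level deeper.

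For the creation depth I would study the depth of the \emph{insertion point}, the external leaf where the next node will be created. Since we insert a finger sequence, creating a node at the insertion point and descending to one of its children increases the insertion point's depth by one before rebalancing, while by the property of \RebalanceZigZag noted above the subsequent restructuring, whenever it is performed, decreases that depth by exactly one again. No restructuring is performed exactly when the coin-flip walk started at the freshly inserted node stops at a node without a grandparent, i.e.\ at depth at most~$1$; if the insertion point currently has depth~$d$ this happens exactly when the first $d-1$ coin flips are tail, which has probability~$p^{d-1}$. Hence the depth of the insertion point is monotone non-decreasing over the $n$ insertions and increases from level~$d$ only with probability~$p^{d-1}$, so the union-bound argument of Lemma~\ref{lem:zig-right-height} applies verbatim: with probability $1-1/n^{c}$ the insertion point — and therefore the entire root-to-insertion-point path, the ``spine'' — stays at depth $O\big(\log_{1/p} n\big)$ throughout, so in particular every node is created at depth $O\big(\log_{1/p} n\big)$ with high probability.

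Next I would record the effect of one restructuring on an already present node~$u$: in both the zig-zig step $\RotateUp{v.p}$ and the zig-zag step (two rotations of~$v$, with~$v$ the node where the walk stopped), the only nodes whose depth increases are $v.p.p$ and the nodes of the subtree of $v.p.p$ lying on the side opposite to $v.p$, each by exactly one, while every other node keeps its depth or moves up. Thus each insertion can deepen~$u$ by at most one, and only when $u$ is the grandparent of the walk's stopping node, or lies below and to the outside of that grandparent. To bound the expected number of such deepenings I would mirror the bookkeeping from the proof of Lemma~\ref{lem:zig-pairs}, following the \emph{branching depth} of~$u$ (the depth of the deepest ancestor of~$u$ on the spine) together with the \emph{offset} of $u$ below its branching ancestor. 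A deepening needs a coin to show head at a node~$w$ reached by the walk just past~$u$'s branch; had that coin shown tail, the walk would have climbed one step further and rotated up a node that strictly decreases~$u$'s branching depth. Since the branching depth never increases except when the spine grows, and the spine has length $O\big(\log_{1/p} n\big)$ with high probability, the coin at~$w$ shows tail only $O(\lg n)$ times over the whole run (with high probability), so the expected number of times it shows head — hence the expected number of deepenings of~$u$ — is $O(\lg n)$ times a geometric sum in the distance the walk must climb to reach~$w$, with ratio $(1-p)/p^{2}$: $p^{2}$ to climb the two extra levels that make the stopping node's grandparent equal to~$u$, against~$1-p$ for the head that performs the push. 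That series converges exactly when $(1-p)/p^{2}<1$, i.e.\ when $p>\frac{1}{2}\left(\sqrt{5}-1\right)$.

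Putting the pieces together, with high probability~$u$ is created at depth $O\big(\log_{1/p} n\big)$ and is then deepened an expected $O(\lg n)$ further times, so its expected depth is $O(\lg n)$; the complementary $1/n^{c}$-probability event adds only $O(1)$ to the expectation, since the depth never exceeds~$n-1$. I expect the third step to be the main obstacle: unlike increasing or decreasing sequences, where rotations stay on a single spine, a general finger sequence lets the finger oscillate and revisit a region (as in the converging sequence), so the subtree containing~$u$ can be disturbed repeatedly, and the $\frac{1}{2}\left(\sqrt{5}-1\right)$ threshold is precisely where the geometric series counting those repeated disturbances ceases to converge.
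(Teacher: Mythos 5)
Your proposal matches the paper's proof in all essentials: the same high-probability bound on the insertion-point depth (the paper's ``insertion depth'', with the exponent dropping from $p^{d+1}$ to $p^{d}$ because the transformation needs a grandparent), the same observation that a restructuring deepens only $w=v.p.p$ and its far subtree $d$, and the same comparison of the increase probability $1-p$ against the decrease probability $p^{2}$, which is exactly where the threshold $p>\frac{1}{2}\left(\sqrt{5}-1\right)$ comes from. The paper formalizes your concluding ``geometric sum with ratio $(1-p)/p^{2}$'' as the negatively drifting reflected random walk of Lemma~\ref{lem:exp-decrease} (whose stationary bound is precisely $\pi^i\leq\bigl((1-p)/p^{2}\bigr)^i$), but the argument is the same.
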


\begin{proof}
  The proof follows the same idea as in Section~\ref{sec:zig-increasing} for the analysis of \RebalanceZig on increasing sequences. Instead of right height we consider \emph{insertion depth}, i.e., the depth $d$ of the parent node of the insertion point.
\ShortVersion{See~\cite{fun24arxiv} for proof details.}%
\LongVersion{%
  Similar to Lemma~\ref{lem:zig-right-height-increase}, \RebalanceZigZag ensures that the insertion depth stays $d$ or increases to $d+1$ with probability $1-p^d$ and $p^d$, respectively. Note the change from $p^{d+1}$ to $p^d$, since we can only apply the transformations in Figure~\ref{fig:zigzag} at nodes with at least two ancestors, whereas Lemma~\ref{lem:zig-right-height-increase} only required at least one ancestor. Similar to Lemma~\ref{lem:zig-right-height}, we can prove that the insertion depth   is at most $\ceil{(c+1)\cdot\log_{1/p} n}$ with probability $1-1/n^c$, for any constant $c > 0$. The only difference compared to the proof of Lemma~\ref{lem:zig-right-height} is that we have the constraint $np^d \leq n^{-c}$ (instead of $np^{d+1} \leq n^{-c}$), but the same bound is a solution to this constraint.

  Finally, we need to rephrase Lemma~\ref{lem:zig-increasing} in terms of \emph{insertion path}, i.e., the path from the insertion point to the root, instead of the rightmost path. Similarly as in the proof of  Lemma~\ref{lem:zig-increasing}, a node $v$ goes through five phases: 1) not inserted yet; 2) created as a leaf; 3) on insertion path (can monotonically be moved closer to root by rotations at ancestors); 4) rotated out of insertion path; and 5) pushed up or down by rotations on the insertion path. Given the high probability bound on the insertion depth, after phases 1)--4) $v$ has depth $O\big(1/p\cdot\log_{1/p} n\big)$, with high probability. If a node is pushed down in the tree during phase~5), it is because it resides in the offspring~$d$ to the insertion path in both cases in Figure~\ref{fig:zigzag}, and the coin turned out head with probability~$1-p$ at the grandchild~$v$ of the branching node $w$ to $d$ on the insertion path. Similarly, the depth of $v$ decreases exactly when the rebalancing is performed at $w$ or an ancestor of~$w$. This happens with probability~$p^2$ if a coin is flipped at both $v$ and $u$ showing tail (with probability $p(1-p)$ the coin shows tail at $v$ and head at $u$, where the rebalancing is performed at $u$, but then the depth of the subtree $d$ is unchanged). It follows that the depth of node not on the insertion path can increase with probability~$p^+=1-p$ and decrease with probability $p^-=p^2$ for those insertions flipping a coin at the grandchild~$v$ on the insertion path of the branching node $w$ to $d$. There are two cases that we safely can ignore: If the branching node is the parent of the insertion point, then the rebalancing can decrease the depth of the node, but never increase it. The other case is if the rebalancing reaches the root without rotations. In the above we assumed this would decrease the depth of the node with probability $p^-$, but the depth of the node remains unchanged. But since the depth of the insertion path increases in this case, we have an upper bound of $O\big(1/p\cdot\log_{1/p} n\big)$ on how often this can happen, and can contribute to an increased depth of the node. 
  
  We will use Lemma~\ref{lem:exp-decrease} to bound the additional increase in the depth of a node. Since we assume $p>\frac{1}{2} \left(\sqrt{5} - 1\right)$, we have $p^+=1-p<p^2=p^-$ and the lemma applies. Actually, the number of times we flip a coin at a grandchild of the branching node is a stochastic variable depending on the tree structure (Lemma~\ref{lem:exp-decrease} assumes we consider exactly $n$ flips), so the lemma does not completely apply. But from the proof of the lemma it follows that, since $p^+<p^-$, the probability that the depth increases by $\Delta$ is exponentially decreasing in $\Delta$, i.e., with high probability the increase in depth is at most $O(\lg n)$ for up to $n$ insertions. 
}
\end{proof}

In our experiments, see Figure\ref{fig:experiments-comparison}(c), it shows up that \RebalanceZigZag performs better for $p>\frac{1}{2}$ than $p<\frac{1}{2}$ on the converging sequence, that is an example of a finger sequence. We leave open the question what the dependency on $p$ is for \RebalanceZigZag for $0<p\leq \frac{1}{2} \left(\sqrt{5} - 1\right)$.

\begin{figure}
  \centering
  \begin{tikzpicture}
    \begin{scope}[xshift=-1.8cm]
      \Tree [ .{$w$} [.{$u$} [.{$v$} \node[draw=none] {$a$}; \node[draw=none] {$b$}; ] \node[draw=none] {$c$}; ] \node[draw=none] {$d$}; ]
    \end{scope}
    \draw[-latex] (-1, 0) to[bend left=20] node[above, midway,align=center] {\small $\RotateUp{v.p}$} ++(2, 0);
    \begin{scope}[xshift=1.6cm, yshift=-3ex]
      \Tree [ .{$u$} [.{$v$} \node[draw=none] {$a$}; \node[draw=none] {$b$}; ] [.{$w$} \node[draw=none] {$c$}; \node[draw=none] {$d$}; ] ]
    \end{scope}
  \end{tikzpicture}
  \hfill
  \begin{tikzpicture}
    \begin{scope}[xshift=-1.8cm]
      \Tree [ .{$w$} [.{$u$} \node[draw=none] {$a$}; [.{$v$} \node[draw=none] {$b$}; \node[draw=none] {$c$}; ] ] \node[draw=none] {$d$}; ]
    \end{scope}
    \draw[-latex] (-1, 0) to[bend left=20] node[above, midway,align=center] {\small $\RotateUp{v}$ \\ \small $\RotateUp{v}$} ++(2, 0);
    \begin{scope}[xshift=1.5cm, yshift=-3ex]
      \Tree [ .{$v$} [.{$u$} \node[draw=none] {$a$}; \node[draw=none] {$b$}; ] [.{$w$} \node[draw=none] {$c$}; \node[draw=none] {$d$}; ] ]
    \end{scope}
  \end{tikzpicture}
  \caption{The rebalancing performed by algorithm {\RebalanceZigZag}$(v)$ in  (left) zig-zig case and (right) zig-zag case.}
  \label{fig:zigzag}
\end{figure}
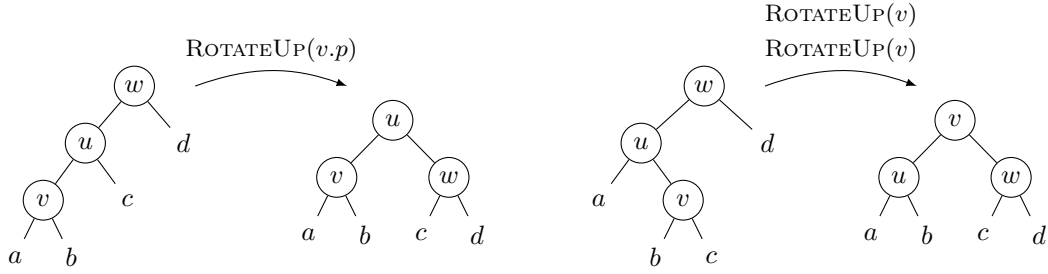

\subsection{Pairs Sequences}
\label{sec:zigzag-pairs}

While \RebalanceZigZag achieves better average node depth on converging sequences compared to \RebalanceZig, it fails on pairs sequences, where the expected average node depth is linear for all values $0 \leq p \leq 1$ (for $p=\frac{1}{2}$ this is worse than \RebalanceZig, if our conjecture turns out to be true).
\ShortVersion{In~\cite{fun24arxiv} we prove the following lemma.}

\begin{lemma}
  \label{lem:zigzag-pairs}
  Applying \RebalanceZigZag to the pairs sequence with $n$ elements, for $n$ even and $0 \leq p \leq 1$, the resulting tree has expected average node depth~$\Theta(n)$.
\end{lemma}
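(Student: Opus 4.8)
The plan is to analyze how \RebalanceZigZag processes a pair $2i,2i-1$ and show that, regardless of $p$, some linear-length path survives. First I would set up notation mirroring the \RebalanceZig pairs analysis: insert $2i$ as the new rightmost leaf (increasing the right height by one before rebalancing), then insert $2i-1$ as a child of the node currently playing the role of $2i$'s neighbour. The key structural observation is that \RebalanceZigZag always lifts the entire subtree of the chosen node by exactly one level, but it requires two ancestors to act, so the rightmost path can only be shortened when the coin produces a tail at a node with a grandparent on the rightmost path. I would enumerate the handful of cases for the combined insertion of $2i$ and $2i-1$ (analogous to Table~\ref{tab:pairs-probabilities}, but now with the zig-zig versus zig-zag split), tracking the net change to the right height as a function of $d$ and $p$. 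The upshot I expect: for every fixed $0\le p\le 1$ the expected net change in right height per pair is bounded below by a positive constant once $d$ exceeds a $p$-dependent threshold is false in general, so the argument must instead follow the $\frac12<p<1$ strategy of Lemma~\ref{lem:zig-pairs} — bound the right height in expectation by $O(1)$ and then show the \emph{leftmost} path grows linearly.

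Concretely, I would split into the same three regimes. For the boundary cases $p=0$ and $p=1$ the tree is (essentially) a single path or a right spine with pendant leaves, giving average depth $\Theta(n)$ directly by inspection, exactly as in the \RebalanceZig proof. For $0<p<\tfrac12$ (and symmetrically a range near $1$, if the asymmetry of pairs makes one side behave like \RebalanceZig's $0<p<\tfrac12$ and the other like $\tfrac12<p<1$) I would show the right height drifts upward by a positive expected amount per pair via the reduced case analysis, so after $n/2$ pairs the rightmost path has expected length $\Omega(n)$. For the remaining range I would reuse the potential/drift machinery from Lemma~\ref{lem:zig-pairs}: define right depth, left depth, and branching depth of a node; observe that left rotations never increase right depth and right rotations in the pairs process only occur when $2i-1$ is lifted onto the spine; conclude the right depth of an early-inserted node with small initial depth stays $O(1)$; then bound the expected number of times its left depth increases by a geometric argument (each left-depth increase at a branching node is "paid for" by an expected $O(1)$ future rotations that decrease the branching depth, which is monotone non-increasing and bounded by the initial depth). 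Counting nodes with bounded left and right depth forces the leftmost path to have length $\Omega(n)$, hence average depth $\Omega(n)$; the matching $O(n)$ upper bound is trivial since depth is at most $n-1$.

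The main obstacle I anticipate is that \RebalanceZigZag, unlike \RebalanceZig, moves the \emph{whole} subtree of the acting node up, which couples the fate of the rightmost and leftmost paths more tightly and complicates the case analysis for a pair: inserting $2i-1$ as a child of $2i$ (zig-zag shape) versus as the right child of $2i-2$ (after $2i$ was lifted) produces genuinely different two-rotation effects, and one must check carefully that none of these introduces a mechanism to repeatedly shorten the leftmost path. A secondary subtlety is whether the pairs process is symmetric enough that a single $p$-range covers all $p\in(0,1)$, or whether — because $2i-1$ always attaches on the left of $2i$ — one gets a genuine dichotomy at some $p$ value and needs two separate drift computations. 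I would expect the cleanest route is to establish the $\Omega(n)$ lower bound on the length of whichever spine (left or right) is favoured by the sign of the leading polynomial in $p$ arising from the case analysis, in complete analogy with equations~\eqref{eq:zig-pairs-raw-height}–\eqref{eq:zig-pairs-right-height}, and to verify the endpoints $p=0,1$ and $p=\tfrac12$ by direct computation of the resulting tree shape.
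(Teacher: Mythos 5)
Your overall strategy is the right one and matches the paper's: handle $p=0$ and $p=1$ by inspecting the resulting trees, compute the per-pair drift of the right height, and then reuse the left-depth/right-depth/branching-depth machinery from Lemma~\ref{lem:zig-pairs} to force a linear leftmost path. However, there is a concrete gap: the decisive computation is never carried out, and the case split you tentatively propose is wrong for \RebalanceZigZag. You suggest that for $0<p<\tfrac12$ (or some regime determined by "the sign of the leading polynomial in $p$") the right height drifts \emph{upward} and the rightmost path itself becomes linear, as it does for \RebalanceZig. For \RebalanceZigZag no such regime exists. The paper's case analysis gives, for right height $d$ before the pair,
\[
  p^+ = p^d\bigl((1-p)+p^{d+1}\bigr), \qquad
  p^- = (1-p^d)\,p\,(1-p^{d-1}),
\]
so once $d$ exceeds a modest $p$-dependent threshold (e.g.\ $p^{d-1}\le\tfrac14$) one has $p^+\le\tfrac14 p$ and $p^-\ge\tfrac{9}{16}p$, i.e.\ the drift is strictly negative for \emph{every} $0<p<1$. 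The reason is structural: inserting $2i$ already triggers a rotation on the rightmost path with probability $1-p^d$ (close to $1$ for large $d$), and a second shortening can occur at $2i-1$'s insertion; the height can only grow when \emph{no} rotation fires at either insertion or when $2i-1$ is lifted onto the spine, all of which require an initial run of tails of length about $d$. This uniform negative drift (combined with Theorem~\ref{thm:process}/Lemma~\ref{lem:exp-decrease}) is exactly what lets the leftmost-path argument cover all of $0<p<1$ in one stroke; without establishing it, your proof has no way to decide which of your two sub-arguments applies for a given $p$, and the sub-argument you sketch for small $p$ would fail if attempted.

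A second, smaller point: you correctly anticipate that the zig-zag versus zig-zig distinction complicates the pair analysis, but the step that actually closes the argument in the paper --- showing that the \emph{left} depth of an early node increases only an expected $O(1)$ number of times before its (monotonically non-increasing) branching level drops --- is asserted by analogy to Lemma~\ref{lem:zig-pairs} rather than re-derived for the two-rotation transformations of Figure~\ref{fig:zigzag}. Since you yourself flag that \RebalanceZigZag moves whole subtrees and "couples the fate of the rightmost and leftmost paths more tightly," this is precisely the place where the analogy needs to be checked, not just invoked.
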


\LongVersion{%
\begin{proof}
  For $p=1$, no rotations are performed and the rightmost path consists of all even numbers (Figure~\ref{fig:unbalanced-insertions}), and the average node depth is $\Theta(n)$. If $p=0$, the rebalancing is always performed at the inserted node, and the resulting tree will be a leftmost path with $n-1$ nodes $1, 2, \ldots, n-1$, and $n$ as the right child of the root $n-1$ (see Figure~\ref{fig:pairs-greedy-rotations}). Again, the average node depth is~$\Theta(n)$.

  For $0<p<1$, when inserting the pair $2i$ and $2i-1$, $2i$ is first inserted as a rightmost child, and an ancestor of $2i$ is potentially rotated up (as a zag-zag case) with probability~$1-p^d$, where $d$ is the right height before inserting the pair. The right height after inserting $2i$ with rebalancing is $d$ with probability~$1-p^d$ and height $d+1$ with probability~$p^d$. After inserting~$2i-1$ as the left child of $2i$, the rebalancing can happen at node $2i-1$ (zag-zig case) with probability~$1-p$, where $2i-1$ replaces $2i-2$ on the rightmost path. The right height is unchanged. If no rebalancing is done, the height remains unchanged, with probability~$p^{d'}$, where $d'$ is the right before inserting $2i-1$ ($d'=d$ or $d'=d+1$). Finally, the right height can be reduced to $d'-1$, if a rebalancing is performed at $2i$ or an ancestor, that happens with probability~$p(1-p^{d'-1})$. We let $p^+$ and $p^-$ denote the probability, that the right height increases and decreases by one due to the insertion of the pair, respectively. Since the height only increases when no rotation is performed when inserting $2i$, and when inserting $2i-1$ no rotation is performed or the rebalancing happens at $2i-1$. We have
  \[
    p^+ = p^d\cdot\big((1-p) + p^{d+1})\big) \; .
  \]
  For the right height to decrease, rotations on the rightmost path must happen both during the insertion of $2i$ and $2i-1$, that happens with probability
  \[
    p^- = (1-p^d) \cdot p(1-p^{d-1}) \; .
  \]
  For $p^{d-1}\leq \frac{1}{4}$, i.e., $d \geq 1 + \frac{2}{\lg (1/p)}$, we have $p^+\leq\frac{1}{4}p$ and $p^-\geq\frac{9}{16}p$, and the expected change in right height is $p^+ - p^-\leq \left(\frac{1}{4} - \frac{9}{16}\right)p = -\frac{5}{16}p$. Similarly to Lemma~\ref{lem:zig-pairs}, applying Theorem~\ref{thm:process} it follows that the expected right height is $O(1)$. Since the rotations performed during rebalancing only increases the depth of nodes by left rotations, and this happens at most expected $O(1)$ times for each branching level of the node, before the branching level is decreased, we have that with constant probability a constant fraction of the inserted elements have constant left and right depth (using the definition of left depth from Lemma~\ref{lem:zig-pairs}, where left branches on the leftmost path from the root do not count towards the left depth of a node). It follows that with constant probability the leftmost path has length $\Omega(n)$, i.e., the expected average node depth is~$\Theta(n)$.
\end{proof}

\begin{figure}
  \centering
    \tree{[.1 \edge[missing]; \node[missing] {}; [.6 [.5 [.4 [.3 [.2 ] \edge[missing]; \node[missing] {}; ] \edge[missing]; \node[missing] {}; ] \edge[missing]; \node[missing] {}; ] \edge[missing]; \node[missing] {}; ] ]}
    \hspace{5em}
    \tree{[.5 [.4 [.3 [.2 [.1 ] \edge[missing]; \node[missing] {}; ] \edge[missing]; \node[missing] {}; ] \edge[missing]; \node[missing] {}; ] [.6 ] ]}  
  \caption{Applying \RebalanceZig (left) and \RebalanceZigZag (right) on the pairs sequence $2,1,4,3,6,5$ with $p=0$.}
  \label{fig:pairs-greedy-rotations}
\end{figure}
}

\section{Random Permutations}
\label{sec:random-permutation}

We do not prove anything for random permutations, for neither \RebalanceZig nor \RebalanceZigZag. If $p=1$ no rebalancing is performed, and it is known that the expected depth of a node is $O(\lg n)$~\cite{BoothCollin60,Hibbard62,Windley60}, whereas for $p=0$, a rotation is always performed at the inserted leaf, and the tree will always be a single path. How exactly the average node depth depends on $p$ is an open problem. In the experiments, see
\ShortVersion{\cite{fun24arxiv}}%
\LongVersion{Figure~\ref{fig:experiments-permutations}}%
, it appears that \RebalanceZigZag is ``about'' logarithmic for $p \geq 0.7$. 

\section{Experimental Evaluation}
\label{sec:experiments}

\ShortVersion{In this section we present an experimental evaluation of our algorithms \RebalanceZig and \RebalanceZigZag. See~\cite{fun24arxiv} for more experimental results.}

\LongVersion{%
In this section we present an experimental evaluation of our two algorithms \RebalanceZig and \RebalanceZigZag, and also consider a third algorithm \RebalanceZigZig. Algorithm \RebalanceZigZig is identical to \RebalanceZigZag, except in the zig-zig and zag-zag cases. Algorithm \RebalanceZigZag in these cases only performs one rotation. In algorithm \RebalanceZigZig we instead perform two rotations, by first rotating up the parent of the node, and then the node itself --- like the zig-zig and zag-zag cases in splay trees~\cite[Figure~3]{SleatorTarjan85}. See Figure~\ref{fig:zigzig}.

\begin{figure}
  \centering
  \begin{tikzpicture}
    \begin{scope}[xshift=-3cm]
      \Tree [ .{$w$} [.{$u$} [.{$v$} \node[draw=none] {$a$}; \node[draw=none] {$b$}; ] \node[draw=none] {$c$}; ] \node[draw=none] {$d$}; ]
    \end{scope}
    \draw[-latex] (-1, -1.5) to[bend left=20] node[above, midway,align=center] {\small $\RotateUp{v.p}$ \\ \small $\RotateUp{v}$} ++(2, 0);
    \begin{scope}[xshift=3cm]
      \Tree [ .{$v$} \node[draw=none] {$a$}; [.{$u$} \node[draw=none] {$b$}; [.{$w$} \node[draw=none] {$c$}; \node[draw=none] {$d$}; ] ] ]
    \end{scope}
  \end{tikzpicture}
  \caption{The rebalancing performed by algorithm {\RebalanceZigZig} at $v$ in the zig-zig case.}
  \label{fig:zigzig}
\end{figure}
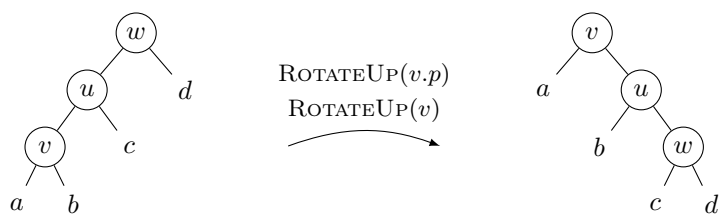
}

We implemented the algorithms in Python~3.12, and ran the algorithms with different choices of $p$ on the types of insertion sequences listed in Table~\ref{tab:sequences} with sequence lengths being powers of two. Each data point in 
\ShortVersion{Figure~\ref{fig:experiments-comparison}}%
\LongVersion{Figures~\ref{fig:experiments-comparison}, \ref{fig:experiments-permutations}, \ref{fig:depth-profiles}}
is the average over 25~runs. The increasing, decreasing and converging sequences are examples of finger insertions. Inserting the pairs, bitonic and runs sequences into a search tree without rebalancing result in identical search trees (see Figure~\ref{fig:unbalanced-insertions}). Note that the first half of the bitonic sequence is an increasing sequence, whereas the second part evenly distributes the remaining elements into the created leaves right-to-left. The runs sequences is identical to the bitonic sequences, except that the second part is performed left-to-right.

Figure~\ref{fig:experiments-comparison} shows our main experimental findings. It shows the resulting average node depths of running the algorithms on the different types of insertion sequences from Table~\ref{tab:sequences} with insertion sequences of length between two and 1024 and various values of $p$ in the range zero to one. Note that for a path with $n$ nodes, the root has depth~0 and the bottommost node depth $n-1$, i.e., the average depth is $\frac{1}{n}\sum_{d=0}^{n-1}=\frac{1}{n}\cdot \frac{n(n-1)}{2}=\frac{n-1}{2}$. For $n=1024$, an average node depth of 511.5 implies that the tree is a path. In Figure~\ref{fig:experiments-comparison}(a) this explains why all curves share the top-left point, since \RebalanceZig always generates a path when $p=0$, independent of the insertion sequence, as discussed in Section~\ref{sec:zig}. In Figure~\ref{fig:experiments-comparison}(left) the rightmost data point ($p=1$) for random permutations corresponds to the average node depth in unbalanced binary search trees. Note that the pairs, bitonic and runs insertion sequences end up with different average node depth characteristics for each of the three algorithms (dashed curves in Figure~\ref{fig:experiments-comparison}), even that they would generate the same trees without rebalancing.

Figure~\ref{fig:experiments-comparison}(a, b) clearly shows that \RebalanceZig has problems with the converging sequences (consistent with Theorem~\ref{thm:zig-converging}); 
Figure~\ref{fig:experiments-comparison}(c, d) that \RebalanceZigZag has problems with the pairs sequences (consistent with Theorem~\ref{thm:zigzag-pairs})%
\ShortVersion{.}%
\LongVersion{; and Figure~\ref{fig:experiments-comparison}(e, f) that \RebalanceZigZig has problems with increasing (and many other) sequences. An intuitive argument why \RebalanceZigZig does not achieve good performance on increasing subsequences is that the insertion point is always the rightmost leaf and all nodes on the path are zag-zag cases. In zag-zag cases, algorithm \RebalanceZigZig performs two rotations moving the insertion point one level towards the root by the insertion, i.e., the insertion point is moved towards the root and all rotations will be left rotations close to the root and likely pushing many elements one level further down.}

\LongVersion{%
Figure~\ref{fig:experiments-permutations} focuses on inserting random permutation sequences, where it is known that no rebalancing gives trees with expected logarithmic node depths~\cite{BoothCollin60,Hibbard62,Windley60}. Here we consider $n$ up to $2^{20}$ and $\frac{1}{2} \leq p \leq 1$. The average node depth decreases with increasing $p$, and for about $p \geq 0.6$ \RebalanceZigZag achieves the best performance of the three algorithms.

Figure~\ref{fig:depth-profiles} shows node depth profiles of the three algorithms. Each curve is the sum of generating 100 trees and computing the total number of nodes with each depth. The plot again shows how \RebalanceZig and \RebalanceZigZag have problems handling converging and pairs sequences, respectively.
}

\section{Conclusion and Open Problems}

This paper leaves more open problems than it solves. None of the considered randomized rebalancing algorithms meets all conditions (\ref{req:no-balance})--(\ref{req:depth}) introduced in Section~\ref{sec:introduction}. Inspired by a question raised by Seidel~\cite{Seidel09}, we considered bottom-up randomized rebalancing schemes for binary search trees without storing any balance information. We studied randomized rebalancing strategies, inspired by the rebalancing primitives from splay trees~\cite{SleatorTarjan85}. They meet conditions (\ref{req:no-balance})--(\ref{req:random-bits}), but fail to achieve logarithmic depth on all insertion sequences. In the experiments \RebalanceZigZag appears often to have the best performance, although it provably does not achieve expected logarithmic average depth for all insertion sequences. It remains an open problem if a randomized bottom-up rebalancing scheme exists that can guarantee expected logarithmic average node depths for all insertion sequences and satisfies requirements (\ref{req:no-balance})--(\ref{req:random-bits}), or what the best depth guarantee can be given requirements (\ref{req:no-balance})--(\ref{req:random-bits}), or how much these requirements need to be relaxed to enable expected logarithmic average node depths. 

\LongVersion{We did not consider deletions at all in this paper (see~\cite{CulbersonMunro89,CulbersonMunro90,Panny10} for challenges on performing deletions in random binary search trees).}

\bibliographystyle{plainurl}
\bibliography{references}

\LongVersion{%
\appendix

\section{A Random Process}
\label{sec:process}

In this section we study the mathematics of the stochastic process modeling that the depth of the insertion point can both increase and decrease during insertions, depending on the rotations performed by the rebalancing.
 
Assume we have independent stochastic variables $X_1, X_2, \ldots, X_n$, defined by
\[
  X_i = \left\{
    \begin{array}{rl}
      -1 & \mbox{with probability $p^-$} \\
      0  & \mbox{with probability $p_0$} \\
      +1 & \mbox{with probability $p^+$}\;, \\
    \end{array}
  \right.
\]
where $p^-+p_0+p^+ = 1$ and $p^+, p_0, p^- \geq 0$. We define stochastic variables $Y_0,\ldots,Y_n$ by 
\[
  Y_i = \left\{
    \begin{array}{cl}
      0 & \mbox{for } i = 0 \\
      \max\{0,Y_{i-1}+X_i\} & \mbox{for } i > 0 \;.
    \end{array}
  \right.
\]

Our goal is to estimate the expected value $\E{Y_{n}}$. In our applications $Y_n$ is typically the depth of the insertion point in a binary search tree after $n$ insertions.
Depending on the relative values of $p^-$ and $p^+$, there are three substantially different cases to consider:
$p^- < p^+$, i.e., it is more likely to get $+1$ than~$-1$ (Lemma~\ref{lem:exp-increase});
$p^- = p^+$, i.e., $-1$ and $+1$ are equally likely (Lemma~\ref{lem:exp-stable}); and
$p^- > p^+$, i.e., $-1$ is more likely than~$+1$ (Lemma~\ref{lem:exp-decrease}).

In the following we let $Z_n = \sum_{i=1}^{n} X_i$ and $M_n = \min_{i=0}^{n} Z_i\leq 0$, assuming $Z_0=0$. Note that due to the maximum with zero in the definition of $Y_n$, we have $Y_n\geq Z_n$ and we can lower bound $\E{Y_n}\geq\E{Z_n}=n\cdot(p^+-p^-)$ (that obviously is negative for $p^->p^+$, but by definition $Y_n \geq 0$). 

The following theorem summarizes the results of Lemma~\ref{lem:exp-increase}, Lemma~\ref{lem:exp-stable}, and Lemma~\ref{lem:exp-decrease} (ignoring the dependencies on the constants $p^-$, $p_0$ and $p^+$).

\begin{theorem}
  \label{thm:process}
  The expected value of $Y_n$ is 
  \[
    \E{Y_n} = \left\{
      \begin{array}{cl}
        \Theta(1)        & \mbox{for~~} p^- > p^+ \\
        \Theta\big(\sqrt{n}\big) & \mbox{for~~} p^- = p^+ > 0 \\
        \Theta(n)        & \mbox{for~~} p^- < p^+ \; ,
      \end{array}
    \right.
  \]
  for constants $p^-, p_0, p^+ \geq 0$ and $p^-+p_0+p^+ = 1$.
\end{theorem}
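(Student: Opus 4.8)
The plan is to handle all three regimes through a single distributional identity and then read each case off from it. \emph{Step 1 — reduce to the running maximum of the unreflected walk.} Unrolling Lindley's recursion for $Y_n$ gives $Y_n = \max_{0 \le k \le n}(Z_n - Z_{n-k})$, where $Z_k = \sum_{i=1}^{k} X_i$ and $Z_0 = 0$ (this is essentially what the text already observes about $Y_n \ge Z_n$). The vector of partial sums $(Z_n - Z_{n-k})_{k=0}^{n}$ of the reversed sequence $(X_n, X_{n-1}, \ldots, X_1)$ has the same joint law as $(Z_k)_{k=0}^{n}$, since the $X_i$ are i.i.d.; hence $Y_n$ is equal in distribution to $\overline{Z}_n := \max_{0 \le k \le n} Z_k$, and in particular $\E{Y_n} = \E{\overline{Z}_n}$. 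Write $\mu := \E{X_1} = p^+ - p^-$. Everything below estimates $\E{\overline{Z}_n}$, which is exactly the content of Lemma~\ref{lem:exp-increase}, Lemma~\ref{lem:exp-stable} and Lemma~\ref{lem:exp-decrease}.

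\emph{Step 2 — the two drifted cases.} If $p^- < p^+$ ($\mu > 0$, Lemma~\ref{lem:exp-increase}): the lower bound is $\overline{Z}_n \ge Z_n$, so $\E{\overline{Z}_n} \ge \mu n = \Omega(n)$, and the upper bound is the trivial $\overline{Z}_n \le n$ (each step is at most $+1$), giving $\E{Y_n} = \Theta(n)$. If $p^- > p^+$ ($\mu < 0$, Lemma~\ref{lem:exp-decrease}): I would give the elementary induction modelled on the proof of Lemma~\ref{lem:zig-pairs}. With $\rho := p^+/p^- \in [0,1)$, prove $\Prob{Y_n \ge k} \le \rho^{k}$ for every $k \ge 1$ by induction on $n$, using $\Prob{Y_n \ge k} = p^+ \Prob{Y_{n-1} \ge k-1} + p_0 \Prob{Y_{n-1} \ge k} + p^- \Prob{Y_{n-1} \ge k+1}$ for $k \ge 2$ and the obvious variant for $k = 1$; the algebraic identity $p^+/\rho + p_0 + p^- \rho = 1$ makes the inductive step close with equality. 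Then $\E{Y_n} = \sum_{k \ge 1} \Prob{Y_n \ge k} \le \rho/(1-\rho) = O(1)$, uniformly in $n$. (Alternatively, since $X_1$ is non-degenerate with $\mu < 0$, convexity yields $\theta > 0$ with $\E{e^{\theta X_1}} \le 1$, so $e^{\theta Z_k}$ is a supermartingale and Doob's maximal inequality gives $\Prob{\overline{Z}_n \ge t} \le e^{-\theta t}$.) The lower bound is immediate from $\overline{Z}_n \ge \max(0, X_1)$, so $\E{Y_n} \ge p^+$; if $p^+ = 0$ then $Y_n \equiv 0$, still $O(1)$. Hence $\E{Y_n} = \Theta(1)$.

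\emph{Step 3 — the zero-drift case $p^- = p^+ = q > 0$ (Lemma~\ref{lem:exp-stable}).} Now $Z_k$ is a mean-zero square-integrable martingale with $\E{Z_n^2} = n\,\mathrm{Var}(X_1) = 2qn$. Doob's $L^2$-maximal inequality gives $\E{\max_{0 \le k \le n} Z_k^2} \le 4\,\E{Z_n^2} = 8qn$, hence $\E{\overline{Z}_n} \le \bigl(\E{\max_k Z_k^2}\bigr)^{1/2} = O(\sqrt{n})$. For the matching lower bound, $\overline{Z}_n \ge Z_n^{+}$, and since $p^+ = p^-$ the law of $Z_n$ is symmetric, so $\E{Z_n^{+}} = \tfrac12 \E{|Z_n|}$; it remains to show $\E{|Z_n|} = \Omega(\sqrt{n})$. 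This follows from $\E{Z_n^2} = 2qn$ together with $\E{Z_n^4} = O(n^2)$ (standard for a sum of bounded i.i.d. mean-zero variables) via Cauchy--Schwarz: $2qn = \E{Z_n^2} \le \E{|Z_n|}^{1/2}\,\E{|Z_n|^{3}}^{1/2} \le \E{|Z_n|}^{1/2} \bigl(\E{Z_n^2}\,\E{Z_n^4}\bigr)^{1/4} = \E{|Z_n|}^{1/2} \cdot O(n^{3/4})$, so $\E{|Z_n|} = \Omega(\sqrt{n})$; alternatively invoke the central limit theorem together with uniform integrability (bounded second moments) to get $\E{|Z_n|}/\sqrt{n} \to \sqrt{2q}\cdot\sqrt{2/\pi}$. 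Thus $\E{Y_n} = \Theta(\sqrt{n})$.

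\emph{Where the difficulty lies.} The positive-drift case is essentially free, and the negative-drift upper bound closes cleanly once one commits to the ratio $\rho = p^+/p^-$ (or to the correct exponential tilt). The one genuinely substantive point is the zero-drift lower bound: extracting $\E{|Z_n|} = \Omega(\sqrt{n})$ is exactly where the hypothesis $q = p^+ = p^- > 0$ is needed (for $q = 0$ the walk is frozen and the claimed $\Theta(\sqrt{n})$ fails), and it requires a fourth-moment estimate or a CLT-plus-uniform-integrability argument rather than a one-line bound. A minor bookkeeping point, relevant when this theorem is applied in Sections~\ref{sec:zig-pairs} and~\ref{sec:zigzag-pairs}, is that there the number of increments is itself random; but since $Y_n$ is stochastically monotone in $n$, the $O(1)$ and $O(\sqrt{n})$ upper bounds survive conditioning on a random (but at most $n$) number of steps.
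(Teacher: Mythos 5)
Your proposal is correct, and while the positive-drift and negative-drift cases end up essentially where the paper does, the zero-drift case takes a genuinely different and arguably cleaner route. The paper also starts from the Lindley identity (it uses the pathwise form $Y_n = Z_n - M_n$ with $M_n=\min_{i\le n} Z_i$, whereas you pass by time-reversal to $Y_n \stackrel{d}{=} \max_{k\le n} Z_k$ --- the two are equivalent in distribution). For $p^->p^+$ the paper's Lemma~\ref{lem:exp-decrease} proves $\Prob{Y_n=i}\le c^i$ with $c=p^+/p^-$ by the same one-step recursion you use; your induction on the tails $\Prob{Y_n\ge k}\le\rho^k$ is the same idea with the same ratio and closes for the same algebraic reason ($p^+/\rho+p_0+p^-\rho=1$), and your supermartingale/Doob alternative is a valid substitute. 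The real divergence is Lemma~\ref{lem:exp-stable}: for the upper bound the paper runs a fairly elaborate recursive dyadic decomposition of $\{1,\dots,n\}$ with Chernoff bounds at each scale to control $\E{M_n}$, which your one-line Doob $L^2$ maximal inequality $\E{\max_k Z_k^2}\le 4\E{Z_n^2}=8qn$ replaces entirely; for the lower bound the paper invokes the Erd\H{o}s anti-concentration inequality for Rademacher sums, whereas you extract $\E{|Z_n|}=\Omega(\sqrt n)$ from the second and fourth moments via Cauchy--Schwarz (a Paley--Zygmund-style argument), which is more self-contained. You also correctly flag the degenerate case $p^+=p^-=0$ and the bookkeeping issue of a random number of increments in the applications; the stochastic monotonicity of $Y_n$ in $n$ does justify carrying the upper bounds over. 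I see no gap.
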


The first case we consider is where $+1$ is more likely than $-1$, i.e.,  the sequence $Y_1,\ldots,Y_n$ tends to be increasing.

\begin{lemma}
  \label{lem:exp-increase}
  For $p^- < p^+$, we have $n \cdot (p^+ - p^-) \leq \E{Y_n} \leq n$.
\end{lemma}

\begin{proof}
  Follows from $Y_n \leq n$ and $\E{Y_n} \geq \E{\sum_{i=1}^{n} X_i} = \sum_{i=1}^{n} \E{X_i} = n\cdot(p^+-p^-)$.
\end{proof}

Next, we consider the case when $+1$ and $-1$ are equally likely, i.e., $\E{X_i}=0$ and $\E{Z_n}=0$. 

\begin{lemma}
  \label{lem:exp-stable}
  For $p^- = p^+ > 0$, we have $\E{Y_n} = \Theta\big(\sqrt{n}\big)$.
\end{lemma}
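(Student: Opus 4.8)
The plan is to exploit the classical identity relating the reflected (at zero) random walk $Y_n$ to the walk $Z_n = \sum_{i=1}^n X_i$ and its running minimum $M_n = \min_{0 \le i \le n} Z_i$. First I would establish that $Y_n = Z_n - M_n$: since each $Y_i = \max\{0, Y_{i-1} + X_i\}$, an easy induction (the Lindley recursion / Skorokhod reflection at $0$) gives $Y_n = \max_{0 \le k \le n}\bigl(Z_n - Z_k\bigr) = Z_n - \min_{0 \le k \le n} Z_k = Z_n - M_n$. Taking expectations and using $\E{Z_n} = n(p^+ - p^-) = 0$, we get $\E{Y_n} = -\E{M_n} = \E{-M_n}$, so it suffices to show $\E{-M_n} = \Theta(\sqrt n)$, i.e. the expected depth of the running minimum of a mean-zero, bounded-increment walk after $n$ steps is of order $\sqrt n$.

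For the \emph{lower bound} $\E{-M_n} = \Omega(\sqrt n)$, I would note $-M_n \ge -Z_n = |Z_n|$ on the event $Z_n < 0$, and more directly $-M_n \ge \max\{0, -Z_n\}$, so $\E{-M_n} \ge \E{\max\{0, -Z_n\}} = \tfrac12\,\E{|Z_n|}$ by symmetry ($p^- = p^+$ makes $Z_n$ symmetric about $0$). Since $Z_n$ is a sum of $n$ i.i.d.\ bounded variables with variance $\sigma^2 = 2p^+ > 0$, a second-moment / anti-concentration estimate gives $\E{|Z_n|} = \Theta(\sqrt n)$: the upper side is Cauchy--Schwarz, $\E{|Z_n|} \le \sqrt{\E{Z_n^2}} = \sigma\sqrt n$, and the lower side follows from $\E{|Z_n|} \ge \E{Z_n^2}/\sqrt{\E{Z_n^4}} \ge c\sqrt n$ (Paley--Zygmund), since $\E{Z_n^4} = O(n^2)$ for bounded increments. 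Hence $\E{-M_n} \ge \tfrac12 \E{|Z_n|} = \Omega(\sqrt n)$.

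For the \emph{upper bound} $\E{-M_n} = O(\sqrt n)$, I would use a maximal-inequality argument: $-M_n = \max_{0 \le k \le n}(-Z_k)$, and $(-Z_k)_k$ is a mean-zero martingale, so by Doob's $L^2$ maximal inequality $\E{(-M_n)^2} \le \E{\bigl(\max_{k \le n}(-Z_k)\bigr)^2} \le 4\,\E{Z_n^2} = 4\sigma^2 n$, and therefore $\E{-M_n} \le \sqrt{\E{(-M_n)^2}} \le 2\sigma\sqrt n$. Combining the two bounds yields $\E{Y_n} = \E{-M_n} = \Theta(\sqrt n)$, with constants depending only on $p^+$ (equivalently on $\sigma^2 = 2p^+$), as claimed.

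The main obstacle is getting a clean \emph{lower} bound of the right order $\sqrt n$ rather than something weaker; the reflection identity $Y_n = Z_n - M_n$ and Doob's inequality make the upper bound routine, but the lower bound genuinely needs an anti-concentration input (Paley--Zygmund on $Z_n$, or equivalently a Berry--Esseen / local CLT statement that $|Z_n|/\sqrt n$ does not concentrate near $0$). An alternative, more elementary route to the lower bound that avoids fourth moments would be to observe that $Y_n \ge Y_n - Y_{\lfloor n/2 \rfloor} + Y_{\lfloor n/2\rfloor} \ge 0$ is not directly helpful, so instead one can just invoke that for an integer-valued symmetric walk $\Pr[Z_n \le -t\sqrt n\,] \ge c(t)$ for a suitable constant $t$ and all large $n$ (a standard CLT consequence), giving $\E{-M_n} \ge \E{-Z_n \cdot \mathbf{1}[Z_n \le -t\sqrt n\,]} \ge c(t)\,t\sqrt n$; I would present whichever of these is shortest in the final write-up.
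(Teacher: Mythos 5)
Your proposal is correct in substance and shares the paper's backbone — the reflection identity $Y_n = Z_n - M_n$ with $Z_n=\sum_{i\le n}X_i$ and $M_n=\min_{0\le k\le n}Z_k$, reducing everything to $\E{-M_n}=\Theta(\sqrt n)$ — but the two halves are handled differently, and your upper bound is genuinely a different (and shorter) argument. For the lower bound the paper reduces to the Rademacher case by a Chernoff bound on the number of nonzero steps and then invokes the Erd\H{o}s/Littlewood--Offord anti-concentration theorem $\Prob{|Z_n|<\epsilon\sqrt n}=O(\epsilon)$; you instead get anti-concentration from moments (Paley--Zygmund on $Z_n^2$), which works for any $p_0$ without the reduction step. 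For the upper bound the paper runs a recursive dyadic decomposition of $[1,n]$ with Chernoff bounds at every scale to obtain a tail bound on $M_n$; you replace all of that with Doob's $L^2$ maximal inequality applied to $\max_k(-Z_k)\le\max_k|Z_k|$, giving $\E{-M_n}\le\sqrt{\E{(\max_k|Z_k|)^2}}\le 2\sigma\sqrt n$ in one line. Your route is cleaner and buys a sharper constant with less work; the paper's dyadic argument buys an explicit exponential tail bound $\Prob{M_n<-\Theta(\epsilon\sqrt n)}\le 3e^{-\epsilon^2/2}$, which is stronger than an expectation bound and is the form reused elsewhere (e.g., in the high-probability statements of Section~3).

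One small repair is needed in your lower bound: the inequality you wrote, $\E{|Z_n|}\ge \E{Z_n^2}/\sqrt{\E{Z_n^4}}$, is dimensionally wrong and would only yield a constant (it evaluates to $\sigma^2 n/\Theta(n)=\Theta(1)$). The correct Cauchy--Schwarz/Paley--Zygmund form is $\E{|Z_n|}\ge (\E{Z_n^2})^{3/2}/(\E{Z_n^4})^{1/2}$, obtained from $\E{Z_n^2}\le(\E{|Z_n|})^{2/3}(\E{Z_n^4})^{1/3}$, which together with $\E{Z_n^2}=\sigma^2 n$ and $\E{Z_n^4}=O(n^2)$ does give $\E{|Z_n|}=\Omega(\sqrt n)$ as you intended. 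With that exponent fixed (or with your alternative CLT-based route), the proof is complete.
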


\begin{proof}
  If $p_0>0$, we first simplify the analysis by removing all $X_i=0$. Let $S_n$ denote the number of $X_i \neq 0$, for $i=1,\ldots,n$. We have $S_n$ is a binomial distribution with parameter $p^-+p^+$ and $\E{S_n} = n\cdot (p^- + p^+)$. By the Chernoff bound (\cite[Theorem 1.1]{DubhashiPanconesi09}) we have that $\Prob{S_n \geq n \cdot (p^- + p^+ - \epsilon)} \geq 1-e^{-2n\epsilon^2}$ for $\epsilon > 0$. It follows that with high probability $S_n=\Theta(n)$, and it is sufficient to show that $\E{Y_n} = \Theta\big(\sqrt{n}\big)$ for $p_0 = 0$ and $p^+=p^-=\frac{1}{2}$ ($X_i$ is a Rademacher distribution~\cite{Montgomery90}).

  The question we need to address, is how often do we expect $Y_{i-1}+X_{i}$ to be negative, for $1\leq i \leq n$. This happens exactly $-\min(Z_0, Z_1, \ldots, Z_n)=-M_n$ times, i.e., we have $Y_n = Z_n - M_n$ and $\E{Y_n} = \E{Z_n - M_n} = \E{Z_n} - \E{M_n} = -\E{M_n}$.

  We first prove $\E{M_n}=-\Omega\big(\sqrt{n}\big)$. The anti concentration bound for the Rademacher distribution states that $\Prob{|Z_n| < \epsilon\sqrt{n}} = O(\epsilon)$, for all $\epsilon > 0$. See the survey by Nguyen and Vu~\cite[Theorem~1.2]{NguyenVu13} or the original paper by Erd\H{o}s~\cite{Erdos45}. We have $\Prob{|Z_n| < \epsilon\sqrt{n}} \leq \frac{1}{2}$, for some $\epsilon > 0$. By the symmetry of $S_n$ we have $\Prob{Z_n \leq - \epsilon \sqrt{n}} \geq \frac{1}{4}$. Since $M_n \leq \min(0, Z_n)$, it follows that $\E{Y_n}=-\E{M_n} \leq -\frac{1}{4}\epsilon\sqrt{n}$.

  Next we prove $\E{M_n} \geq -O\big(\sqrt{n}\big)$. We can drop $Z_0=0$ from the definition of $M_n$, since each $\E{Z_i}=0$, implying the expectation of their minimum cannot be positive. We will make use of the Chernoff bound~\cite{Montgomery90} 
  \begin{equation}
    \label{eq:chernoff}
    \Prob{Z_n < -\epsilon\sqrt{n}} \leq e^{-\epsilon^2/2}, 
  \end{equation}
  for all $\epsilon>0$. From the union bound, we get
  \[
    \Prob{M_n < -\epsilon\sqrt{n}} 
    \leq \sum_{i=1}^n \Prob{Z_i < -\epsilon\sqrt{n}} 
    \leq \sum_{i=1}^n \Prob{Z_i < -\epsilon\sqrt{i}} 
    \leq n \cdot e^{-\epsilon^2/2} \; .
  \]
  For $c \geq 2$ and $\epsilon=c\sqrt{\ln n}$, we get $\Prob{M_n < - c \sqrt{\ln n} \sqrt{n}} \leq n^{1-c^2/2} \leq \frac{1}{n}$. To get a tighter bound on $M_n$ we need to take the dependencies of $Z_i$ into account.

  For $n$ being powers of two and $\epsilon>0$,  we define depth bounds $\mathrm{depth}(n,\epsilon)<0$ and probabilities $\mathrm{prob}(i,\epsilon) \geq 0$ recursively, such that 
  \begin{equation}
    \label{eq:depth-bound}
    \Prob{M_n < \mathrm{depth}(n,\epsilon)} \leq \mathrm{prob}(n,\epsilon)\;.
  \end{equation}
  
  We let $1 < \tau < \sqrt{2}$ be a constant, that controls the slack in the recursion. Our recursive defined bounds are
  \[
    \mathrm{depth}(n,\epsilon) = \left\{
      \begin{array}{ll}
        -1 & \mbox{for } n = 1 \\
        \mathrm{depth}\left(\frac{n}{2}, \epsilon\tau\right) - \epsilon\sqrt{n/2}  & \mbox{otherwise\;,}
      \end{array}
    \right.
  \]
  \[
    \mathrm{prob}(n,\epsilon) = \left\{
      \begin{array}{ll}
        0 & \mbox{for } n = 1 \\
        2 \cdot \mathrm{prob}\left(\frac{n}{2}, \epsilon\tau\right)  + e^{-\epsilon^2/2} & \mbox{otherwise\;.}
      \end{array}
    \right.
  \]
  
  We now prove by induction in $n$ (i.e., the powers of two) that (\ref{eq:depth-bound}) holds with our definition of $\mathrm{depth}(n,\epsilon)$ and $\mathrm{prob}(n,\epsilon)$, for all values of $\epsilon>0$. For the base case $n=1$, we have $M_n=\min(0,X_1)\geq -1$, and $\Prob{M_n <-1}=0$, independently of $\epsilon$. For $n>1$, we view the sequence $Z_1,\ldots,Z_n$ as two parts, the left part $Z_1,\ldots,Z_{n/2}$ and the right part $Z'_{n/2+1},\ldots,Z'_n$,
  where $Z'_i=Z_i-Z_{n/2}=\sum_{j=n/2+1}^i X_j$. For the two parts of length~$n/2$ we can use the induction hypothesis (\ref{eq:depth-bound}), and have $\Prob{\min_{i=1}^{n/2} Z_i < \mathrm{depth}\left(\frac{n}{2},\epsilon\tau\right)} \leq \mathrm{prob}\left(\frac{n}{2},\epsilon\tau\right)$ and $\Prob{\min_{i=n/2+1}^{n} Z'_i < \mathrm{depth}\left(\frac{n}{2},\epsilon\tau\right)} \leq \mathrm{prob}\left(\frac{n}{2},\epsilon\tau\right)$. Applying the Chernoff bound (\ref{eq:chernoff}) to~$Z_{n/2}$, we have $\Prob{Z_{n/2} < -\epsilon\sqrt{n/2}} \leq e^{-\epsilon^2/2}$. It follows that for all
  \[
    \Prob{M_n < \mathrm{depth}\left(\frac{n}{2}, \epsilon\tau\right) - \epsilon\sqrt{n/2}}
    \leq 2\cdot \mathrm{prob}\left(\frac{n}{2},\epsilon\tau\right)+e^{-\epsilon^2/2}\;,
  \]
  i.e., by the definitions we have $\Prob{M_n < \mathrm{depth}(n,\epsilon)} \leq \mathrm{prob}(n,\epsilon)$. What remains is to find closed expressions for bounds on $\mathrm{depth}(n,\epsilon)$ and $\mathrm{prob}(n,\epsilon)$. For some choices of $\epsilon$, we might have a vacuous probability bound $\mathrm{prob}(n,\epsilon) \geq 1$, so the goal is to find $\epsilon>0$ with a useful bound.

  By unfolding the recursive definitions, and using $1 < \tau < \sqrt{2}$, we get
  \[
    \mathrm{depth}(n,\epsilon) 
    \geq -1 - \sum_{i=0}^{\infty} \epsilon\tau^i\sqrt{n/2^{i+1}}
    = -1 - \epsilon\sqrt{n/2}\sum_{i=0}^{\infty} \left(\frac{\tau}{\sqrt{2}}\right)^{i}
    = -1 - \frac{\epsilon}{\sqrt{2}-\tau}\sqrt{n}\;.
  \]
  For the probability bound we assume $\tau=1.3$, implying $(\tau^2)^i \geq \frac{5}{4}i$ for $i\geq 0$, and assume $\epsilon \geq \sqrt{8\ln 2}$. We get
  \[
    \mathrm{prob}(n,\epsilon) 
    \leq \sum_{i=0}^{\infty} 2^i\cdot e^{-(\epsilon\tau^i)^2/2}
    =  e^{-\epsilon^2/2} + \sum_{i=1}^{\infty} e^{i\ln 2-(\tau^2)^i\cdot \epsilon^2/2}
    \leq  e^{-\epsilon^2/2} + \sum_{i=1}^{\infty} e^{i\ln 2- \frac{5}{4} i\cdot\epsilon^2/2}
  \]
  \[
    \leq e^{-\epsilon^2/2} + \sum_{i=1}^{\infty} e^{-i \cdot \epsilon^2/2 }
    \leq e^{-\epsilon^2/2} + \frac{e^{-\epsilon^2/2}}{1-e^{-\epsilon^2/2}}
    \leq e^{-\epsilon^2/2} + \frac{e^{-\epsilon^2/2}}{1-\frac{1}{16}}
    \leq 3 \cdot e^{-\epsilon^2/2} \;.
  \]
  We conclude $\Prob{M_n < -1 - \frac{\epsilon}{\sqrt{2}-\tau}\sqrt{n}} \leq 3 \cdot e^{-\epsilon^2/2}$. It follows that $\E{M_n}= -O\left(\sqrt{n} \right)$, and  $\E{Y_n}=O\big(\sqrt{n}\big)$.
\end{proof}

Finally, we prove that the expected value of each $Y_n$ is bounded by a constant if $p^- > p^+$.

\begin{lemma}
  \label{lem:exp-decrease}
  For $p^- > p^+$, we have $\E{Y_n} \leq \frac{\alpha(1-\alpha)}{(1-2\alpha)^2}$, where $\alpha = \frac{p^+}{p^+ + p^-}<\frac{1}{2}$.
\end{lemma}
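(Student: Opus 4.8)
The plan is to recognize $Y_n$ as a reflected random walk with strictly negative drift and to bound its law, uniformly in $n$, by the law of an all-time maximum of the corresponding unreflected walk. First I would dispose of the zero steps: if $Y_{i-1}\ge 0$ and $X_i=0$ then $Y_i=Y_{i-1}$, so deleting the steps with $X_i=0$ does not change the sequence $Y_1,Y_2,\dots$ (only its indexing). Consequently $\E{Y_n}$ for the original process equals $\E{Y_S}$ for the reduced process, where $S$ is the number of nonzero steps among the first $n$ and is independent of the values of those steps; hence it suffices to bound $\E{Y_m}$ \emph{uniformly in $m$} for the process with $p_0=0$, $p^++p^-=1$, $p^+=\alpha<\tfrac12$. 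It is convenient to set $\rho=p^+/p^-=\alpha/(1-\alpha)\in[0,1)$.

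Next I would invoke Lindley's recursion: since $Y_0=0$ and $Y_i=\max\{0,Y_{i-1}+X_i\}$, unrolling gives $Y_n=\max_{0\le k\le n}(Z_n-Z_k)=Z_n-\min_{0\le k\le n}Z_k=Z_n-M_n$ (the identity already recorded in the text). Writing $Z_n-Z_k=\sum_{j=k+1}^n X_j$ and reversing the i.i.d.\ sequence $(X_1,\dots,X_n)\mapsto(X_n,\dots,X_1)$, which does not change the joint law, shows $Y_n\stackrel{d}{=}\max_{0\le m\le n}Z_m$, the running maximum of the walk. Because $p^->p^+$ the drift is strictly negative, so by the strong law $Z_m\to-\infty$ a.s.\ and $\tilde M:=\max_{m\ge0}Z_m$ is a.s.\ finite; since $\max_{0\le m\le n}Z_m\le\tilde M$ pointwise, $\E{Y_n}\le\E{\tilde M}$ for every $n$. (Equivalently — and this is the version I would actually write — couple $Y_n$ with a copy $Y_n^\ast$ started from the stationary law $\pi$ and driven by the same $X_i$'s; as $y\mapsto\max\{0,y+x\}$ is non-decreasing in $y$, $Y_0=0\le Y_0^\ast$ propagates to $Y_n\le Y_n^\ast$ for all $n$, whence $\E{Y_n}\le\mathbb{E}_\pi[Y]$.)

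It then remains to evaluate the dominating quantity. For the first route, a first-passage argument gives $\Prob{\tilde M\ge k}=q^k$, where $q$ is the probability that the walk started at $0$ ever reaches level $+1$ (reaching $+k$ means reaching $+1$, then $+2$ from there, each an independent copy); a first-step decomposition yields $q=p^++p^-q^2$, whose roots are $1$ and $\rho$, and upward transience forces $q=\rho$. Summing the tail, $\E{\tilde M}=\sum_{k\ge1}\rho^k=\rho/(1-\rho)$. For the second route, the reflected chain is an aperiodic positive-recurrent birth--death chain (the self-loop of probability $p^-$ at state $0$ kills periodicity) with detailed balance $\pi_k\,p^+=\pi_{k+1}\,p^-$, i.e.\ $\pi_k=(1-\rho)\rho^k$, of mean $\sum_{k\ge0}k\,\pi_k=\rho/(1-\rho)$. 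Either way $\E{Y_n}\le\rho/(1-\rho)=\alpha/(1-2\alpha)$, and since $1-2\alpha\le1-\alpha$ this is at most $\alpha(1-\alpha)/(1-2\alpha)^2$, the claimed bound (the argument in fact proves the sharper $\alpha/(1-2\alpha)$).

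The main obstacle is the second paragraph: obtaining an $n$-independent dominating law. The reversal/duality identity — or, equivalently, the monotone coupling to the stationary distribution — is the one genuinely non-mechanical step, and it must be accompanied by the negative-drift remark that makes $\tilde M$ (equivalently, $\pi$) a proper integrable random variable, so that the domination is meaningful; the selection $q=\rho$ over $q=1$ rests on the same transience. Everything after that — the geometric tail sum, the identity $\rho/(1-\rho)=\alpha/(1-2\alpha)$, and the final comparison with $\alpha(1-\alpha)/(1-2\alpha)^2$ — is routine bookkeeping.
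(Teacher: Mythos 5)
Your argument is correct, and it takes a genuinely different route from the paper. The paper proves the lemma by a direct induction on $n$: using the one-step recursion for $\pi_n^i=\Prob{Y_n=i}$ it shows $\pi_n^i\leq c^i$ for $c=\frac{\alpha}{1-\alpha}$ (the quadratic $c^2(1-\alpha)-c+\alpha\leq 0$ you would also meet in your first-passage equation), and then sums $\sum_i i\,c^i=\frac{c}{(1-c)^2}=\frac{\alpha(1-\alpha)}{(1-2\alpha)^2}$. You instead pass through the Lindley identity $Y_n=Z_n-M_n$ (which the paper itself records in the proof of Lemma~\ref{lem:exp-stable}), reverse the increments to get $Y_n\stackrel{d}{=}\max_{0\leq m\leq n}Z_m$, and dominate by the all-time maximum, whose law is exactly geometric with ratio $\rho=\frac{\alpha}{1-\alpha}$ by the gambler's-ruin computation (or, equivalently, by the stationary law of the reflected birth--death chain together with the monotone coupling). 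What your approach buys is the exact tail law rather than a pointwise bound $c^i$ that is not itself a probability distribution; this is why you obtain the sharper constant $\frac{\rho}{1-\rho}=\frac{\alpha}{1-2\alpha}\leq\frac{\alpha(1-\alpha)}{(1-2\alpha)^2}$. What the paper's induction buys is self-containedness: it needs no reversal/duality, no transience argument to discard the root $q=1$, and no appeal to stationarity, which matters slightly since the paper later reuses the same $\pi_n^i\leq c^i$ induction template inside the proof of Lemma~\ref{lem:zig-pairs}, where the increments are not a clean $\pm 1$ walk. Your preliminary reduction eliminating the $p_0$ steps and the handling of the random index $S$ are both sound.
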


\begin{proof}
  Let $\pi_n^i=\Prob{Y_n = i}$. By the definition of $Y_n$, we have
  \[
    \pi_n^i = \left\{
      \begin{array}{cl}
        1 & \mbox{if } n = 0 \mbox{ and } i = 0 \\
        0 & \mbox{if } n = 0 \mbox{ and } i > 0 \\
        \pi_{n-1}^0 \cdot p_0 + \pi_{n-1}^1 \cdot p^- & \mbox{if } n > 0  \mbox{ and } i = 0 \\
        \pi_{n-1}^{i-1}  \cdot  p^+ + \pi_{n-1}^i \cdot p_0 + \pi_{n-1}^{i+1} \cdot  p^-  & \mbox{if } n > 0 \mbox{ and } i > 0\;.
      \end{array}
    \right.
  \]
  By induction in $n$, we prove $\pi_n^i \leq c^i$, for some constant $0 < c < 1$. For $n=0$, the inequality $\pi_n^i \leq c^i$ holds trivially, since $\pi_0^0 = 1 = c^0$ and $\pi_0^i=0<c^i$, for $i>0$ and $c>0$. For $n>0$, from the above last two cases we have the following condition (where the first inequality is from the induction hypothesis) 
  \[
    \pi_n^i \leq c^{i-1} \cdot p^+ + c^{i} \cdot p_0 + c^{i+1} \cdot p^- \leq c^{i}\;.
  \]
  Simplifying and using that $1-p_0=p^+ + p^-$, we get
  \[
    c^{i+1} \cdot p^- + c^i \cdot (p_0 - 1) + c^{i-1} \cdot p^+ \leq 0\;,
  \]
  \[
    c^2 \cdot p^- - c \cdot (p^+ + p^-) + p^+ \leq 0\;.
  \]
  Multiplying by $\frac{1}{p^+ + p^-}$, we have
  \[
    c^2 \cdot \frac{p^-}{p^+ + p^-} - c + \frac{p^+}{p^+ + p^-} \leq 0\;.
  \]
  Setting $\alpha = \frac{p^+}{p^+ + p^-}<\frac{1}{2}$, we get the final condition
  \[
    c^2 \cdot (1-\alpha) - c + \alpha \leq 0\;,
  \]
  that holds for $\frac{\alpha}{1-\alpha}\leq c \leq 1$. It follows that $\pi_n^i\leq \big(\frac{\alpha}{1-\alpha}\big)^i$ and $\E{Y_n} = \sum_{i=0}^{\infty} i \cdot \pi^i_n \leq  \sum_{i=0}^{\infty} i \cdot \big(\frac{\alpha}{1-\alpha}\big)^i = \frac{\alpha(1-\alpha)}{(1-2\alpha)^2}$ for all~$n$.
\end{proof}

A simulation of the process for various values of $p^-$ is shown in Figure~\ref{fig:experiments-comparison}. For each $p^+$ probability, we simulated $10^3$ counters for $10^6$ steps, and plotted the average of all counters. On the $y$-value is plotted $\lg \mathrm{value} / \lg n$, that converges to $c$ if value is $\Theta(n^c)$. Note how sensitive the curves are to $p^+$ around $p^+=\frac{1}{2}$.

\newpage

\section{Plots}

\begin{figure}[h]
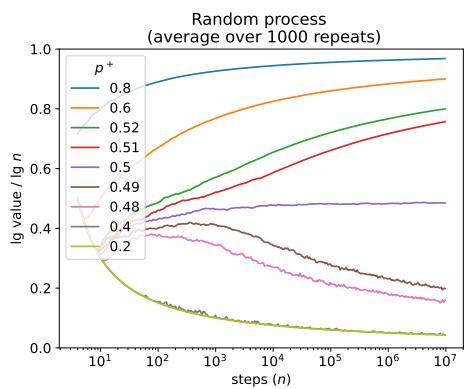

  \centering
  \plot{random-process}  
  \caption{Simulation of the random process studied in Theorem~\ref{thm:process} for various values of $p^+$ and $p^-=1-p^+$,  and $p_0=0$. If the expected value of the process is $\Theta(n^c)$, we would expect to see $\lg \mathrm{value} / \lg n$ converge to $c$.}
  \label{fig:random-process}
\end{figure}

\begin{figure}[h]
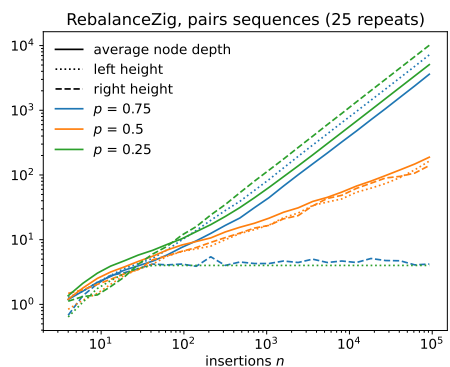

  \centering
  \plot{zig-pairs}
  \caption{Average node depths, left height, and right height for inserting pairs sequences using \RebalanceZig. The plotted data is the average over 25 runs.}
  \label{fig:zig-pairs}
\end{figure}
}

\begin{figure}[ht]
  \centering
  \plot[a]{zig-n1024}
  \plot[b]{zig-p0.5} \\
  \plot[c]{zigzag-n1024}
  \plot[d]{zigzag-p0.5}
\LongVersion{%
  \\
  \plot[e]{zigzig-n1024}
  \plot[f]{zigzig-p0.5}
}
  \caption{(left) The average node depths in binary search trees created by \RebalanceZig%
    \ShortVersion{ and \RebalanceZigZag}%
    \LongVersion{, \RebalanceZigZag and \RebalanceZigZig}%
    , respectively, for various types of choices of $p$ for insertion sequences of length~1024; (right) similar results but for fixed coin probability $p=\frac{1}{2}$, and sequence lengths in the range 1 to 1024. For $p=0$ all rotations are performed at the inserted node (and \RebalanceZig always creates a path), whereas no rotations are performed when $p=1$, i.e., unbalanced binary search trees.}
  \label{fig:experiments-comparison}
\end{figure}

\LongVersion{%
\begin{figure}
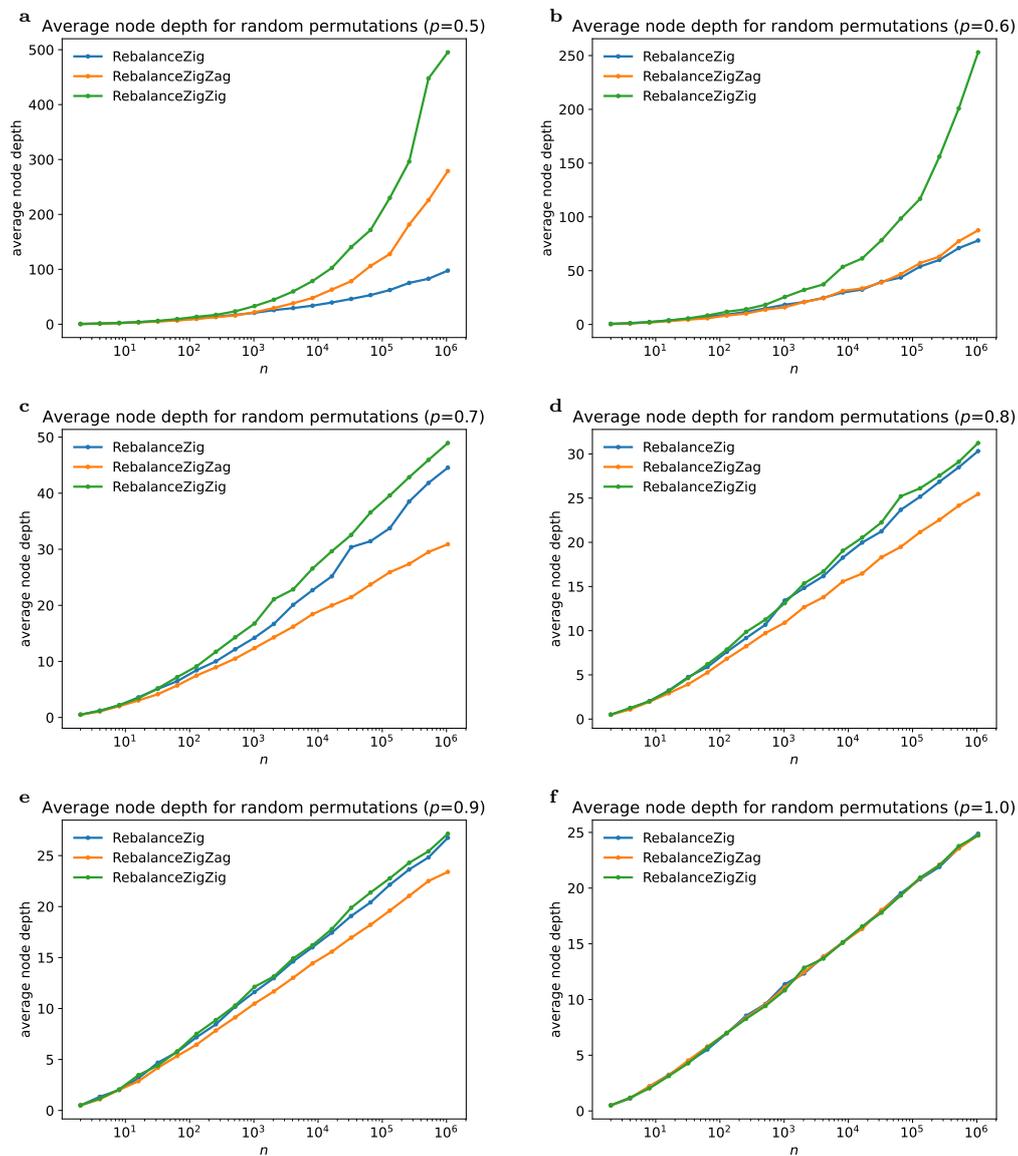

  \centering
  \plot[a]{permutation-p0.5}
  \plot[b]{permutation-p0.6} \\
  \plot[c]{permutation-p0.7}
  \plot[d]{permutation-p0.8} \\
  \plot[e]{permutation-p0.9}
  \plot[f]{permutation-p1.0}
  \caption{Average node depth for inserting random permutations with the three algorithms for different choices of $\frac{1}{2}\leq p\leq 1$.}
  \label{fig:experiments-permutations}
\end{figure}

\begin{figure}
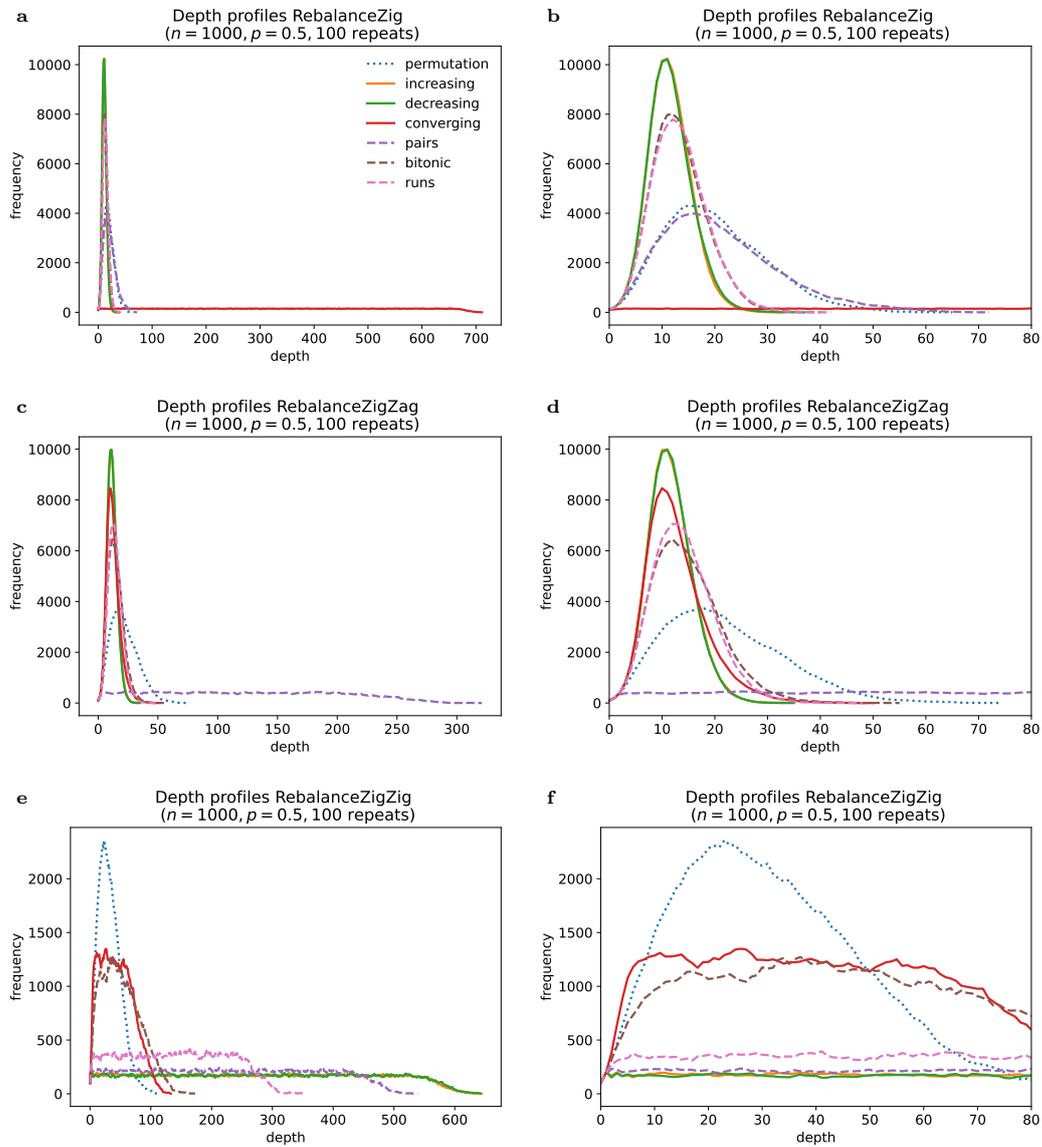

  \centering
  \plot[a]{profiles-zig}
  \plot[b]{profiles-zig-zoom} \\
  \plot[c]{profiles-zigzag}
  \plot[d]{profiles-zigzag-zoom} \\
  \plot[e]{profiles-zigzig}
  \plot[f]{profiles-zigzig-zoom} \\
  \caption{Depth profiles of the trees generated by the algorithms for different types of insertion sequences ($n=1000$, $p=\frac{1}{2}$, sum over 100 trees). Unbalanced is the result of inserting random permutations without rebalancing. (left) Full depth range; (right) zoomed in.}
  \label{fig:depth-profiles}
\end{figure}


\begin{figure}
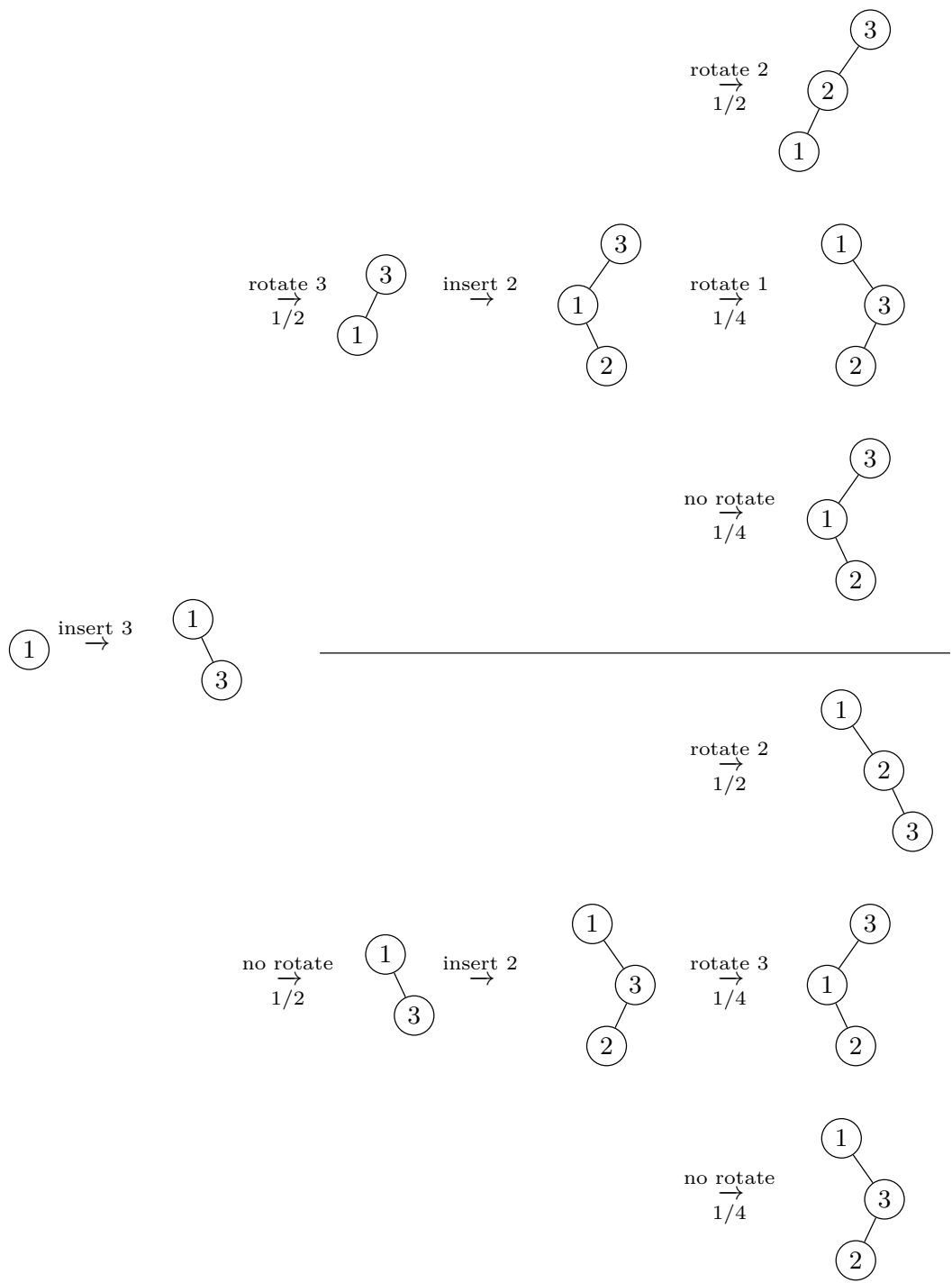

  \resizebox{\textwidth}{!}{%
  \begin{tabular}{ccccccccc}
    & & & & & & & \ARROW{rotate 2}{1/2} & \tree{[.3 [.2 [.1 ] \edge[missing]; \node[missing] {}; ] \edge[missing]; \node[missing] {}; ]} \\[10ex]
    & & & \ARROW{rotate 3}{1/2} & \tree{[.3 [.1 ] \edge[missing]; \node[missing] {}; ]}
    & \ARROW{insert 2}{} & \tree{[.3 [.1 \edge[missing]; \node[missing] {}; [.2 ] ] \edge[missing]; \node[missing] {}; ]}
    & \ARROW{rotate 1}{1/4} & \tree{[.1 \edge[missing]; \node[missing] {}; [.3 [.2 ] \edge[missing]; \node[missing] {}; ] ]} \\[10ex]
    & & & & & & & \ARROW{no rotate}{1/4} & \tree{[.3 [.1 \edge[missing]; \node[missing] {}; [.2 ] ] \edge[missing]; \node[missing] {}; ]}
    \\[10ex]
    \cline{5-9} \\
    \raisebox{11ex}[0ex]{\tree{[.1 ]}}
    & \raisebox{11ex}[0ex]{\ARROW{insert 3}{}} & \raisebox{11ex}[0ex]{\tree{[.1 \edge[missing]; \node[missing] {}; [.3 ] ]}} 
    & & & & & \ARROW{rotate 2}{1/2} & \tree{[.1 \edge[missing]; \node[missing] {}; [.2 \edge[missing]; \node[missing] {}; [.3 ] ] ]} \\[10ex]
    & & & \ARROW{no rotate}{1/2} & \tree{[.1 \edge[missing]; \node[missing] {}; [.3 ] ]} & \ARROW{insert 2}{} & \tree{[.1 \edge[missing]; \node[missing] {}; [.3 [.2 ] \edge[missing]; \node[missing] {}; ] ]}
    & \ARROW{rotate 3}{1/4} & \tree{[.3 [.1 \edge[missing]; \node[missing] {}; [.2 ] ] \edge[missing]; \node[missing] {}; ]} \\[10ex]
    & & & & & & & \ARROW{no rotate}{1/4} & \tree{[.1 \edge[missing]; \node[missing] {}; [.3 [.2 ] \edge[missing]; \node[missing] {}; ] ]}   \end{tabular}}
  \caption{Binary search trees resulting from inserting the sequence 1, 3, 2 using \RebalanceZig with $p=\frac{1}{2}$. Below each arrow is the probability that this choice is taken.}
  \label{fig:insertion-cases}
\end{figure}
}

\end{document}